\documentclass[1pt,a4paper]{article}
\usepackage[english]{babel}
\usepackage{color}
\usepackage{url}
\usepackage[english]{babel}
\usepackage{color}
\usepackage{graphicx}
\usepackage{hyperref}
\usepackage{amsmath,amssymb,amsthm}
\usepackage[T1]{fontenc}
\usepackage[utf8]{inputenc}
\usepackage{authblk}
\usepackage[english]{babel}
\usepackage{amsmath,amssymb,amsthm}
\usepackage{color}
\usepackage{graphics}
\usepackage{amsmath,amssymb,amsthm}
\newtheorem{definition}{Definition}
\newtheorem{theorem}[definition]{Theorem}
\newtheorem{lemma}[definition]{Lemma}
\newtheorem{corollary}[definition]{Corollary}
\newtheorem{remark}[definition]{Remark}
\newtheorem{proposition}[definition]{Proposition}
\newtheorem{example}{Example}

\renewcommand{\tilde}[1]{\widetilde{#1}}


\newcommand{\bbF}{{\mathbb{F}}}

\title{An improvement of the Feng-Rao bound for primary codes}
\author[1]{Olav Geil\thanks{olav@math.aau.dk}}
\author[1,2]{Stefano Martin\thanks{stefano@math.aau.dk}}
\affil[1]{Department of Mathematical Sciences, Aalborg University}
\affil[2]{Engineering Software Institute, East China Normal University}

\begin{document}
\maketitle
\begin{abstract}
We present a new bound for the minimum distance of a general primary
linear code. For affine variety codes defined from generalised
$C_{ab}$ polynomials the new bound often
improves dramatically on the Feng-Rao bound for primary
codes~\cite{AG,geithom}. The method does not only work for the minimum distance but
can be applied to any generalised Hamming weight.\\

\noindent \textbf{Keywords:} 
Affine
variety code, $C_{ab}$ curve,  Feng-Rao bound, footprint bound, generalised $C_{ab}$
polynomial, generalised Hamming weight, 
minimum distance, one-way well-behaving pair, order domain conditions.\\

\noindent \textbf{MSC:} 94B65, 94B27, 94B05.
\end{abstract}

\section{Introduction}
In this paper we present an improvement to the Feng-Rao bound for
{\textit{primary}} codes~\cite{AG,geithom,agismm}. Our method does not only apply to the minimum
distance but estimates any generalised Hamming weight. In the same way
as the Feng-Rao bound for primary codes suggests an improved code
construction our new bound does also. The new bound
is particular suited for affine variety codes for which it often improves
dramatically on the Feng-Rao bound. Interestingly, for such codes it can
be viewed as a simple application of the footprint bound from Gr\"{o}bner
basis theory. We pay particular attention to the case of the affine
variety being defined by a bivariate polynomial that, in the support, has two univariate
monomials of the same weight and all other monomials of lower weight. Such polynomials can be viewed as a
generalisation of the polynomials defining $C_{ab}$ curves and 
therefore we name them {\em{generalised $C_{ab}$ polynomials}}. We
develop a method for constructing generalised $C_{ab}$ polynomials with many
zeros by the use of
$({\mathbb{F}}_{p^m},{\mathbb{F}}_p)$-polynomials, that are polynomials returning values in ${\mathbb{F}}_p$ when evaluated in ${\mathbb{F}}_{p^m}$ (see, \cite[Chap.\ 1]{redei}). Here, $p$ is
any prime power and $m$ is an integer larger than $1$. With this
method in hand we can design long 
affine variety codes for which our bound produces good results. The new
bound of the present paper is closely related to an improvement of the
Feng-Rao bound for {\textit{dual}} codes that we presented recently
in~\cite{geilmartin2013further}. Recall from~\cite{agismm} that the usual
Feng-Rao bound for primary and dual codes can be viewed as
consequences of each other. This result holds when one uses the
concept of well-behaving pairs or one-way well-behaving pairs. For
weakly well-behaving pairs a possible connection is unknown. In a
similar way as the proof from~\cite{agismm} breaks down for weakly
well-behaving, it also breaks down when one tries to establish a connection between the new bound from the present paper and the new bound from~\cite{geilmartin2013further}. 
We shall leave it as an open problem to decide if the two
bounds are consequences of each other or not. \\ 

In the first part of the paper we concentrate solely on affine variety
codes. For such codes the new method is intuitive. We start by
formulating in Section~\ref{sectwo} our new bound at the level of
affine variety codes and explain how it gives rise to an improved code construction $\tilde{E}_{imp}(\delta)$.
 Then we continue
in Section~\ref{secthree} by showing how to construct generalised
$C_{ab}$ polynomials with many zeros. In Section~\ref{secfour} we
give a thorough treatment of codes defined from so-called optimal
generalised $C_{ab}$ polynomials demonstrating the strength of our new
method. In Section~\ref{secfourandaquarter} we show how to
improve the improved code construction $\tilde{E}_{imp}(\delta)$ even further. This is done for the case of the affine variety being the Klein quartic. Having up till now
only considered the minimum distance, in Section~\ref{secfourandahalf}
we explain how to deal with generalised Hamming weights. 
Then we turn to the level of general
primary linear codes lifting in Section~\ref{secfive} our method to a
 bound on any primary linear code. In Section~\ref{secsix} we recall the recent bound
from~\cite{geilmartin2013further} on dual codes, and in Section~\ref{seccomp} we discuss the relation between this bound and the new bound of the present paper. Section~\ref{seceight} is the conclusion.
\section{Improving the Feng-Rao bound for primary affine variety
  codes}~\label{sectwo}
Affine variety codes were introduced by Fitzgerald and Lax
in~\cite{lax} as follows. For $q$ a prime power consider an ideal $I \subseteq
{\mathbb{F}}_q[X_1, \ldots , X_m]$ and define 
\begin{equation}
I_q=I + \langle X_1^q-X_1, \ldots , X_m^q-X_m\rangle,\label{eqiq}
\end{equation}
$$R_q={\mathbb{F}}_q[X_1, \ldots , X_m]/I_q.$$
Let $\{P_1, \ldots , P_n\}={\mathbb{V}}_{\mathbb{F}_q}(I_q)$ be the
corresponding variety over ${\mathbb{F}}_q$. Here, $P_i \neq P_j$ for
$i \neq j$. Define the ${\mathbb{F}}_q$-linear map ${\mbox{ev}} : R_q
\rightarrow {\mathbb{F}}_q^n$ by ${\mbox{ev}}(A+I_q)=(A(P_1), \ldots ,
A(P_n))$. It is well-known that this map is a vector space
isomorphism.
\begin{definition}\label{defaffcode}
Let $L$ be an ${\mathbb{F}}_q$ vector subspace of $R_q$. Define
$C(I,L)={\mbox{ev}}(L)$ and $C^\perp(I,L)=\big(C(I,L)\big)^\perp$. 
\end{definition}
We
shall call $C(I,L)$ a primary affine variety code and $C^\perp(I,L)$ a
dual affine variety code. For the case of primary affine variety codes both
the Feng-Rao bound and the bound of the present paper can be viewed as
consequences of the footprint bound from Gr\"{o}bner basis theory
as we now explain. 
\begin{definition}\label{deffoot}
Let $J \subseteq k[X_1, \ldots ,X_m]$ be an ideal and let $\prec$ be a
fixed monomial ordering. Here, $k$ is an arbitrary field. Denote by
${\mathcal{M}}(X_1, \ldots , X_m)$ the monomials in the variables
$X_1,\ldots , X_m$. The footprint of $J$ with respect to $\prec$ is
the set
\begin{eqnarray}
\Delta_{\prec}(J)&=&\{ M \in {\mathcal{M}}(X_1, \ldots , X_m) \mid M
{\mbox{ is not }}\nonumber \\
&&{\mbox{ \ \ \ \ \ the leading monomial of any polynomial in }} J\}.\nonumber
\end{eqnarray} 
\end{definition}
\begin{proposition}\label{probasis}
Let the notation be as in Definition~\ref{deffoot}. The set $\{M+J \mid M \in \Delta_\prec (J)\}$ constitutes a basis for
$k[X_1, \ldots , X_m]/J$ as a vector space over $k$. 
\end{proposition}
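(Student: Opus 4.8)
The plan is to verify the two defining properties of a vector space basis --- $k$-linear independence and spanning --- for the residues $\{M+J \mid M \in \Delta_\prec(J)\}$, treating the two halves by quite different means.

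First I would handle linear independence, which follows directly from Definition~\ref{deffoot} and needs no Gr\"{o}bner basis. Suppose a finite combination $f = \sum_{i=1}^t a_i M_i$, with the $M_i \in \Delta_\prec(J)$ pairwise distinct and $a_i \in k$, satisfies $f \in J$; we must show all $a_i$ vanish. If not, then $f$ is a nonzero element of $J$, so it has a well-defined leading monomial $\lm(f)$ with respect to $\prec$, namely the $\prec$-largest $M_i$ carrying a nonzero coefficient. But then $\lm(f)$ lies in $\Delta_\prec(J)$ while simultaneously being the leading monomial of the polynomial $f \in J$, contradicting the definition of the footprint. Hence $f \in J$ forces $a_1 = \cdots = a_t = 0$, so the residues $M_i + J$ are $k$-linearly independent.

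For spanning I would invoke the standard machinery of Gr\"{o}bner bases. Fix a Gr\"{o}bner basis $G = \{g_1, \ldots, g_s\}$ of $J$ with respect to $\prec$. The first step is the routine observation that $\Delta_\prec(J)$ is exactly the set of monomials not divisible by any $\lm(g_j)$: on one hand, if $\lm(g_j) \mid M$ then $M = \lm\!\big((M/\lm(g_j))\, g_j\big)$ exhibits $M$ as a leading monomial of an element of $J$; on the other hand, if $M = \lm(h)$ for some nonzero $h \in J$, the Gr\"{o}bner basis property says $\lm(h)$ is divisible by some $\lm(g_j)$. The second step is to apply the multivariate division algorithm: dividing an arbitrary $f \in k[X_1, \ldots, X_m]$ by $g_1, \ldots, g_s$ produces an expression $f = \sum_{j=1}^s q_j g_j + r$ in which no monomial of $r$ is divisible by any $\lm(g_j)$, i.e.\ every monomial occurring in $r$ belongs to $\Delta_\prec(J)$. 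Since $\sum_j q_j g_j \in J$, we obtain $f + J = r + J$, a finite $k$-linear combination of residues $M + J$ with $M \in \Delta_\prec(J)$. As $f$ was arbitrary, these residues span $k[X_1, \ldots, X_m]/J$.

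The only genuine input is in the spanning half: the existence of a Gr\"{o}bner basis for $J$ and the termination and correctness of the multivariate division algorithm. Both are classical and I would simply cite a standard reference rather than reprove them; the remainder of the argument is bookkeeping. I would also append a one-line remark that $\Delta_\prec(J)$ is finite precisely when $J$ is zero-dimensional --- which is the relevant case here, since $I_q$ contains the field equations $X_i^q - X_i$ --- but that the argument above never uses finiteness, and ``basis'' is understood throughout in the usual sense of finite $k$-linear combinations.
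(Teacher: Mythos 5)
Your proof is correct and is precisely the classical argument: linear independence falls out of the definition of the footprint, and spanning follows from the multivariate division algorithm applied to a Gr\"{o}bner basis of $J$. The paper simply cites \cite[Pro.\ 4, Sec.\ 5.3]{clo} for this, and the proof given there is essentially the one you have written out, so there is no substantive difference in approach.
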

\begin{proof}
See~\cite[Pro.\ 4, Sec.\ 5.3]{clo}.
\end{proof}
We shall make extensive use of the following incidence of the
footprint bound (for a more general version, see~\cite{geil2000footprints}).
\begin{corollary}\label{thefoot}
Let $F_1, \ldots , F_s \in {\mathbb{F}}_q[X_1, \ldots , X_m]$. For any
monomial ordering $\prec$ the variety
${\mathbb{V}}_{\mathbb{F}_q}(\langle F_1, \ldots ,F_s\rangle)$ is of
size equal to $\# \Delta_\prec (\langle F_1, \ldots , F_s, X_1^q-X_1,
\ldots , X_m^q-X_m\rangle )$. 
\end{corollary}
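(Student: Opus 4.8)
The plan is to derive Corollary~\ref{thefoot} directly from Proposition~\ref{probasis} by exploiting the fact that the ideal in question is $0$-dimensional and radical over ${\mathbb{F}}_q$. First I would set $J=\langle F_1,\ldots,F_s,X_1^q-X_1,\ldots,X_m^q-X_m\rangle \subseteq {\mathbb{F}}_q[X_1,\ldots,X_m]$ and observe that, because $J$ contains each $X_i^q-X_i$, every point of ${\mathbb{V}}_{\overline{{\mathbb{F}}}_q}(J)$ has all coordinates in ${\mathbb{F}}_q$, so ${\mathbb{V}}_{\overline{{\mathbb{F}}}_q}(J)={\mathbb{V}}_{{\mathbb{F}}_q}(\langle F_1,\ldots,F_s\rangle)$; in particular the variety is finite. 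The key algebraic input is that $J$ is a radical ideal: this follows from the classical fact that $\langle X_1^q-X_1,\ldots,X_m^q-X_m\rangle$ is radical (the polynomial $X^q-X$ is squarefree over ${\mathbb{F}}_q$, splitting into distinct linear factors), and adding more generators to a radical ideal whose variety is finite keeps it radical by the Nullstellensatz-type argument over the points of ${\mathbb{F}}_q^m$.

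Next I would invoke the evaluation isomorphism. Writing $\{P_1,\ldots,P_n\}={\mathbb{V}}_{{\mathbb{F}}_q}(\langle F_1,\ldots,F_s\rangle)$, the map $\mathrm{ev}\colon {\mathbb{F}}_q[X_1,\ldots,X_m]/J \to {\mathbb{F}}_q^n$ sending $A+J \mapsto (A(P_1),\ldots,A(P_n))$ is well-defined and injective precisely because $J$ is radical (so $A$ vanishing on all $P_i$ forces $A\in J$); it is surjective by a Lagrange-interpolation argument, using that the $X_i^q-X_i$ force all evaluations to lie in ${\mathbb{F}}_q$ and that one can separate the finitely many points with polynomials. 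Hence ${\mathbb{F}}_q[X_1,\ldots,X_m]/J \cong {\mathbb{F}}_q^n$ as ${\mathbb{F}}_q$-vector spaces, so $\dim_{{\mathbb{F}}_q}\big({\mathbb{F}}_q[X_1,\ldots,X_m]/J\big)=n=\#{\mathbb{V}}_{{\mathbb{F}}_q}(\langle F_1,\ldots,F_s\rangle)$.

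Finally, fix any monomial ordering $\prec$ and apply Proposition~\ref{probasis} to $J$: the residue classes $\{M+J \mid M\in\Delta_\prec(J)\}$ form an ${\mathbb{F}}_q$-basis of the quotient ring, so $\dim_{{\mathbb{F}}_q}\big({\mathbb{F}}_q[X_1,\ldots,X_m]/J\big)=\#\Delta_\prec(J)$. Combining the two displayed equalities for the dimension yields $\#{\mathbb{V}}_{{\mathbb{F}}_q}(\langle F_1,\ldots,F_s\rangle)=\#\Delta_\prec(J)=\#\Delta_\prec(\langle F_1,\ldots,F_s,X_1^q-X_1,\ldots,X_m^q-X_m\rangle)$, which is the claim. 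I expect the main obstacle to be the careful justification that $\Delta_\prec(J)$ is genuinely finite and that $J$ is radical — i.e., the step that guarantees the quotient is finite-dimensional and that $\mathrm{ev}$ is injective; once those structural facts are in place, the counting is immediate. This is exactly the content referenced in \cite{geil2000footprints}, and one could alternatively cite it wholesale, but the self-contained argument above via the evaluation isomorphism is short enough to include.
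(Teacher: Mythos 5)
Your proof is correct and follows essentially the same route as the paper: combine Proposition~\ref{probasis} (the footprint classes form a basis of the quotient) with the fact that the evaluation map is a vector space isomorphism onto ${\mathbb{F}}_q^n$. The paper simply cites the bijectivity of $\mathrm{ev}$ as well-known, whereas you spell out the underlying finite-field Nullstellensatz argument (radicality via the field equations, injectivity, and surjectivity by interpolation); that extra detail is sound and consistent with what the paper takes for granted.
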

\begin{proof}
Follows from Proposition~\ref{probasis} and the fact that the map
${\mbox{ev}}$ is a bijection.
\end{proof}
We next recall the interpretation from~\cite{bookAG} of the Feng-Rao
bound for primary affine variety codes.
\begin{definition}\label{deftksh1}
A basis $\{B_1+I_q, \ldots , B_{\dim (L)}+I_q\}$ for a subspace $L \subseteq R_q$
where ${\mbox{Supp}}(B_i) \subseteq \Delta_\prec (I_q)$ for $i=1,
\ldots , \dim(L)$ and where ${\mbox{lm}}(B_1) \prec \cdots \prec
{\mbox{lm}}(B_{\dim(L)})$, is said to be well-behaving with respect to
$\prec$. Here, ${\mbox{lm}}(F)$ means the leading monomial of the polynomial $F$.
\end{definition}
For fixed $\prec$ the sequence $({\mbox{lm}}(B_1), \ldots
,{\mbox{lm}}(B_{\dim(L)}))$ is the same for all choices of
well-behaving bases of $L$. Therefore the following definition makes sense.
\begin{definition}\label{deffirkant}
Let $L$ be a subspace of $R_q$ and define
$$\Box_\prec(L)=\{ {\mbox{lm}}(B_1), \ldots
,{\mbox{lm}}(B_{\dim(L)})\},$$
where $\{B_1+I_q, \ldots , B_{\dim(L)}+I_q\}$ is any well-behaving
basis for $L$.
\end{definition}
The concept of one-way well-behaving plays a crucial role in the
Feng-Rao bound as well as in our new bound. It is a relaxation of the
well-behaving property and the weakly well-behaving property
(see~\cite{bookAG,geithom} for a reference) and 
therefore it gives the strongest bounds.
\begin{definition}\label{defaffowb}
Let ${\mathcal{G}}$ be a Gr\"{o}bner basis for $I_q$ with respect to
$\prec$. An ordered pair of monomials $(M_i,M_j)$, $M_i, M_j \in
\Delta_\prec(I_q)$ is said to be one-way well-behaving (OWB) if for
all $H \in {\mathbb{F}}_q[X_1, \ldots , X_m]$ with ${\mbox{Supp}}(H)
\subseteq \Delta_\prec(I_q)$ and ${\mbox{lm}}(H)=M_i$ it holds that
$${\mbox{lm}}(M_iM_j {\mbox{ rem }} {\mathcal{G}})={\mbox{lm}}(HM_j
{\mbox{ rem }} {\mathcal{G}}).$$
Here, $F {\mbox{ rem }} {\mathcal{G}}$ means the remainder of $F$
after division with ${\mathcal{G}}$ (see~\cite[Sec.\ 2.3]{clo} for the
division algorithm for multivariate polynomials).
\end{definition}
As noted in~\cite{bookAG} the concept of OWB is independent of which
Gr\"{o}bner basis ${\mathcal{G}}$ is used as long as $I_q$ and $\prec$
are fixed. We are now ready to describe the Feng-Rao bound for primary
affine variety codes. We include the proof from~\cite[Th.\
4.9]{bookAG}.
\begin{theorem}\label{theseven}
Let ${\mathcal{G}}$ be a Gr\"{o}bner basis for $I_q$ with respect to
$\prec$.  Consider a non-zero word $\vec{c}$ and let 
$A$
be the unique
polynomial such that ${\mbox{Supp}}(A) \subseteq \Delta_\prec(I_q)$
and 
$\vec{c}={\mbox{ev}}(A)$. Let ${\mbox{lm}}(A)=P$. We have
\begin{eqnarray}
w_H(\vec{c})&\geq & \# \{ K \in \Delta_\prec(I_q) \mid \exists N \in
\Delta_\prec (I_q) {\mbox{ such that }} \nonumber \\
&&{\mbox{ \ \ \ }} (P,N) {\mbox{ is OWB and }} {\mbox{lm}}(PN {\mbox{
    rem }} {\mathcal{G}})=K\}.\label{eqafffr}
\end{eqnarray}
A bound on the minimum distance of $C(I,L)$ is found by
taking the minimum of~(\ref{eqafffr}) when  $P$ runs through $\Box_\prec(L)$.
\end{theorem}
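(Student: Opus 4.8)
The plan is to bound $w_H(\vec{c})$ from below by exhibiting, for each $K$ in the set on the right-hand side of~(\ref{eqafffr}), a coordinate position $P_\ell$ at which the codeword is nonzero, in such a way that distinct values of $K$ give distinct positions. The mechanism is the footprint bound of Corollary~\ref{thefoot}: I would compare the number of nonzero entries of $\vec{c}$ with the size of a cleverly chosen footprint.

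First I would fix notation: write $\vec{c} = (c_1, \ldots, c_n)$ and let $S = \{ i \mid c_i \neq 0\}$, so $w_H(\vec{c}) = \#S$. Consider the ideal $J$ generated by $I_q$ together with the polynomials $X_1^q - X_1, \ldots, X_m^q - X_m$ (already in $I_q$) and, crucially, by $A$ itself — or rather I would look at the evaluation ideal of the subvariety $\{P_i \mid i \in S\}$, i.e.\ the vanishing ideal $J$ of those $\#S$ points. By Corollary~\ref{thefoot} (applied with these points as the variety), $\#\Delta_\prec(J) = \#S = w_H(\vec{c})$. The heart of the argument is then to produce an injection from the index set on the right of~(\ref{eqafffr}) into $\Delta_\prec(J)$.

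To build that injection: given $K$ in the RHS set, pick the witnessing $N \in \Delta_\prec(I_q)$ with $(P,N)$ OWB and $\lm(PN \text{ rem } \mathcal{G}) = K$. Because $A$ has leading monomial $P$ and support inside $\Delta_\prec(I_q)$, the OWB property gives $\lm(AN \text{ rem } \mathcal{G}) = \lm(PN \text{ rem } \mathcal{G}) = K$. Now $\ev(AN \text{ rem }\mathcal{G})$ is the componentwise product of $\ev(A)=\vec{c}$ with $\ev(N)$; on the positions outside $S$ this vanishes, so the polynomial $AN \text{ rem }\mathcal{G}$, reduced further modulo $J$, is a polynomial vanishing on all of the variety of $J$ only if it is zero there — but it is nonzero on at least one point of $S$ (indeed $\ev(AN)$ can be zero on $S$ only in degenerate cases, and one shows $K$ survives). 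The key claim to nail down is that $K = \lm(AN \text{ rem }\mathcal{G})$ does \emph{not} lie in the leading-monomial ideal of $J$, hence $K \in \Delta_\prec(J)$; and that the map $K \mapsto K$ (viewed now as sitting in $\Delta_\prec(J)$) is well defined and injective, which is immediate once membership is established since the $K$'s are already distinct. Summing over all such $K$ gives $w_H(\vec{c}) = \#\Delta_\prec(J) \geq \#\{K : \cdots\}$, which is~(\ref{eqafffr}). The final sentence of the theorem is then a triviality: every nonzero $\vec{c} \in C(I,L)$ has its leading monomial $P = \lm(A) \in \Box_\prec(L)$ by definition of $\Box_\prec$, so minimising the RHS of~(\ref{eqafffr}) over $P \in \Box_\prec(L)$ gives a valid lower bound on $d(C(I,L))$.

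The main obstacle I anticipate is the membership claim $K \in \Delta_\prec(J)$ — i.e.\ showing that the leading monomial of $AN$ reduced modulo $\mathcal{G}$ genuinely escapes the leading term ideal of the smaller vanishing ideal $J$. This is where the OWB hypothesis must be used in full force, together with the fact that products $\ev(A)\cdot\ev(N)$ restricted to $S$ cannot vanish identically without forcing a polynomial relation that contradicts the choice of $A$ as the reduced representative. Once that is handled, the counting and the reduction to $\Box_\prec(L)$ are routine. I would also double-check the edge case where two different witnesses $N, N'$ produce the same $K$: this causes no problem, since we only need one position per $K$ and the $K$'s are distinct by construction.
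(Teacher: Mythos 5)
Your reduction to the footprint bound is the right instinct, but you have attached it to the wrong ideal, and the step you yourself flag as ``the main obstacle'' --- that $K=\lm(AN \mbox{ rem } {\mathcal{G}})$ lies in $\Delta_\prec(J)$ for $J$ the vanishing ideal of the support $S$ --- is false in general, not merely unproven. A one-variable counterexample: take $I_q=\langle X^q-X\rangle$ and $A=X$, so $S={\mathbb{F}}_q\setminus\{0\}$, $J=\langle X^{q-1}-1\rangle$ and $\Delta_\prec(J)=\{1,X,\dots,X^{q-2}\}$; yet $(X,X^{q-2})$ is OWB with $K=X^{q-1}$, and $X^{q-1}\notin\Delta_\prec(J)$ because $X^{q-1}-1\in J$. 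The underlying issue is that $\Delta_\prec(J)$ and $\Delta_\prec(I_q)\setminus\Delta_\prec(I_q+\langle A\rangle)$ are two different subsets of $\Delta_\prec(I_q)$ that merely share the cardinality $w_H(\vec{c})$, and the monomials $K$ land in the second one. The paper therefore works with the complementary ideal $I_q+\langle A\rangle$, the vanishing ideal of the \emph{zero} positions of $\vec{c}$: Corollary~\ref{thefoot} gives $w_H(\vec{c})=n-\#\Delta_\prec(I_q+\langle A\rangle)=\#\big(\Delta_\prec(I_q)\setminus\Delta_\prec(I_q+\langle A\rangle)\big)$, and the membership statement needed there is the easy direction --- $AN \mbox{ rem } {\mathcal{G}}$ is an explicit element of $I_q+\langle A\rangle$ with leading monomial $K$ (this is exactly where OWB enters, as you correctly note), so $K\notin\Delta_\prec(I_q+\langle A\rangle)$ while $K\in\Delta_\prec(I_q)$. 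No non-membership in a leading-term ideal ever has to be certified.

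If you want to keep the ``count on the support side'' viewpoint, the correct repair is not to push the $K$'s into $\Delta_\prec(J)$ but to observe that the vectors $\ev(AN \mbox{ rem } {\mathcal{G}})=\vec{c}\ast\ev(N)$, one for each distinct $K$, are linearly independent (their reduced representatives have distinct leading monomials) and all lie in the space $\vec{c}\ast{\mathbb{F}}_q^n$ of dimension $w_H(\vec{c})$; hence their number is at most $w_H(\vec{c})$. That is precisely the argument the paper uses for the linear-code version in Theorem~\ref{the2}. Your treatment of the final sentence (minimising over $P\in\Box_\prec(L)$) is fine as stated.
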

\begin{proof}
From Corollary~\ref{thefoot} we know that 
\begin{eqnarray}
w_H(\vec{c}) &=& n - \#
\Delta_\prec (I_q+\langle A \rangle )\nonumber \\
&=&\#\Delta_{\prec}(I_q)-\#
\Delta_\prec(I_q+\langle A \rangle )\nonumber \\
&=&\# \bigg(\Delta_\prec(I_q) \backslash \Delta_\prec(I_q+\langle A
\rangle)\bigg).\label{eqerher}
\end{eqnarray}
If $N,K \in \Delta_\prec
(I_q)$ satisfy that $(P,N)$ is OWB and ${\mbox{lm}}(PN {\mbox{ rem }}
{\mathcal{G}} )=K$ then $K \in \Delta_\prec(I_q) \backslash
\Delta_\prec(I_q +\langle A \rangle)$. Hence,
\begin{eqnarray}
w_H(\vec{c})&\geq&\# \{ K \in \Delta_\prec(I_q) \mid \exists N \in \Delta_\prec (I_q)
\nonumber \\
&&{\mbox{ \ \ \ \ \ \ \ \ such that }} (P,N) {\mbox{ is OWB and }}
{\mbox{lm}}(PN {\mbox{ rem }} {\mathcal{G}}) = K\}.\nonumber
\end{eqnarray}
\end{proof}
The Feng-Rao bound is particular suited for affine varieties which
satisfy the order domain conditions~\cite[Def.\ 4.22]{bookAG}. For
other varieties it does not seem to produce very good results. The new bound
of the present paper solves this problem for affine varieties which
satisfy the first half of the order domain conditions. This gives a lot of
freedom as the latter set of varieties is much larger than the
former. In its most general form the order domain
conditions involves a weighted degree monomial ordering with weights
in ${\mathbb{N}}_0^r \backslash \{ \vec{0}\}$, $r$ a positive integer (see \cite[Def.\
4.21]{bookAG}). Here, for simplicity we shall only consider weights in
$\mathbb{N}$. 
\begin{definition}\label{defwdeg}
Let $w(X_1), \ldots , w(X_m) \in {\mathbb{N}}$ and define the weight
of $X_1^{i_1}\cdots X_m^{i_m}$ to be the number $w(X_1^{i_1} \cdots X_m^{i_m})=i_1 w(X_1)+ \cdots +i_m w(X_m)$. The weighted degree
ordering $\prec_w$ on ${\mathcal{M}}(X_1, \ldots  ,X_m)$ is the
ordering with $X_1^{i_1} \cdots X_m^{i_m} \prec_w X_1^{j_1} \cdots
X_m^{j_m}$ if either $w(X_1^{i_1} \cdots X_m^{i_m}) < w(X_1^{j_1} \cdots
X_m^{j_m})$ holds or $w(X_1^{i_1} \cdots X_m^{i_m}) = w(X_1^{j_1} \cdots
X_m^{j_m})$ holds but $X_1^{i_1} \cdots X_m^{i_m} \prec^\prime X_1^{j_1} \cdots
X_m^{j_m}$. Here, $\prec^\prime$ is some fixed monomial ordering. When $\prec^\prime$ is the lexicographic ordering $\prec_{lex}$ with
$X_m \prec_{lex} \cdots \prec_{lex} X_1$ we shall call $\prec_w$ a weighted degree lexicographic ordering.
\end{definition}
We now state the order domain conditions which play a central role in
the present paper. 
\begin{definition}\label{defupher}
Consider an ideal  $J \subseteq k[X_1, \ldots , X_m]$ where $k$ is a
field. Let a weighted degree ordering $\prec_w$ be given. Assume that $J$ possesses a Gr\"{o}bner basis
${\mathcal{F}}$ with respect to $\prec_w$ such
that:
\begin{itemize}
\item[(C1)] Any $F \in {\mathcal{F}}$ has exactly two monomials of highest
weight.
\item[(C2)] No two monomials in $\Delta_{\prec_w}(J)$ are of the
same weight. 
\end{itemize}
Then we say that $J$ and $\prec_w$ satisfy the order
domain conditions. 
\end{definition}
In the following we restrict to weighted degree orderings where $\prec^\prime =\prec_{lex}$. That is, $\prec_w$ shall always be a weighted degree lexicographic ordering.
\begin{example}\label{exord}
Consider $I=\langle X^2+X-Y^3 \rangle \subseteq
{\mathbb{F}}_4[X,Y]$ and $I_4$ accordingly (see~(\ref{eqiq})). Choosing $X=X_1$, $Y=X_2$,
$w(X)=3$ and $w(Y)=2$ we see that the order domain conditions are
satisfied. By inspection we have $$\Delta_{\prec_w}(I_4)=\{1, Y, X,
Y^2, XY, Y^3, XY^2, XY^3\}$$ with corresponding weights $\{0, 2, 3, 4,
5, 6, 7, 9\}$. 
Consider a word $\vec{c}={\mbox{ev}}(A+I_4)$ where $A=a_1
1+a_2Y+a_3X$, $a_1, a_2 \in {\mathbb{F}}_4$ and
$a_3 \in {\mathbb{F}}_4 \backslash \{0\}$. By Corollary~\ref{thefoot} the length is $n=8$. We now estimate
the Hamming weight $w_H(\vec{c})=\# \big( \Delta_{\prec_w} (I_4)
\backslash \Delta_{\prec_w}(I_4 + \langle A \rangle)\big)$ (see~(\ref{eqerher})). The
following elements in $\Delta_{\prec_w}(I_4)$ do not belong to
$\Delta_{\prec_w}(I_4 +\langle A\rangle)$. Namely, ${\mbox{lm}}(A
\cdot 1)=X$, ${\mbox{lm}}(A
\cdot Y)=XY$, ${\mbox{lm}}(A
\cdot Y^2)=XY^2$, ${\mbox{lm}}(A
\cdot Y^3)=XY^3$, and ${\mbox{lm}}(A
\cdot X {\mbox{ rem }} X^2+X-Y^3)=Y^3$. Observe that the last
calculation 
holds due to the fact that $X^2+X-Y^3$ contains exactly two monomials
of the highest weight. We have shown that the Hamming weight of
$\vec{c}$ is
at least $5$. With the proof of Theorem~\ref{theseven} in mind an
equivalent formulation of the above is to observe that
$(X,1)$, $(X,Y)$, $(X,Y^2)$, $(X,Y^3)$, and $(X,X)$ are OWB. Another
equivalent method is guaranteed by the condition that
$\Delta_{\prec_w}(I)$ does not contain two monomials of the same
weight. This implies that rather than counting the above OWB pairs we
only need to observe that $w(\Delta_{\prec_w}(I_4))  \cap \big(
w(X)+w(\Delta_{\prec_w}(I_4)) \big)=\{3,5,6,7,9\}$. Again, a set of
size $5$.
\end{example}
The following Proposition (corresponding to~\cite[Pro.\ 4.25]{bookAG})
summarises how the Feng-Rao bound is supported by the order domain condition.
\begin{proposition}\label{proptotwo}
Assume $I \subseteq {\mathbb{F}}_q[X_1, \ldots , X_m]$ and $\prec_w$
satisfy the order domain conditions. Consider $I_q=I+\langle
X_1^q-X_1, \ldots , X_m^q-X_m\rangle$. A pair $(P,N)$ where $P, N \in
\Delta_{\prec_w}(I_q)$ is OWB if $w(P)+w(N) \in w(\Delta_{\prec_w}
(I_q))$. 
\end{proposition}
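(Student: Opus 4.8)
The plan is to reduce the statement to the one-variable situation that already appeared in Example~\ref{exord}, namely the observation that multiplying by a polynomial whose support lies in the footprint and whose leading monomial is $P$ produces a leading monomial that depends only on $P$ (and on the monomial $N$ we multiply with), precisely because condition (C1) forces each Gr\"obner basis element to have exactly two monomials of top weight, so that reducing a top-weight monomial never raises the weight. First I would fix the Gr\"obner basis $\mathcal{F}$ of $I$ guaranteed by the order domain conditions, note that $\mathcal{F}$ together with the field equations $X_1^q-X_1,\dots,X_m^q-X_m$ is a Gr\"obner basis for $I_q$ with respect to $\prec_w$, and take an arbitrary $H$ with $\Supp(H)\subseteq\Delta_{\prec_w}(I_q)$ and $\lm(H)=P$. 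I must show $\lm(HN\text{ rem }\mathcal{G})=\lm(PN\text{ rem }\mathcal{G})$, where $\mathcal{G}$ is this Gr\"obner basis.

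The key step is a weight-monotonicity lemma: if $F\in\bbF_q[X_1,\dots,X_m]$ has $\lm(F)=M$ with $w(M)=\delta$, then every monomial appearing in $F\text{ rem }\mathcal{G}$ has weight $\le\delta$, and moreover the monomial of weight exactly $\delta$ in $F\text{ rem }\mathcal{G}$ (if any) is uniquely determined by $M$ alone. The point is that a single reduction step replaces the top-weight term $cM'$ of the current remainder-candidate by $-c\,(M'/\lm(G))\cdot(G-\lm(G))$ for some $G\in\mathcal{G}$; using (C1) for the curve generators (they have exactly two top-weight monomials, so $G-\lm(G)$ has a unique monomial of weight $w(\lm(G))$ and the rest of strictly smaller weight) and the trivial fact that $X_i^q-X_i$ has its second monomial $X_i$ of strictly smaller weight, no reduction step ever creates a monomial of weight exceeding $\delta$, and the weight-$\delta$ part of the remainder evolves by a deterministic rule independent of the lower-weight terms of $F$. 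Then, writing $H=cP+(\text{lower }\prec_w\text{ terms})$ with $c\neq 0$, the products $HN$ and $cPN$ have the same monomial of weight $w(P)+w(N)=w(PN)$, namely $c\,PN$, and all other monomials of $HN-cPN$ have strictly smaller weight; by the lemma the weight-$w(PN)$ parts of $HN\text{ rem }\mathcal{G}$ and $cPN\text{ rem }\mathcal{G}$ coincide and are nonzero.

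Finally I would invoke the hypothesis $w(P)+w(N)\in w(\Delta_{\prec_w}(I_q))$ together with condition (C2): since no two monomials of $\Delta_{\prec_w}(I_q)$ share a weight, there is a \emph{unique} monomial $K\in\Delta_{\prec_w}(I_q)$ with $w(K)=w(P)+w(N)$, and the weight-$w(PN)$ part of $PN\text{ rem }\mathcal{G}$ is therefore exactly a nonzero scalar times $K$ (it lies in $\Delta_{\prec_w}(I_q)$ because it is part of a remainder, and it has the prescribed weight). Hence $\lm(HN\text{ rem }\mathcal{G})=K=\lm(PN\text{ rem }\mathcal{G})$, which is the OWB property. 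The main obstacle I anticipate is making the weight-monotonicity lemma fully rigorous: one has to track the reduction process carefully enough to see that the weight-$\delta$ coefficient is governed by a rewriting rule that only ever looks at weight-$\delta$ data, and to confirm that the division algorithm's choices among applicable $G\in\mathcal{G}$ do not disturb this — here the cleanest route is probably to argue on the homogeneous-with-respect-to-$w$ decomposition and reduce modulo the ideal generated by the top-weight parts of $\mathcal{F}$, exploiting that (C1) makes each such top-weight part a binomial.
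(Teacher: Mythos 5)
Your overall strategy --- use (C1) to show that reduction never raises weights and moves the unique top-weight monomial by a deterministic rule, then use (C2) and the hypothesis to identify the surviving monomial --- is the standard route to this proposition (the paper gives no proof, citing \cite{bookAG}), and it can be made to work. But one step is concretely false and one essential step is asserted rather than proved. The false step: $\mathcal{F}\cup\{X_1^q-X_1,\ldots,X_m^q-X_m\}$ is in general \emph{not} a Gr\"obner basis for $I_q$. For the Hermitian ideal $I=\langle X^{q_0+1}-Y^{q_0}-Y\rangle\subseteq{\mathbb{F}}_{q_0^2}[X,Y]$ with $w(X)=q_0$, $w(Y)=q_0+1$ (which satisfies both order domain conditions), the footprint of the leading terms of $\{X^{q_0+1}-Y^{q_0}-Y,\,X^{q_0^2}-X,\,Y^{q_0^2}-Y\}$ has $(q_0+1)q_0^2$ elements while the variety has only $q_0^3$ points, so by Corollary~\ref{thefoot} this set cannot be a Gr\"obner basis; the paper makes the same observation for a concrete $F$ in Example~\ref{exdualprimary}. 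Since Definition~\ref{defaffowb} is phrased via remainders modulo an actual Gr\"obner basis of $I_q$, you must either work with such a basis (about whose elements you only know that, for a weighted degree ordering, the leading monomial has maximal weight in the support --- enough for weight-monotonicity, but not for your claim that the weight-$\delta$ part evolves by a (C1)-type binomial rule), or reduce in two stages: first modulo $\mathcal{F}$ alone, where (C1) applies, and only then modulo a Gr\"obner basis of $I_q$.

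The unproved step is the nonvanishing of the weight-$\delta$ part of the remainder (``coincide \emph{and are nonzero}''), and this is exactly where the hypothesis $w(P)+w(N)\in w(\Delta_{\prec_w}(I_q))$ must enter: a reduction step by $X_i^q-X_i$, or by any Gr\"obner basis element of $I_q$ having a single monomial of top weight, strictly \emph{lowers} the weight, so the weight-$\delta$ part can die during reduction --- and it necessarily dies whenever $w(P)+w(N)\notin w(\Delta_{\prec_w}(I_q))$, since the remainder is supported on $\Delta_{\prec_w}(I_q)$. Your last paragraph uses the hypothesis only to identify which monomial $K$ is, not to show that its coefficient is nonzero, so the argument as written never rules out that both remainders have zero weight-$\delta$ part, in which case their leading monomials are governed by the lower-weight tails and OWB does not follow. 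The repair is the two-stage reduction: by (C1) and (C2) the polynomial $HN$ has a unique monomial of maximal weight $\delta=w(P)+w(N)$, namely $PN$; each $\mathcal{F}$-reduction of that term replaces it by a single monomial of the same weight and multiplies its coefficient by a nonzero scalar, and the chain terminates at the unique $K'\in\Delta_{\prec_w}(I)$ with $w(K')=\delta$; the hypothesis together with (C2) forces $K'=K\in\Delta_{\prec_w}(I_q)$, after which the second-stage reduction modulo a Gr\"obner basis of $I_q$ leaves the term $cK$ untouched and only pushes the remaining terms further down in weight. With these two repairs your proof is correct.
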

The order domain conditions
historically~\cite{handbook,pellikaan2001existence,AG,bookAG} were designed to
support the Feng-Rao bounds and therefore it is not surprising that the
bound does not work very well without them. The improvement to the
Feng-Rao bound that we introduce below allows us to consider relaxed
conditions in that we can produce good estimates in the case that the
order domain condition (C1) is satisfied but (C2) is not. The following example illustrates the idea in our improvement to
Theorem~\ref{theseven}.
\begin{example}\label{exmot}
Consider $I=\langle X^4+X^2+X-Y^6-Y^5-Y^3\rangle \subseteq
{\mathbb{F}}_8[X,Y]$. 
Let $\prec_w$ be the weighted
degree lexicographic ordering (Definition~\ref{defwdeg}) given by
$X=X_1$, $Y=X_2$, $w(X)=3$ and $w(Y)=2$. From~\cite[Sec.\ 3]{salazar} and \cite[Sec.\
4.2]{geilmartin2013further} we know that the variety
${\mathbb{V}}_{\mathbb{F}_8}(I_8)$ is of size $32$. 
Combining this observation with 
Corollary~\ref{thefoot} we see that 
$$\Delta_{\prec_w} (I_8) = \{ X^\alpha Y^\beta  \mid 0 \leq \alpha < 4, 0
\leq \beta < 8\}.$$
By inspection we see that some weights appear twice in $\Delta_{\prec_w}(I_8)$, some only once.  
Consider $\vec{c}={\mbox{ev}}(A+I_8)$ where ${\mbox{lm}}(A)=X^3$. That
is,
\begin{eqnarray}
A&=&a_1 1+a_2Y+a_3 X+a_4 Y^2 + a_5
XY+a_6 Y^3+a_7 X^2\nonumber \\
&&+ a_8 XY^2+a_9 Y^4+a_{10}
X^2Y+a_{11}XY^3+a_{12}X^3.\nonumber
\end{eqnarray}
Here, $a_i \in {\mathbb{F}}_8$, $i=1, \ldots , 12$ and
$a_{12} \neq 0$. Note that $A$ has two monomials of the
highest weight if $a_{11} \neq 0$, namely $X^3$ and $XY^3$. Following the proof of
Theorem~\ref{theseven} we consider $P=X^3$ and look for $N,K \in
\Delta_{\prec_w}(I_8)$ such that $(P,N)$ is OWB and ${\mbox{lm}}(PN
{\mbox{ rem }} {\mathcal{G}})=K$. We have the following possible
choices of $(N,K)$, namely $(1,X^3)$, $(Y,X^3Y)$, $(Y^2,X^3Y^2), \ldots
, (Y^7,X^3Y^7),(X^3,X^2Y^6),(X^3Y,X^2Y^7)$. From this we conclude that $w_H(\vec{c}) \geq
10$.\\ 
Note that $X^3 \cdot X {\mbox{ rem }} {\mathcal{G}}=Y^6$. However,
$(X^3,X)$ is not OWB as 
\begin{equation}
XY^3 \prec_w X^3 {\mbox{ but }} XY^3 \cdot X {\mbox{ rem }} {\mathcal{
    G}} = X^2Y^3 \succ_w Y^6. \label{eqsnabel}
\end{equation}
Our improved method consists in considering separately two
different cases: $XY^3 \in {\mbox{Supp}}(A)$ and $XY^3 \notin
{\mbox{Supp}}(A)$.\\

\noindent {\underline{Case 1:}} Assume $a_{11}\neq
0$. Following~(\ref{eqsnabel}) we see that ${\mbox{lm}}(A \cdot X
{\mbox{ rem }} {\mathcal{G}})=X^2Y^3$. In a similar way we derive ${\mbox{lm}}(A \cdot XY
{\mbox{ rem }} {\mathcal{G}})=X^2Y^4$
and ${\mbox{lm}}(A \cdot XY^2
{\mbox{ rem }} {\mathcal{G}})=X^2Y^5$. From this we conclude
$$\Delta_{\prec_w}(I_q+\langle A \rangle)\subseteq \{ X^\alpha Y^\beta
\mid 0 \leq \alpha < 3, 0 \leq \beta <8, {\mbox{ and if }} \alpha=2
{\mbox{ then }} \beta < 3\}$$
and therefore that $w_H(\vec{c}) \geq n-\#
\Delta_{\prec_w}(I_8+\langle A \rangle)=32-19=13$.\\

\noindent {\underline{Case 2:}} Assume $a_{11}=0$. 
This means that we do not have to worry about~(\ref{eqsnabel}) and consequently
${\mbox{lm}}(A\cdot X {\mbox{ rem }} {\mathcal{G}} )=Y^6$ holds. In a similar way we derive ${\mbox{lm}}(A\cdot X^2 {\mbox{ rem }} {\mathcal{G}} )=XY^6$,
${\mbox{lm}}(A\cdot XY {\mbox{ rem }} {\mathcal{G}} )=Y^7$, and
${\mbox{lm}}(A\cdot X^2Y {\mbox{ rem }} {\mathcal{G}} )=XY^7$. We
conclude that
$$\Delta_{\prec_w}(I_q+\langle A \rangle ) \subseteq \{X^\alpha Y^\beta
\mid 0 \leq \alpha < 3, 0 \leq \beta < 6\}$$
and therefore from the proof of Theorem~\ref{theseven} we have that
$w_H(\vec{c}) \geq n-\# \Delta_{\prec_w}(I_8+\langle A \rangle )=32-18=14$.\\
 
\noindent In conclusion $w_H(\vec{c}) \geq \min \{13,14\}=13$.
\end{example}
With Example~\ref{exmot} in mind we now improve upon
Theorem~\ref{theseven}.
\begin{definition}\label{defstrong}
Let ${\mathcal{G}}$ be a Gr\"{o}bner basis for $I_q$ with respect to a fixed arbitrary monomial ordering 
$\prec$. Write $\Delta_\prec(I_q) =\{ M_1, \ldots , M_n\}$ with $M_1
\prec \cdots \prec M_n$. Let ${\mathcal{I}}=\{ 1, \ldots , n\}$ and
consider ${\mathcal{I}}^\prime \subseteq {\mathcal{I}}$. An ordered
pair of monomials $(M_i,M_j)$, $ 1 \leq i, j \leq n$ is said to be
strongly one-way well-behaving (SOWB) with respect to
${\mathcal{I}}^\prime$ if for all $H$ with ${\mbox{Supp}}(H) \subseteq
\{M_s \mid s \in {\mathcal{I}}^\prime \}$, $M_i \in {\mbox{Supp}}(H)$
it holds that ${\mbox{lm}}(M_iM_j {\mbox{ rem }}
{\mathcal{G}})={\mbox{lm}}(H M_j {\mbox{ rem }} {\mathcal{G}})$. 
\end{definition}
In the following, when writing $\Delta_\prec(I_q) =\{ M_1, \ldots , M_n\}$, we shall always assume that  $M_1
\prec \cdots \prec M_n$ holds.\\
Consider a non-zero codeword $\vec{c}={\mbox{ev}}(A+I_q)$, where
$A=\sum_{s=1}^i a_sM_s$, $i \geq 2$, $a_s \in {\mathbb{F}}_q$ for $s=1,
\ldots ,i$ and $a_i \neq 0$. Let $v$ be an integer $1 \leq v <
i$. We consider $v+1$ different cases that cover all possibilities:\\

\noindent {\underline{Case 1:}} $a_{i-1}\neq 0$.\\

\noindent {\underline{Case 2:}} $a_{i-1}= 0$, $a_{i-2}\neq 0$.\\

 \ \ \ $\ \vdots$\\

\noindent {\underline{Case v:}} $a_{i-1}=a_{i-2}= \cdots
=a_{i-v+1}= 0$, $a_{i-v}\neq 0$.\\

\noindent {\underline{Case v+1:}} $a_{i-1}=\cdots =
a_{i-v}=0$.\\

\noindent For each of the above $v+1$ cases we shall estimate $n-\#
\Delta_{\prec}(I_q+\langle A \rangle )$. Then the minimal obtained value
constitutes a lower bound on $w_H(\vec{c})$. Note that in
Example~\ref{exmot} 
we used $v=1$.
\begin{theorem}\label{thenewbound}
Let $\prec$ be a fixed arbitrary monomial ordering. Consider $\vec{c}={\mbox{ev}}(\sum_{s=1}^i a_s M_s +I_q)$,
$a_s \in {\mathbb{F}}_q$, $s=1, \ldots , i$, and $a_i \neq
0$. Let $v$ be an integer $0 \leq v < i$. We have 
$$w_H(\vec{c}) \geq \min \{ \#{\mathcal{L}}(1), \ldots ,
\#{\mathcal{L}}(v+1)\}$$
 where for $t=1, \ldots , v$ we define ${\mathcal{L}}(t)$ as follows:
\begin{eqnarray}
{\mathcal{L}}(1)&=& \big\{ K \in \Delta_\prec(I_q) \mid \exists M_j \in \Delta_\prec(I_q)
{\mbox{ such that either }}\nonumber \\
&&(M_i,M_j) {\mbox{ is OWB  and }} {\mbox{lm}}(M_iM_j {\mbox{ rem }} {\mathcal{G}})=K
{\mbox{ or }}
\nonumber \\
&&(M_{i-1},M_j) {\mbox{ is SOWB with respect to }} \{1, \ldots ,
i\}\nonumber \\
&&{\mbox{ and }} {\mbox{lm}}(M_{i-1}M_j {\mbox{ rem }} {\mathcal{G}})=K\big\},
\nonumber \\
\ \nonumber \\
{\mathcal{L}}(2)&=& \big\{ K \in \Delta_\prec(I_q) \mid \exists M_j \in \Delta_\prec(I_q)
{\mbox{ such that either }}\nonumber \\
&&(M_i,M_j) {\mbox{ is SOWB with respect to }} \{ 1, \ldots , i-2,i\} \nonumber \\
&&{\mbox{ and }} {\mbox{lm}}(M_iM_j {\mbox{ rem }} {\mathcal{G}})=K
{\mbox{ or }}
\nonumber \\
&&(M_{i-2},M_j) {\mbox{ is SOWB with respect to }} \{1, \ldots ,i-2,
i\}\nonumber \\
&&{\mbox{ and }} {\mbox{lm}}(M_{i-2}M_j {\mbox{ rem }} {\mathcal{G}})=K\big\},
\nonumber \\
\ \nonumber \\
&&{\mbox{ \ \ \ \ \ \ \ \ \ \ \ \ \ \ \ \ \ \ \ \ \ \  \ \ }} \vdots \nonumber \\
\ \nonumber \\
{\mathcal{L}}(v)&=&\big\{ K \in \Delta_\prec(I_q) \mid \exists M_j \in \Delta_\prec(I_q)
{\mbox{ such that either }}\nonumber \\
&&(M_i,M_j) {\mbox{ is SOWB with respect to }} \{ 1, \ldots , i-v,i\} \nonumber \\
&&{\mbox{ and }} {\mbox{lm}}(M_iM_j {\mbox{ rem }} {\mathcal{G}})=K
{\mbox{ or }}
\nonumber \\
&&(M_{i-v},M_j) {\mbox{ is SOWB with respect to }} \{1, \ldots ,i-v,
i\}\nonumber \\
&&{\mbox{ and }} {\mbox{lm}}(M_{i-v}M_j {\mbox{ rem }} {\mathcal{G}})=K\big\},
\nonumber 
\end{eqnarray}
Finally,
\begin{eqnarray}
{\mathcal{L}}(v+1)&=& \big\{ K \in \Delta_\prec(I_q) \mid \exists M_j \in \Delta_\prec(I_q)
{\mbox{ such that }} (M_i,M_j) {\mbox{ is SOWB}}
\nonumber \\
&& {\mbox{ with respect to }} \{1, \ldots ,
i-v-1,i\} {\mbox{ and }} {\mbox{lm}}(M_iM_j {\mbox{ rem }} {\mathcal{G}})=K \big\}.
\nonumber 
\end{eqnarray}
Given a code $C(I,L)$ write $\Box_\prec(L)=\{M_{i_1}, \ldots ,
M_{i_{\dim (L)}}\}$. 
A lower bound on the minimum distance is obtained by repeating the
above calculation for each $i \in \{i_1, \ldots , i_{\dim (L) }\}$. For
each choice of $i$ an appropriate value $v$ is chosen.
\end{theorem}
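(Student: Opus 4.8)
The plan is to mimic the case analysis illustrated in Example~\ref{exmot}, but carried out in the general setting and for an arbitrary fixed monomial ordering $\prec$. Throughout, write $A=\sum_{s=1}^i a_s M_s$ with $a_i\neq 0$, fix the value $v$ with $0\le v<i$, and recall from Corollary~\ref{thefoot} and the computation~(\ref{eqerher}) in the proof of Theorem~\ref{theseven} that
\begin{equation}
w_H(\vec c)=\#\bigl(\Delta_\prec(I_q)\setminus\Delta_\prec(I_q+\langle A\rangle)\bigr).\nonumber
\end{equation}
So it suffices, in each of the $v+1$ cases, to exhibit enough monomials $K\in\Delta_\prec(I_q)$ that are guaranteed to be leading monomials of elements of $I_q+\langle A\rangle$, i.e.\ to lie outside $\Delta_\prec(I_q+\langle A\rangle)$; the count of such $K$ in Case $t$ will be exactly $\#\mathcal L(t)$, and since the $v+1$ cases are exhaustive, the minimum of these counts is a lower bound for $w_H(\vec c)$.

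First I would treat Case $1$ ($a_{i-1}\neq 0$). Here both $M_i$ and $M_{i-1}$ occur in $\mathrm{Supp}(A)$ and $M_{i-1}$ is the second-largest element of the support. The two sources of monomials in $\Delta_\prec(I_q)\setminus\Delta_\prec(I_q+\langle A\rangle)$ are: (a) if $(M_i,M_j)$ is OWB then, by the very definition of OWB applied to $H=A$ (whose leading monomial is $M_i$ and whose support lies in $\Delta_\prec(I_q)$), $\mathrm{lm}(AM_j\ \mathrm{rem}\ \mathcal G)=\mathrm{lm}(M_iM_j\ \mathrm{rem}\ \mathcal G)=K$, and since $AM_j\in I_q+\langle A\rangle$ this $K$ is a leading monomial there; (b) if $(M_{i-1},M_j)$ is SOWB with respect to $\{1,\dots,i\}$ then, writing $A'=a_i M_i M_j\cdot(\text{monomial}) + \dots$ — more precisely, consider $A M_j\ \mathrm{rem}\ \mathcal G$; one removes the top contribution coming from $M_i M_j$ (which, when $(M_i,M_j)$ is \emph{not} OWB, is still controlled because we can subtract a suitable multiple of the relevant Gr\"obner basis element and of lower-support pieces), leaving a polynomial whose support sits inside $\{M_s\mid s\in\{1,\dots,i\}\}$ and still contains $M_{i-1}$, so SOWB forces its leading monomial after reduction to equal $\mathrm{lm}(M_{i-1}M_j\ \mathrm{rem}\ \mathcal G)=K$. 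The mild subtlety is to argue that the element of $I_q+\langle A\rangle$ producing $K$ is genuinely there — this is just $AM_j$ minus an $I_q$-element minus an appropriate $\mathbb F_q$-multiple of a previously constructed element — and that the $K$'s from (a) and (b) are the $K$'s listed in $\mathcal L(1)$. So Case $1$ reduces cleanly to bookkeeping once the SOWB definition is unwound.

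For Case $t$ with $2\le t\le v$ ($a_{i-1}=\dots=a_{i-t+1}=0$, $a_{i-t}\neq 0$), the support of $A$ is contained in $\{M_s\mid s\in\{1,\dots,i-t,i\}\}$ and contains both $M_i$ and $M_{i-t}$. Now even the top monomial $M_i$ only has its reduction behaviour pinned down by SOWB with respect to this restricted index set (not plain OWB), which is exactly why $\mathcal L(t)$ asks for $(M_i,M_j)$ to be SOWB w.r.t.\ $\{1,\dots,i-t,i\}$; the same SOWB property w.r.t.\ the same set handles the $M_{i-t}$ contribution after peeling off the $M_i$-term. Case $v+1$ ($a_{i-1}=\dots=a_{i-v}=0$) is the degenerate one: only $M_i$ is guaranteed in the support (support inside $\{M_s\mid s\in\{1,\dots,i-v-1,i\}\}$), so only the first type of $K$ survives, giving $\mathcal L(v+1)$. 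In every case the argument is the same template: identify the support set $S$ of $A$, note $M_i\in\mathrm{Supp}(A)$ and (except in the last case) the relevant smaller index in $\mathrm{Supp}(A)$, and use SOWB/OWB with respect to $S$ to conclude that $\mathrm{lm}(AM_j\ \mathrm{rem}\ \mathcal G)$ and (after subtraction) $\mathrm{lm}((A-a_iM_i\cdot\text{reduction})M_j\ \mathrm{rem}\ \mathcal G)$ equal the prescribed $K$'s.

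Finally, the minimum-distance statement: $\Box_\prec(L)=\{M_{i_1},\dots,M_{i_{\dim L}}\}$ is by Definition~\ref{deffirkant} the set of leading monomials of any well-behaving basis of $L$, and every nonzero codeword $\vec c=\mathrm{ev}(A+I_q)$ with $\mathrm{Supp}(A)\subseteq\Delta_\prec(I_q)$ has $\mathrm{lm}(A)=M_i$ for some $i\in\{i_1,\dots,i_{\dim L}\}$ — this is the standard fact about leading monomials of elements of a subspace with a well-behaving basis, already used in Theorem~\ref{theseven}. Applying the bound above for that particular $i$ (with a choice of $v$ allowed to depend on $i$) gives a lower bound on $w_H(\vec c)$, and minimising over $i\in\{i_1,\dots,i_{\dim L}\}$ gives the claimed lower bound on the minimum distance. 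The step I expect to be the genuine obstacle — as opposed to notational bookkeeping — is making rigorous, in the general (non-order-domain) setting, the claim that after subtracting the contribution of the top term $M_i M_j$ one obtains a polynomial in $I_q+\langle A\rangle$ whose support is still confined to the index set $S$ and whose leading monomial is therefore governed by the SOWB hypothesis on $M_{i-t}$; one must be careful that the reduction of $M_iM_j$ modulo $\mathcal G$ does not reintroduce monomials $M_s$ with $s\notin S$ in a way that breaks the SOWB applicability, which is precisely the point where the SOWB-with-respect-to-$\mathcal I'$ definition (rather than ordinary OWB) is doing the work.
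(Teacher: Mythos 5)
Your overall architecture is the paper's: split into the $v+1$ cases according to the first non-vanishing coefficient among $a_{i-1},\ldots,a_{i-v}$, use the footprint identity $w_H(\vec c)=\#\bigl(\Delta_\prec(I_q)\setminus\Delta_\prec(I_q+\langle A\rangle)\bigr)$, exhibit each $K\in\mathcal L(t)$ as a leading monomial of an element of $I_q+\langle A\rangle$, and aggregate over $\Box_\prec(L)$ for the minimum distance. (The paper's own proof is only a sketch pointing back to Example~\ref{exmot}, so filling in these details is exactly what is asked for.) The OWB part of Case 1 and the final aggregation are handled correctly.

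However, the step you yourself flag as ``the genuine obstacle'' --- peeling off the contribution of $a_iM_iM_j$ before invoking the SOWB hypothesis on $(M_{i-t},M_j)$ --- is a real gap, and moreover it is a detour that cannot be completed as described: the residual $(A-a_iM_i)M_j$ is not an element of $I_q+\langle A\rangle$, so knowing its leading monomial after reduction tells you nothing directly about $\Delta_\prec(I_q+\langle A\rangle)$, and relating it back to $\mathrm{lm}(AM_j\ \mathrm{rem}\ \mathcal G)$ would require precisely the control over the reduction of $M_iM_j$ that you do not have. The point you are missing is that no subtraction is needed. Definition~\ref{defstrong} quantifies over \emph{all} $H$ with $\mathrm{Supp}(H)\subseteq\{M_s\mid s\in\mathcal I'\}$ and $M_{i-t}\in\mathrm{Supp}(H)$; it does not require $M_{i-t}$ to be the leading monomial of $H$. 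In Case $t$ one has $\mathrm{Supp}(A)\subseteq\{M_s\mid s\in\{1,\ldots,i-t,i\}\}$ with $M_{i-t},M_i\in\mathrm{Supp}(A)$, so $A$ itself is an admissible $H$ for the SOWB hypothesis on $(M_{i-t},M_j)$ with respect to $\{1,\ldots,i-t,i\}$ (and likewise for $(M_i,M_j)$), whence $\mathrm{lm}(AM_j\ \mathrm{rem}\ \mathcal G)=\mathrm{lm}(M_{i-t}M_j\ \mathrm{rem}\ \mathcal G)=K$ in one step, and $AM_j\ \mathrm{rem}\ \mathcal G\in I_q+\langle A\rangle$ has leading monomial $K$. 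This is exactly why the index $i$ is included in the reference sets $\{1,\ldots,i-t,i\}$ of the theorem: so that $A$ qualifies directly as the test polynomial $H$. With that substitution your Case 1(b), Cases $2,\ldots,v$, and Case $v+1$ all close immediately, and the worry about the reduction of $M_iM_j$ ``reintroducing'' monomials outside $S$ evaporates, since SOWB is a hypothesis on $H$ before multiplication and reduction, not on the reduced product.
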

\begin{proof}
If $v=0$ then only the last set is present and this set
equals the set in~(\ref{eqafffr}). 
For $v>0$ the $v+1$ expressions correspond to the $v+1$ cases described prior to
the theorem (in the same order). The proof technique resembles the
arguments used in Example~\ref{exmot}.
\end{proof}
\begin{remark}\label{remtksh2}
Consider an ideal $I\subseteq {\mathbb{F}}_q[X_1, \ldots , X_m]$ and a corresponding weighted degree
lexicographic ordering $\prec_w$ such that the order domain condition (C1) is 
satisfied but (C2) is not. Let ${\mathcal{F}}$ be a Gr\"{o}bner basis
for $I$ with respect to $\prec_w$. Assume Theorem~\ref{thenewbound} is used to
estimate the Hamming weight of 
$\vec{c}={\mbox{ev}}(A+I_q)$ where ${\mbox{lm}}(A)=M_i$. A natural choice of $v$ is the unique
non-negative integer which satisfies
$w(M_i)=w(M_{i-1})= \cdots  =w(M_{i-v}) > w(M_{i-v-1})$. To see why this
choice of $v$ is natural, note that when reducing $A M_j$ modulo
${\mathcal{F}}$ the weight of the leading monomial remains the
same. Hence, the leading monomial of $AM_j {\mbox{ rem }}
{\mathcal{F}}$ can  not be equal to $M_t M_j {\mbox{ rem }}
{\mathcal{F}}$ for $t \leq i-v-1$. On the other hand as illustrated in
Example~\ref{exmot} this may happen
when $t \geq i-v$. For $I$ and $\prec_w$ such that both order domain
conditions are satisfied the above choice of $v$ is $v=0$ and
Theorem~\ref{thenewbound} therefore simplifies to the usual Feng-Rao bound Theorem~\ref{theseven} in this case.   
\end{remark}
Theorem~\ref{thenewbound} can be applied to any code
$C(I,L)$. However, it is not clear if
there is any advantage in considering other choices of $L$ than
$L={\mbox{Span}}_{\mathbb{F}_q} \{ {\mbox{ev}}(M_{i_1}+I_q), \ldots ,
{\mbox{ev}}(M_{i_k}+I_q)\}$. 
When $i_1=1, \ldots , i_k=k$ we shall
denote the corresponding code by $E(k)$. Observe that
Theorem~\ref{thenewbound} suggests an improved code construction as
follows. 
\begin{definition}\label{defimpcode}
Fix non-negative numbers $v_1, \ldots , v_n$ and calculate
for each $M_i$, $i=1, \ldots , n$ the number in
Theorem~\ref{thenewbound} where $v=v_i$. Call these number
$\tilde{\sigma}(i)$, $i=1, \ldots , n$. We define
$\tilde{E}_{imp}(\delta)$ to be the code with
$L={\mbox{Span}}_{\mathbb{F}_q}\{ {\mbox{ev}}(M_i+I_q) \mid
\tilde{\sigma}(i) \geq \delta \}$. 
\end{definition}
\begin{proposition}
The minimum distance of $\tilde{E}_{imp}(\delta)$ satisfies $d(\tilde{E}_{imp}(\delta)) \geq \delta$.
\end{proposition}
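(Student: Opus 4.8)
The plan is to read the statement off as an immediate packaging of Theorem~\ref{thenewbound}, in exactly the way the ordinary Feng--Rao bound gives rise to the classical improved code construction. First I would fix the data $v_1,\ldots,v_n$ and the resulting numbers $\tilde{\sigma}(1),\ldots,\tilde{\sigma}(n)$ of Definition~\ref{defimpcode}, put $\mathcal{J}=\{\, i \mid \tilde{\sigma}(i)\geq \delta \,\}$, and recall that by Proposition~\ref{probasis} the residues $\{M+I_q \mid M\in\Delta_\prec(I_q)\}$ form a basis of $R_q$ and that $\ev$ is a vector-space isomorphism onto ${\mathbb{F}}_q^n$. Consequently every word of $\tilde{E}_{imp}(\delta)$ can be written uniquely as $\ev(A+I_q)$ with $\Supp(A)\subseteq\{M_i \mid i\in\mathcal{J}\}$, since $L$ is by definition spanned by exactly the $\ev(M_i+I_q)$ with $\tilde{\sigma}(i)\geq\delta$.

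Next I would take an arbitrary \emph{non-zero} codeword $\vec{c}=\ev(A+I_q)$ of $\tilde{E}_{imp}(\delta)$. Then $A\neq 0$, so $\lm(A)=M_\ell$ for a well-defined index $\ell$, and since $M_\ell\in\Supp(A)\subseteq\{M_i \mid i\in\mathcal{J}\}$ we get $\ell\in\mathcal{J}$, i.e.\ $\tilde{\sigma}(\ell)\geq\delta$. Writing $A=\sum_{s=1}^{\ell}a_s M_s$ with $a_\ell\neq 0$ (and zero coefficients on the indices not in $\mathcal{J}$), I would apply Theorem~\ref{thenewbound} to $\vec{c}$ with this $\ell$ in the role of ``$i$'' and with $v=v_\ell$. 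By construction $\tilde{\sigma}(\ell)$ is precisely the quantity $\min\{\#\mathcal{L}(1),\ldots,\#\mathcal{L}(v_\ell+1)\}$ that Theorem~\ref{thenewbound} outputs for this choice, so the theorem yields $w_H(\vec{c})\geq\tilde{\sigma}(\ell)\geq\delta$.

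Finally, since $\vec{c}$ was an arbitrary non-zero word of the code, every non-zero codeword has Hamming weight at least $\delta$, which is exactly the assertion $d(\tilde{E}_{imp}(\delta))\geq\delta$. There is no genuine obstacle here; the only points that deserve a line of care are that Definition~\ref{defimpcode} tacitly requires each $v_i$ to satisfy $0\leq v_i<i$ so that Theorem~\ref{thenewbound} applies (e.g.\ the natural choice of Remark~\ref{remtksh2} has this property), and that one must invoke the slightly stronger fact that the whole \emph{support}, not merely the leading monomial, of the canonical representative $A$ lies inside $\{M_i \mid i\in\mathcal{J}\}$ — but this is forced by the definition of $L$. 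All the substance of the bound is carried by Theorem~\ref{thenewbound}, and this proposition is just the translation of that bound into the statement that the construction $\tilde{E}_{imp}(\delta)$ does what its name promises.
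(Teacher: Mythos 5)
Your proof is correct and is exactly the argument the paper has in mind --- indeed the paper states this proposition without proof, treating it as an immediate consequence of Theorem~\ref{thenewbound} and Definition~\ref{defimpcode}, since $\tilde{\sigma}(i)$ depends only on the index $i$ and the chosen $v_i$ and therefore bounds the weight of every codeword whose canonical representative has leading monomial $M_i$. The two points you flag --- that the whole support of the canonical representative lies in $\{M_i\mid\tilde{\sigma}(i)\geq\delta\}$, and that each $v_i$ must satisfy $0\leq v_i<i$ for the theorem to apply --- are precisely the details worth making explicit.
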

The above improved code construction is in
the spirit of Feng and Rao's work. When improved codes are constructed on the basis of the Feng-Rao bound, Theorem~\ref{theseven}, rather than on the basis of the improved bound of the present paper, Theorem~\ref{thenewbound}, the notation used is $\tilde{E}(\delta )$ (see~\cite[Def.\ 4.38]{bookAG}).  
In Section~\ref{secfourandaquarter} we shall see that
one can sometimes derive even further improved codes from Theorem~\ref{thenewbound} than $\tilde{E}_{imp}(\delta)$.\\

We conclude this section by noting that in a straight forward manner one can enhance the above bound to deal
also with generalised Hamming weights. We postpone the discussion of
the details 
to Section~\ref{secfourandahalf}.
\section{Generalised $C_{ab}$ polynomials}\label{secthree}
As mentioned in the previous section good candidates for our new bound
are affine variety codes where the order domain condition (C1) is
satisfied, but the order domain condition (C2) is not. A particular simple
class of curves that satisfy the order domain conditions are the
well-known $C_{ab}$ curves. They were introduced by Miura 
in~\cite{miura1993algebraic,miura1,miura1998linear} to facilitate the
use of the Feng-Rao bound for dual codes. In this section we introduce
generalised $C_{ab}$ polynomials which corresponds to allowing the same
weight to occur more than once in the footprint (condition (C2)). It should be stressed that we make no assumption that generalised $C_{ab}$ polynomials are irreducible as it has no implication for our analysis.\\
From~\cite[App.\
B and the lemma
at p.\ 1416]{miura1998linear} we have a complete characterisation
of $C_{ab}$ curves. We shall adapt the description in~\cite{matsumoto1998c_ab}
which is an English translation of Miura's results. From~\cite[Th.\
1]{matsumoto1998c_ab} we have:
\begin{theorem}\label{thetksh3}
Let $\bar{k}$ be the algebraic closure of a perfect field $k$, ${\mathcal{X}}
\subseteq \bar{k}^2$ be a possibly reducible affine algebraic set
defined over $k$, $x,y$ the coordinate of the affine plane
$\bar{k}^2$, and $a$, $b$ relatively prime positive integers. The
following two conditions are equivalent:
\begin{itemize}
\item ${\mathcal{X}}$ is an absolutely irreducible algebraic curve with exactly
  one $k$ rational place $Q$ at infinity, and the pole divisors of $x$
  and $y$ are $bQ$ and $aQ$, respectively.
\item ${\mathcal{X}}$ is defined by a bivariate polynomial of the form
\begin{equation}
\alpha_{a,0}x^a+\alpha_{0,b}y^b+\sum_{ib+ja<ab}\alpha_{i,j}x^iy^j,\label{eqryutaroh1}
\end{equation}
where $\alpha_{i,j} \in k$ for all $i,j$ and $\alpha_{a,0}$,
$\alpha_{0,b}$ are non-zero.
\end{itemize}
\end{theorem}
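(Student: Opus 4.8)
The plan is to prove the two implications separately, in both cases relying on the weighted-degree function $w$ on $\bar k[x,y]$ determined by $w(x)=b$ and $w(y)=a$. The decisive elementary fact, used repeatedly, is that since $\gcd(a,b)=1$ the monomials $x^iy^j$ with $0\le j<b$ have pairwise distinct $w$-weights, while $x^a$ and $y^b$ are the only monomials of weight $ab$.

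For the implication ``(\ref{eqryutaroh1}) $\Rightarrow$ geometric description'' I would argue as follows. After scaling, $F$ is monic in $y$ of degree $b$, so $R:=\bar k[x,y]/(F)$ is free over $\bar k[x]$ on $1,y,\dots,y^{b-1}$ and every element of $R$ has a unique expansion in the ``standard'' monomials $x^iy^j$, $0\le j<b$; let $\rho(f)$ be the largest $w$-weight of a standard monomial occurring in $f$. Reducing $y^b$ modulo $F$ replaces it by $-\alpha_{0,b}^{-1}\alpha_{a,0}x^a$ plus $w$-lighter terms, and combining this with the distinctness of standard weights shows $\rho$ is multiplicative: $\rho(fg)=\rho(f)+\rho(g)$. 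This has two consequences: $R$ is an integral domain, so $F$ is absolutely irreducible and $\mathcal{X}$ is an absolutely irreducible affine plane curve; and $-\rho$ extends to a discrete valuation $v_Q$ of $\bar k(x,y)$ with $v_Q(x)=-b$, $v_Q(y)=-a$ and residue field $\bar k$ — i.e.\ to a place $Q$ of degree one (this extension step is the order-domain lemma for $C_{ab}$ polynomials; cf.\ \cite{bookAG}). Since $F$ is, up to scalar, the minimal polynomial of $y$ over $\bar k(x)$ (Gauss), one has $[\bar k(x,y):\bar k(x)]=\deg_yF=b$ and symmetrically $[\bar k(x,y):\bar k(y)]=\deg_xF=a$, so the pole divisors of $x$ and $y$ on the smooth model have degrees $b$ and $a$; since $v_Q(x)=-b$ and $v_Q(y)=-a$ these divisors are $\ge bQ$ and $\ge aQ$, hence equal $bQ$ and $aQ$. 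Every place of $\bar k(x,y)$ lying ``at infinity'' is a pole of $x$ or of $y$ and so equals $Q$, which establishes the uniqueness of $Q$. Finally $Q$ is fixed by $\mathrm{Gal}(\bar k/k)$, and since $k$ is perfect the constant-field extension is unramified, so $Q$ descends to a degree-one place of $k(x,y)$: it is $k$-rational.

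For the converse I would start from the observation that $\mathcal{X}$, being absolutely irreducible and defined over $k$, has $I(\mathcal{X})=(F)$ for an absolutely irreducible $F\in k[x,y]$. From $\mathrm{div}_\infty(x)=bQ$ and $\mathrm{div}_\infty(y)=aQ$ (divisors of degrees $b$ and $a$ on the smooth model, as $Q$ has degree one) we get $[\bar k(x,y):\bar k(x)]=b$ and $[\bar k(x,y):\bar k(y)]=a$. Because the only pole of $y$ is $Q$, which lies over $x=\infty$, the element $y$ is integral over $\bar k[x]$, so its (monic, degree-$b$) minimal polynomial over $\bar k(x)$ has coefficients in $\bar k[x]$ and generates $I(\mathcal{X})$ over $\bar k$; thus $F=\alpha_{0,b}y^b+c_{b-1}(x)y^{b-1}+\cdots+c_0(x)$ with $\alpha_{0,b}\in\bar k^\ast$, each $c_i(x)\in\bar k[x]$, and $\deg_xc_0\le a$. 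Applying $-v_Q$ to the relation $F(x,y)=0$, the quantities $b\deg c_i+ia=-v_Q(c_i(x)y^i)$ (for $c_i\ne0$) together with $ab=-v_Q(y^b)$ are pairwise distinct, with a single exception: the $y^b$-term and the $c_0$-term take the same value exactly when $\deg c_0=a$, and no other coincidence is possible because any such would force $b\mid i$ for some $1\le i\le b-1$. Since the largest pole order in the vanishing sum $F(x,y)=0$ must be attained at least twice, we are forced to have $\deg c_0=a$ — so that the monomial $x^a$ occurs in $F$ with a nonzero coefficient $\alpha_{a,0}$ — together with $b\deg c_i+ia\le ab$ for every $i$, equality holding only for $x^a$ and $y^b$. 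Reading off the monomials of $F$ now yields precisely the shape (\ref{eqryutaroh1}).

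The hard part is the one step that is not bookkeeping: in the first direction, upgrading the combinatorial weight function $\rho$ to a genuine discrete valuation $v_Q$ of the function field — equivalently, the statement that a (generalised) $C_{ab}$ polynomial defines an order domain. This is exactly the place where the coprimality $\gcd(a,b)=1$ is indispensable. In the reverse direction there is no analogue of this difficulty, since $v_Q$ is simply the valuation attached to the given place $Q$; what remains there is only the routine — but likewise $\gcd$-sensitive — balancing of pole orders at $Q$.
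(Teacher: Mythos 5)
The paper does not actually prove Theorem~\ref{thetksh3}: it imports the statement verbatim from Matsumoto's English rendering of Miura's work (``From [Th.\ 1, matsumoto1998c\_ab] we have:''), so there is no internal proof to compare yours against. Judged on its own, your argument is correct and is essentially the standard Miura-style proof. In the direction from the equation to the geometry, the chain ``standard monomials have pairwise distinct weights $\Rightarrow$ $\rho$ is multiplicative $\Rightarrow$ $R$ is a domain and $-\rho$ induces a degree-one place $Q$ with $v_Q(x)=-b$, $v_Q(y)=-a$'' is exactly the order-domain mechanism the paper itself relies on (conditions (C1)--(C2) of Definition~\ref{defupher}); you correctly identify the extension of $-\rho$ to a discrete valuation as the crux, and a full write-up should spell out that the value group is all of $\mathbb{Z}$ precisely because $\gcd(a,b)=1$ and that the residue field is $\bar{k}$ because the constant field is algebraically closed, after which $\deg \mathrm{div}_\infty(x)=[\bar{k}(x,y):\bar{k}(x)]=b$ pins down $\mathrm{div}_\infty(x)=bQ$. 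In the converse direction your bookkeeping is sound: integrality of $y$ over $\bar{k}[x]$ gives the monic degree-$b$ minimal polynomial with polynomial coefficients, the ultrametric ``largest pole order must be attained twice'' argument forces $\deg c_0=a$, and coprimality rules out any other coincidence of the values $b\deg c_i+ia$, which yields exactly the support condition $ib+ja<ab$ for the remaining monomials. Two small points worth tightening in a final version: justify $\deg_x c_i\le a$ (either from $[\bar{k}(x,y):\bar{k}(y)]=a$ via Gauss applied in the variable $x$, or by noting that any term of pole order exceeding $ab$ would be the unique maximum in the vanishing sum), and make explicit the descent argument showing that the unique place at infinity of $\bar{k}(x,y)$ lies over a degree-one place of $k(x,y)$, which is where perfectness of $k$ enters.
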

The definition of $C_{ab}$ curves given in the literature is that
of~(\ref{eqryutaroh1}). We recall the
following result from~\cite{miura1998linear}. We adapt the description
from~\cite[Cor.\ 3]{matsumoto1998c_ab}.
\begin{proposition}\label{propquotient}
Let $F(X,Y) \in k[X,Y]$ be a polynomial of the
form~(\ref{eqryutaroh1}), $Q$ a unique place at infinity of the
$C_{ab}$ curve defined by $F(X,Y)$. Then 
$$\{X^iY^j+\langle F(X,Y)\rangle \mid 
0 \leq i \leq a-1, 0 \leq j
\}$$
is a $k$-basis for $k[X,Y]/\langle F(X,Y)\rangle$ and the elements in the
basis have pairwise distinct discrete valuations at $Q$. If the
$C_{ab}$ curve is non-singular, then 
$$k[X,Y]/\langle F(X,Y) \rangle
={\mathcal{L}}( \infty Q)$$ 
and a basis of ${\mathcal{L}}(mQ)$ is
$$\{X^iY^j+\langle F(X,Y)\rangle \mid  0 \leq i \leq a-1, 
0 \leq j, 
ai+bj \leq m\}$$
for any non-negative integer $m$.
\end{proposition}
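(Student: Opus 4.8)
The plan is to prove the two assertions by different tools: the basis property and the distinct valuations come from the footprint bound and a $\gcd$ computation, while the identification with $\mathcal{L}(\infty Q)$ requires function field theory. Throughout I use that, by Theorem~\ref{thetksh3}, a polynomial of the form~(\ref{eqryutaroh1}) with the stated place-at-infinity condition defines an absolutely irreducible curve, so $R:=k[X,Y]/\langle F\rangle = k[x,y]$ is a domain sitting inside the function field $K=k(x,y)$, where the discrete valuation $v_Q$ lives. First I would fix the weighted degree lexicographic ordering $\prec_w$ given by $w(X)=b$, $w(Y)=a$ and $X\succ_{\mathrm{lex}}Y$. Under this weighting the two monomials $X^a$ and $Y^b$ of $F$ both have weight $ab$, whereas the constraint $ib+ja<ab$ forces every remaining monomial of $F$ to have strictly smaller weight; the lexicographic tie-break then selects $\mathrm{lm}(F)=X^a$, using $\alpha_{a,0}\neq 0$. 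Since $\langle F\rangle$ is principal and $k[X,Y]$ is a domain, leading monomials multiply, so $\mathrm{lm}(hF)=\mathrm{lm}(h)\,X^a$ for every $h$; hence every leading monomial occurring in $\langle F\rangle$ is divisible by $X^a$, which is exactly the statement that $\{F\}$ is a Gr\"obner basis with leading term ideal $\langle X^a\rangle$. Therefore $\Delta_{\prec_w}(\langle F\rangle)=\{X^iY^j \mid 0\leq i\leq a-1,\ 0\leq j\}$, and Proposition~\ref{probasis} yields that these monomials form a $k$-basis of $R$, establishing the first assertion.

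For the distinct valuations, the hypotheses on the pole divisors give $v_Q(x)=-b$ and $v_Q(y)=-a$, so the pole order of a basis monomial is $-v_Q(X^iY^j)=bi+aj$. If two basis monomials shared a pole order, then $b(i_1-i_2)=a(j_2-j_1)$ with $|i_1-i_2|\leq a-1$; since $\gcd(a,b)=1$ this forces $a\mid(i_1-i_2)$, hence $i_1=i_2$ and $j_1=j_2$. Thus the basis elements have pairwise distinct valuations at $Q$, which also re-proves their linear independence, as a nontrivial $k$-combination then has valuation equal to that of its unique most negative term.

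It remains to treat the non-singular case. Writing $R=k[x,y]$, I would note that $F/\alpha_{0,b}$ is monic of degree $b$ in $y$ over $k[x]$ (the constraint $ib+ja<ab$ keeps every other $y$-power below $b$), so $y$ is integral over $k[x]$ and $R$ is a module-finite extension of $k[x]$ with fraction field $K$. The key identification is that $\mathcal{L}(\infty Q)=\bigcup_{m\geq 0}\mathcal{L}(mQ)$ is the integral closure of $k[x]$ in $K$: because $x$ has its only pole at $Q$, the place $Q$ is the unique place of $K$ above the infinite place of $k(x)$, so $\mathcal{L}(\infty Q)=\bigcap_{P\neq Q}\mathcal{O}_P$ is exactly the intersection of the valuation rings at the places where $x$ is regular, i.e.\ the integral closure of $k[x]$. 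As $R$ is integral over $k[x]$ with the same fraction field, this is also the integral closure of $R$. Non-singularity of the curve makes $R$ normal, hence integrally closed in $K$, giving $R=\mathcal{L}(\infty Q)$. Finally, for $f\in R$ expanded in the basis as $f=\sum c_{ij}X^iY^j$, the distinct valuations give $-v_Q(f)=\max\{bi+aj \mid c_{ij}\neq 0\}$, so $f\in\mathcal{L}(mQ)$ precisely when every occurring monomial has pole order $bi+aj\leq m$; this identifies the basis monomials of pole order at most $m$ as a basis of $\mathcal{L}(mQ)$, as stated.

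I expect the main obstacle to be the reverse inclusion $\mathcal{L}(\infty Q)\subseteq R$ in the non-singular case. The inclusion $R\subseteq\mathcal{L}(\infty Q)$ is immediate since $x$ and $y$ have poles only at $Q$, but equality genuinely relies on two structural facts that must be assembled carefully: that a smooth affine curve has an integrally closed (normal) coordinate ring, and that $\mathcal{L}(\infty Q)$ coincides with the integral closure of $k[x]$ in $K$. Once these are in place, the footprint computation and the $\gcd(a,b)=1$ counting are routine.
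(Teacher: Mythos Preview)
Your proof is correct. Note, however, that the paper does not supply its own proof of this proposition: it is quoted as a known result from Miura~\cite{miura1998linear} (in the formulation of~\cite{matsumoto1998c_ab}), so there is nothing to compare your argument against directly.

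That said, your approach is exactly what one would expect and meshes well with the paper's toolkit. The basis assertion is a clean footprint computation: under $\prec_w$ with $w(X)=b$, $w(Y)=a$ you correctly identify $\mathrm{lm}(F)=X^a$, observe that a single polynomial generates its own leading-term ideal so $\{F\}$ is already a Gr\"obner basis, and invoke Proposition~\ref{probasis}. The distinct-valuation claim via $\gcd(a,b)=1$ and $0\le i\le a-1$ is the standard numerical semigroup argument. For the non-singular case, your route through integral closures --- $\mathcal{L}(\infty Q)=\bigcap_{P\neq Q}\mathcal{O}_P$ equals the integral closure of $k[x]$ in $K$ because $Q$ is the unique place over $\infty$, and a smooth affine curve has a normal coordinate ring --- is the canonical function-field proof and is how the cited references establish it. The final basis of $\mathcal{L}(mQ)$ then drops out of the pole-order filtration because the valuations are pairwise distinct.

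One small observation: your pole-order formula $-v_Q(x^iy^j)=bi+aj$ is the one consistent with Theorem~\ref{thetksh3}; the condition ``$ai+bj\le m$'' printed in the proposition appears to have $a$ and $b$ interchanged.
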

Let $w(X)$ and $w(Y)$, respectively, be minus the discrete valuation of
$x$ at $Q$ and minus the discrete valuation of $y$ at $Q$,
respectively. Consider the corresponding weighted degree lexicographic
ordering with $X=X_1$ and $Y=X_2$. If we combine (\ref{eqryutaroh1})
with the first part of Proposition~\ref{propquotient}
we see that
$C_{ab}$ curves satisfy the order domain conditions. Observe, that we
can consider the related affine variety codes $C(I,L)$ and $C^\perp(I,L)$ regardless
of the curve being non-singular or not. This point of view is taken
in~\cite[Sec.\ 4.2]{handbook}. If the curve is non-singular the
corresponding affine variety code description does not have an
algebraic geometric code counterpart. We now introduce generalised
$C_{ab}$ polynomials.
\begin{definition}\label{defpab}
Let $w(X)=\frac{b}{\gcd (a,b)}$ and $w(Y)=\frac{a}{\gcd (a,b)}$ where
$a$ and $b$ are two different positive integers. Given a field $k$, 
let  $F(X,Y)=X^a+\alpha Y^b + R(X,Y)\subseteq k[X,Y]$, $\alpha \in k
\backslash \{ 0 \}$, 
be such that all monomials in the support of $R$ have smaller weight
than $w(X^a)=w(Y^b)=\frac{ab}{\gcd(a,b)}$. Then $F(X,Y)$ is called a generalized $C_{ab}$ polynomial.
\end{definition}
Miura in~\cite[Sec.\ 4.1.4]{miura1993algebraic} treated the curves related to irreducible generalized $C_{ab}$ polynomials. Besides that we do not require the generalized $C_{ab}$ polynomials to be irreducible, our point of view is different from Miura's as we will use for the code construction the algebra ${\mathbb{F}}_q[X,Y]/\langle F(X,Y)\rangle$. For generalized $C_{ab}$ polynomials this algebra does not in general equal a space ${\mathcal{L}}(m_1P_1+\cdots +m_sP_s)$, $P_1, \ldots, P_s$ being rational places. We mention that the variations of $C_{ab}$ curves considered by Feng and Rao in~\cite{FR1} is different from Definition~\ref{defpab}.\\

For the code construction we would like to have
generalised $C_{ab}$ 
polynomials with many zeros and at the same time to  have a variety of
possible $a,b$ to choose from, as these parameters turn out to play a crucial role
in our bound for the minimum distance. As we shall now demonstrate
there is a simple technique for deriving this when the field under
consideration is not prime. The situation is in
contrast to $C_{ab}$ curves for which it is only known how to get many
points for restricted classes of $a$ and $b$. Our method builds on
ideas from~\cite{salazar} and \cite[Sec.\ 5]{miura1993algebraic}.\\
Let $p$ be a prime power and $q=p^m$ where $m\geq 2$ is an
integer. The technique that we shall employ involves letting $F(X,Y)=G(X)-H(Y)$ where both $G$ and
$H$ are  $({\mathbb{F}}_{p^m},{\mathbb{F}}_p)$-polynomials. 
\begin{definition}\label{deftksh4}
Let $m$ be an integer, $m \geq 2$. A polynomial $F(X) \in
{\mathbb{F}}_{p^m}[X]$ is called an
$({\mathbb{F}}_{p^m},{\mathbb{F}}_p)$-polynomial if $F(\gamma ) \in
{\mathbb{F}}_p$ holds for all $\gamma \in {\mathbb{F}}_{p^m}$.
\end{definition}
An obvious characterisation of
$({\mathbb{F}}_{p^m},{\mathbb{F}}_p)$-polynomials is that
$F(X)=(X^{p^m}-X)Q(X)+F^\prime (X)$, where $F^\prime(X)$ is an
$({\mathbb{F}}_{p^m},{\mathbb{F}}_p)$-polynomial of degree less than
$p^m$. Here, we used the convention that $\deg (0) =-\infty$. By Fermat's little theorem the set of
$({\mathbb{F}}_{p^m},{\mathbb{F}}_p)$-polynomials of degree less than
$p^m$ constitutes a vector space over ${\mathbb{F}}_p$. 
Clearly, one could derive a basis by Lagrange
interpolation. For our purpose, however, it is interesting to know what are the possible degrees of the polynomials in the vector space.
\begin{proposition}\label{propduderdu}
Let $C_{i_1}, \ldots , C_{i_t}$ be the different cyclotomic cosets modulo
$p^m-1$ (multiplication by $p$). Here, for $s=1, \ldots , t$ it is assumed that $i_s$ is chosen as the smallest element in the given
coset. For $s=1,
\ldots , t$,  $F_{i_s}(X)=\sum_{l \in C_{i_s}}X^l$, 
is an $({\mathbb{F}}_{p^m},{\mathbb{F}}_p)$-polynomial. Furthermore, the polynomial $X^{p^m-1}$ is an $({\mathbb{F}}_{p^m},{\mathbb{F}}_p)$-polynomial.\\
\end{proposition}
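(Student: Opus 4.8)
The plan is to verify directly that each proposed polynomial takes values in $\mathbb{F}_p$ when evaluated at an arbitrary $\gamma \in \mathbb{F}_{p^m}$, using the Frobenius automorphism $\varphi : x \mapsto x^p$ of $\mathbb{F}_{p^m}$ over $\mathbb{F}_p$ together with the standard fact that an element $\beta \in \mathbb{F}_{p^m}$ lies in $\mathbb{F}_p$ if and only if $\beta^p = \beta$. So the whole proof reduces to checking $F_{i_s}(\gamma)^p = F_{i_s}(\gamma)$ for every $\gamma$, and likewise for $X^{p^m-1}$.

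First I would treat $F_{i_s}(X) = \sum_{l \in C_{i_s}} X^l$. Fix $\gamma \in \mathbb{F}_{p^m}$ and compute, using additivity of the Frobenius (the freshman's dream in characteristic $p$),
$$
F_{i_s}(\gamma)^p = \Big( \sum_{l \in C_{i_s}} \gamma^l \Big)^p = \sum_{l \in C_{i_s}} \gamma^{pl}.
$$
Now observe that multiplication by $p$ permutes the cyclotomic coset $C_{i_s}$: by definition $C_{i_s} = \{ i_s, p\, i_s, p^2 i_s, \ldots \} \bmod (p^m-1)$, and since $\gamma^{p^m-1} = 1$ for $\gamma \neq 0$, the exponents $pl \bmod (p^m-1)$ as $l$ ranges over $C_{i_s}$ are just a reordering of the elements of $C_{i_s}$; hence $\sum_{l \in C_{i_s}} \gamma^{pl} = \sum_{l \in C_{i_s}} \gamma^{l} = F_{i_s}(\gamma)$. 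The case $\gamma = 0$ is immediate since every exponent $l$ in a cyclotomic coset modulo $p^m-1$ is positive (the coset of $0$ is $\{0\}$, which one may either include separately, giving the constant $1$, or exclude; in either reading $F_{i_s}(0) \in \mathbb{F}_p$ trivially). Thus $F_{i_s}(\gamma) \in \mathbb{F}_p$ for all $\gamma$, so $F_{i_s}$ is an $(\mathbb{F}_{p^m}, \mathbb{F}_p)$-polynomial.

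For the last assertion, note that $X^{p^m-1}$ evaluated at $\gamma \in \mathbb{F}_{p^m}$ gives $1$ if $\gamma \neq 0$ and $0$ if $\gamma = 0$, by Fermat's little theorem for $\mathbb{F}_{p^m}$; in both cases the value lies in $\{0,1\} \subseteq \mathbb{F}_p$, so $X^{p^m-1}$ is an $(\mathbb{F}_{p^m}, \mathbb{F}_p)$-polynomial. I do not anticipate a genuine obstacle here; the only point requiring a moment's care is the bookkeeping around whether the exponent $0$ (equivalently the coset $\{0\}$, or the "extra" monomial $X^{p^m-1}$ versus the constant term) is included, and confirming that the degrees $\deg F_{i_s} = i_s$ ranges over a basis — but for the stated claim it suffices that each listed polynomial maps $\mathbb{F}_{p^m}$ into $\mathbb{F}_p$, which is exactly the Frobenius-invariance computation above.
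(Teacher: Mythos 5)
Your proposal is correct and is essentially the paper's own argument: the paper's entire proof is the observation that $F^p=F$ for each listed polynomial, i.e.\ Frobenius-invariance of the evaluated values, which is exactly the computation you carry out (using that multiplication by $p$ permutes each cyclotomic coset and that $\gamma^{p^m-1}=1$ for $\gamma\neq 0$). You simply spell out the details, including the harmless edge cases $\gamma=0$ and the coset $\{0\}$, which the paper leaves implicit.
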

\begin{proof}
For all the polynomials $F$ in the proposition we have $F^p=F$.
\end{proof}

The set $\{F_{i_1}, \ldots , F_{i_t},X^{p^m-1}
\}$ contains two of the most prominent
$({\mathbb{F}}_{p^m},{\mathbb{F}}_p)$-polynomials, namely the trace polynomial
$F_1(X)=X^{p^{m-1}}+X^{p^{m-2}}+ \cdots + X^p+X$ and the
norm polynomial $X^{(p^m-1)/(p-1)}$. Note that the norm polynomial
equals $F_{(p^m-1)/(p-1)}$ if $p>2$. For $p=2$ it equals $X^{p^m-1}$. Observe also that
except for the constant polynomial $F_0=1$, the trace polynomial is of lowest possible degree.\\
From~\cite[Prop.\ 3.2]{hernando2011dimension} we have:
\begin{proposition}\label{prohern}
A polynomial $F(X)\in {\mathbb{F}}_{p^m}[X]$ is an
$({\mathbb{F}}_{p^m},{\mathbb{F}}_p)$-polynomial of degree less than
$p^m-1$ if and only if
$$F(X)= F_1(H(X)) {\mbox{ rem }} (X^{p^m-1}-1)$$
for some $H(X) \in {\mathbb{F}}_{p^m}[X]$.
\end{proposition}
From Proposition~\ref{propduderdu} and Proposition~\ref{prohern} we
conclude: 
\begin{corollary}
Let $F(X)$ be an $({\mathbb{F}}_{p^m},{\mathbb{F}}_p)$-polynomial of
degree less than $p^m$. Then $\deg(F) \in \{\deg(F_{i_1}), \ldots ,
\deg(F_{i_t}),p^m-1\}$.
\end{corollary}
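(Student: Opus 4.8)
The plan is to combine the structural result of Proposition~\ref{propduderdu}, which exhibits a concrete spanning set of $({\mathbb{F}}_{p^m},{\mathbb{F}}_p)$-polynomials of degree less than $p^m$, with the dimension count implicit in Proposition~\ref{prohern}. First I would observe that, by Fermat's little theorem, the space $V$ of $({\mathbb{F}}_{p^m},{\mathbb{F}}_p)$-polynomials of degree less than $p^m$ is an ${\mathbb{F}}_p$-vector space, and that reduction modulo $X^{p^m-1}-1$ maps the subspace of those of degree less than $p^m-1$ faithfully onto the image described in Proposition~\ref{prohern}; together with the norm/$X^{p^m-1}$ generator this pins down $\dim_{{\mathbb{F}}_p} V$. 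The key point is that the $t+1$ polynomials $F_{i_1},\ldots,F_{i_t},X^{p^m-1}$ are ${\mathbb{F}}_p$-linearly independent — their supports (the cyclotomic cosets together with the singleton $\{p^m-1\}$, which as a coset is $\{0\}$-translated, i.e.\ $p^m-1$ is fixed by multiplication by $p$ mod $p^m-1$, so it already is one of the $C_{i_s}$ when $p=2$, and otherwise $\{0\}$ and the full set partition the exponents) are pairwise disjoint, hence any nontrivial ${\mathbb{F}}_p$-combination has a nonzero coefficient on some monomial and is nonzero. Since there are exactly $t$ cyclotomic cosets modulo $p^m-1$ plus the exponent $0$ handled by the constant polynomial, a counting argument shows these span all of $V$.

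Concretely, the steps I would carry out are: (1) recall that every $F\in V$ can be written uniquely as $F=c\,X^{p^m-1}+F'$ with $c\in{\mathbb{F}}_p$ (reading off the leading coefficient forces $c\in\mathbb F_p$ since $X^{p^m-1}$ takes values in $\{0,1\}$) and $\deg F'<p^m-1$; (2) apply Proposition~\ref{prohern} to $F'$, so $F'=F_1(H(X))\bmod(X^{p^m-1}-1)$ for some $H$, and note that $F_1(H(X))\bmod(X^{p^m-1}-1)$ is an ${\mathbb{F}}_p$-linear combination of the $F_{i_s}$ — indeed $F_1(X^j)=\sum_{e=0}^{m-1}X^{jp^e\bmod(p^m-1)}$ is, up to collapsing repeated exponents, a sum of full cyclotomic cosets, hence an $\mathbb F_p$-multiple (by the coset-size-over-something, but in fact each such sum over a coset of size $d\mid m$ appears with multiplicity $m/d$, a nonzero element of $\mathbb F_p$ precisely when… ) — here I would be slightly careful and instead simply argue that $F_1(H(X))\bmod(X^{p^m-1}-1)$ lies in the $\mathbb F_p$-span of $\{F_{i_1},\dots,F_{i_t}\}$ because it is invariant under the Frobenius substitution $X\mapsto X^p$ modulo $X^{p^m-1}-1$, and the Frobenius-invariant polynomials of degree $<p^m-1$ are exactly that span; (3) conclude $\deg F'\in\{\deg F_{i_1},\dots,\deg F_{i_t}\}$ unless $F'=0$, and $\deg F\in\{\deg F_{i_1},\dots,\deg F_{i_t},p^m-1\}$ in all cases, by separating on whether $c=0$.

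The cleanest route avoids the multiplicity bookkeeping entirely: let $W\subseteq\mathbb F_{p^m}[X]$ be the space of polynomials of degree $<p^m-1$ fixed by the $\mathbb F_p$-linear map $\Phi:X^j\mapsto X^{jp\bmod(p^m-1)}$ (for $1\le j\le p^m-1$, with $X^0\mapsto X^0$ only if we include degree $0$; here we exclude the constants since we want degree $<p^m-1$ and will track constants separately). Then $\{F_{i_s}\}$ is visibly an $\mathbb F_p$-basis of $W^{\Phi}$, since the $\Phi$-orbits on monomials are exactly the cyclotomic cosets and summing over an orbit gives the indicated generator. Proposition~\ref{prohern} says $F\in V$ with $\deg F<p^m-1$ has $F=F_1(H(X))\bmod(X^{p^m-1}-1)$, and since $F_1\circ(\cdot)$ composed with reduction is easily checked to land in $W^{\Phi}$ (as $F_1(X)^p\equiv F_1(X)$ forces Frobenius-invariance), we get $F\in\mathrm{Span}_{\mathbb F_p}\{F_{i_s}\}$; adding back the norm direction $X^{p^m-1}$ and noting $\deg$ is additive-free (a nonzero $\mathbb F_p$-combination has degree equal to the largest $\deg F_{i_s}$ appearing, or $p^m-1$ if that coordinate is used) finishes the proof. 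The main obstacle is precisely this last observation — that the degree of a nonzero element of the span is \emph{attained} at one of the basis generators — which needs the disjointness of the supports of the $F_{i_s}$ and of $\{p^m-1\}$; I expect that verifying this disjointness (equivalently, that distinct cyclotomic cosets are disjoint and that $p^m-1$ is its own coset) is the one genuinely load-bearing elementary fact, everything else being assembly of the two quoted propositions.
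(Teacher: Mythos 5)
Your overall architecture (reduce to degree $<p^m-1$ via Proposition~\ref{prohern}, then argue the degree must be the top of a cyclotomic coset) is reasonable, but the central claim you lean on is false: the $\mathbb{F}_p$-span of $\{F_{i_1},\dots,F_{i_t},X^{p^m-1}\}$ is \emph{not} the whole space $V$ of $(\mathbb{F}_{p^m},\mathbb{F}_p)$-polynomials of degree less than $p^m$. Indeed $V$ is isomorphic, via the evaluation map, to the space of all functions $\mathbb{F}_{p^m}\to\mathbb{F}_p$, so $\dim_{\mathbb{F}_p}V=p^m$, which is far larger than $t+1$; concretely, for $\mathbb{F}_4/\mathbb{F}_2$ and $\alpha\in\mathbb{F}_4\setminus\mathbb{F}_2$ the polynomial $F_1(\alpha X)=\alpha^2X^2+\alpha X$ is an $(\mathbb{F}_4,\mathbb{F}_2)$-polynomial lying in no $\mathbb{F}_2$-combination of $1$, $F_1=X^2+X$ and $X^3$. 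The source of the error is a conflation of two different ``Frobenius invariances'': $F=F_1(H)\ \mathrm{rem}\ (X^{p^m-1}-1)$ satisfies $F^p\equiv F\pmod{X^{p^m-1}-1}$, i.e.\ its coefficients obey the \emph{semilinear} relation $c_{jp\ \mathrm{rem}\ (p^m-1)}=c_j^{\,p}$; it is in general not invariant under the coefficient-preserving substitution $X\mapsto X^p$, and its coefficients need not lie in $\mathbb{F}_p$. So neither your counting argument nor your identification of the fixed space $W^{\Phi}$ with the $\mathbb{F}_p$-span goes through.

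The corollary is nevertheless within reach of your setup, because it only concerns degrees, i.e.\ supports, and the semilinear relation $c_{jp\ \mathrm{rem}\ (p^m-1)}=c_j^{\,p}$ is exactly what is needed: it shows $c_j\neq0$ if and only if $c_{jp\ \mathrm{rem}\ (p^m-1)}\neq0$, so the support of $F$ is a union of cyclotomic cosets, whence $\deg F$ is the largest element of some coset $C_{i_s}$, which is precisely $\deg F_{i_s}$. Combined with the trivial observation that a degree equal to $p^m-1$ is already in the allowed set (which also makes your decomposition $F=cX^{p^m-1}+F'$ unnecessary --- and note that your parenthetical justification that $c\in\mathbb{F}_p$ does not follow merely from $X^{p^m-1}$ taking values in $\{0,1\}$), this yields the statement. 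Your ``load-bearing fact'' --- disjointness of the cosets so that the degree of the whole is attained on one piece --- is correct, but it must be applied to the support of $F$ itself rather than to a purported basis expansion.
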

We now return to the question of designing generalised $C_{ab}$
polynomials $F(X,Y)=G(X)-H(Y)$ with many zeros. One way of doing this is to
choose $G(X)$ to be the trace polynomial~\cite[Sec.\ 3]{salazar}. As is well-known this
polynomial maps exactly $p^{m-1}$ elements from ${\mathbb{F}}_{p^m}$ to each value in
${\mathbb{F}}_p$. Hence, such a polynomial $F(X,Y)$ must have $p^{2m-1}$
zeros. However, there are other polynomials in the above set with
properties similar to the trace polynomial.
\begin{proposition}\label{proptracelike}
Consider the polynomials $F_{i_s}$, $s=1, \ldots , t$ related to a field extension ${\mathbb{F}}_{p^m}/{\mathbb{F}}_p$, $m\geq 2$
(Proposition~\ref{propduderdu}). We have $\gcd (i_s ,p^m-1)=1$ if and only if
for each $\eta \in {\mathbb{F}}_p$ there exists exactly $p^{m-1}$
$\gamma \in {\mathbb{F}}_{p^m}$ such that $F_{i_s}(\gamma)=\eta$.
\end{proposition}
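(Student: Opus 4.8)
The plan is to reduce everything to two well-known facts: that the trace polynomial $F_1$ attains each value of $\mathbb{F}_p$ exactly $p^{m-1}$ times (recalled in the discussion preceding the proposition), and that, for any positive integer $k$, the $k$-th power map on the cyclic group $\mathbb{F}_{p^m}^*$ is exactly $\gcd(k,p^m-1)$-to-one onto its image. The bridge between them is the observation that $F_{i_s}(\gamma)$ depends on $\gamma$ only through $\gamma^{i_s}$: each exponent occurring in $F_{i_s}$ has the form $i_s p^j \bmod (p^m-1)$, so since $\gamma^{p^m-1}=1$ for $\gamma\neq 0$ we get $F_{i_s}(\gamma)=\sum_j(\gamma^{i_s})^{p^j}$ for every $\gamma\in\mathbb{F}_{p^m}$, and in particular $F_{i_s}(0)=0$ once $i_s\geq 1$. (The degenerate case $i_s=0$ gives the constant polynomial $F_0\equiv 1$, which is certainly not balanced, while $\gcd(0,p^m-1)=p^m-1>1$; so I would dispose of it at once and assume $i_s\geq 1$ throughout.)

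For the direction ``$\gcd(i_s,p^m-1)=1\Rightarrow$ balanced'', I would first record that in the coprime case the cyclotomic coset $C_{i_s}$ has exactly $m$ elements, because its cardinality is the multiplicative order of $p$ modulo $(p^m-1)/\gcd(i_s,p^m-1)=p^m-1$, which is $m$. Hence $i_s,i_sp,\dots,i_sp^{m-1}$ are pairwise incongruent modulo $p^m-1$, so $F_{i_s}(\gamma)=\sum_{j=0}^{m-1}(\gamma^{i_s})^{p^j}=F_1(\gamma^{i_s})$. As $\gcd(i_s,p^m-1)=1$, the map $\gamma\mapsto\gamma^{i_s}$ is a bijection of $\mathbb{F}_{p^m}$, and composing it with the balanced map $F_1$ keeps the number of preimages of each value equal to $p^{m-1}$.

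For the converse I would argue by contraposition through a divisibility argument. Suppose $F_{i_s}$ attains each value of $\mathbb{F}_p$ exactly $p^{m-1}$ times and fix $\eta\in\mathbb{F}_p\setminus\{0\}$ (possible since $p\geq 2$). Since $F_{i_s}(0)=0\neq\eta$, the set $\{\gamma:F_{i_s}(\gamma)=\eta\}$ lies in $\mathbb{F}_{p^m}^*$; because $F_{i_s}$ is a function of $\gamma^{i_s}$, this set is a union of fibres of the $i_s$-th power map, each of size $\gcd(i_s,p^m-1)$. Therefore $\gcd(i_s,p^m-1)$ divides $p^{m-1}$. But $\gcd(i_s,p^m-1)$ also divides $p^m-1$, hence is coprime to $p$, hence coprime to $p^{m-1}$; so $\gcd(i_s,p^m-1)=1$.

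I do not expect a real obstacle here: the argument is short once one spots that $F_{i_s}$ factors through the power map $\gamma\mapsto\gamma^{i_s}$. The only points requiring a little care are the cyclotomic-coset bookkeeping — in particular the fact that $|C_{i_s}|=m$ precisely in the coprime case, which is exactly what legitimises the identity $F_{i_s}=F_1\circ(\gamma\mapsto\gamma^{i_s})$ there — together with the trivial but easily overlooked case $i_s=0$.
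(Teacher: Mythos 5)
Your proof is correct and, for the direction $\gcd(i_s,p^m-1)=1\Rightarrow$ balanced, follows the paper's own argument: write $F_{i_s}(\gamma)=F_1(\gamma^{i_s})$ (which, as you rightly note, requires $\#C_{i_s}=m$, guaranteed by coprimality) and compose the balanced trace with the bijection $\gamma\mapsto\gamma^{i_s}$. The converse is explicitly left to the reader in the paper, and your divisibility argument --- the fibre over a nonzero $\eta$ avoids $0$ and is a union of fibres of the power map on ${\mathbb{F}}_{p^m}^*$, so $\gcd(i_s,p^m-1)$ divides $p^{m-1}$ while also dividing $p^m-1$, forcing it to equal $1$ --- fills that gap correctly.
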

\begin{proof}
We have $F_{i_s}(X)=F_1(X^{i_s}) {\mbox{ mod }} (X^{q^m-1}-1)$, where
$F_1(X)$ is the trace polynomial. Under the condition that $\gcd (i_s,
p^m-1)=1$ the monomial $X^{i_s}$ defines a
bijective map from ${\mathbb{F}}_{p^m} \rightarrow
{\mathbb{F}}_{p^m}$. This proves the ``only if'' part. We leave the ``if'' part for the reader.
\end{proof}
\begin{example}\label{extksh5}
Consider first the field extension ${\mathbb{F}}_8/{\mathbb{F}}_2$. The
non-trivial cyclotomic cosets modulo $7$ are $C_1=\{1,2,4\}$, and
$C_3=\{3,6,5\}$. From this we find the following  
$({\mathbb{F}}_8,{\mathbb{F}}_2)$-polynomials: $F_1(X)=X^4+X^2+X$,
$F_3(X)=X^6+X^5+X^3$, and $X^7$. The first two polynomials have the
property described in Proposition~\ref{proptracelike}. This is a
consequence of $7$ being a prime.\\
Consider next the field extension
${\mathbb{F}}_{16}/{\mathbb{F}}_2$. The non-trivial 
cyclotomic cosets modulo $15$ are $C_1=\{1,2,4,8\}$,
$C_3=\{3,6,12,9\}$, $C_5=\{5,10\}$, $C_7=\{7,14,13,11\}$. Hence, we
get the following  $({\mathbb{F}}_{16},{\mathbb{F}}_2)$-polynomials $F_1(X)=X^8+X^4+X^2+X$, $F_3(X)=X^{12}+X^9+X^6+X^3$,
$F_5(X)=X^{10}+X^5$, $F_7(X)=X^{14}+X^{13}+X^{11}+X^7$, and
$X^{15}$. The polynomials with the property described in
Proposition~\ref{proptracelike} are $F_1(X)$, $F_7(X)$.\\
Consider finally the field extension
${\mathbb{F}}_{32}/{\mathbb{F}}_2$. Observe that $31$ is a
prime. Hence, all the polynomials $F_{i_s}$, $i_s >0$,  have the property of
Proposition~\ref{proptracelike}. These are
$F_1(X)=X^{16}+X^8+X^4+X^2+X$, $F_3(X)=X^{24}+X^{17}+X^{12}+X^6+X^3$,
$F_5(X)=X^{20}+X^{18}+X^{10}+X^9+X^5$,
$F_7(X)=X^{28}+X^{25}+X^{19}+X^{14}+X^7$,
  $F_{11}(X)=X^{26}+X^{22}+X^{21}+X^{13}+X^{11}$, and $F_{15}(X)=X^{30}+X^{29}+X^{27}+X^{23}+X^{15}$.
\end{example}
\section{Codes from optimal generalised $C_{ab}$  polynomials}\label{secfour}
In
this section we consider codes from generalised $C_{ab}$ polynomials over ${\mathbb{F}}_q$ with $n=aq$
zeros. These polynomials are optimal in the sense that a bivariate polynomial with leading monomial $X^a$ can have no more
zeros over ${\mathbb{F}}_q$, as  is seen from the footprint bound Corollary~\ref{thefoot}.  Hence, we shall call
them {\em{optimal generalised $C_{ab}$ polynomials}}. We list a couple of
properties of optimal generalised $C_{ab}$ polynomials $F(X,Y)=X^a+\alpha Y^b+R(X,Y)$. It holds that $a<b$ and that 
$\{F(X,Y), Y^{q}-Y\}$ 
constitutes
a Gr\"{o}bner basis ${\mathcal{G}}$ for $I_q=\langle F(X,Y), X^q-X, Y^q-Y\rangle$ with respect to $\prec_w$. Here, and in the remaining part of the section, $\prec_w$ is the weighted degree lexicographic ordering in Definition~\ref{defwdeg} with weights as in Definition~\ref{defpab} and with $X=X_1$, $Y=X_2$. Furthermore, 
$\{M_1, \ldots , M_n\}=\Delta_{\prec_w}(I_{q})=\{ X^{i_1}Y^{i_2} \mid 0
\leq i_1 < a, 0 \leq i_2 < q\}$. Recall, that we assume $M_1 \prec_w \cdots \prec_w M_n$.\\
From the previous section we
have a simple method for constructing optimal generalised $C_{ab}$ polynomials over ${\mathbb{F}}_q={\mathbb{F}}_{p^m}$, where $p$ is a prime power and $m$ is an integer greater or equal to $2$. The method consists in letting
$F(X,Y)=G(X)-H(Y)$ where $G(X)$ is the trace polynomial and $H(Y)$ is
an arbitrary non-trivial
$({\mathbb{F}}_{p^m},{\mathbb{F}}_p)$-polynomial. We stress that the
results of the present section hold for any optimal generalised
$C_{ab}$ polynomial over arbitrary finite field ${\mathbb{F}}_q$.
The main result of the section is:
\begin{theorem} \label{teoalpha}
Let $I_{q}$ be defined from an optimal generalised $C_{ab}$ polynomial and
let the weights $w(X)$ and $w(Y)$ be as in Definition~\ref{defpab}. 
Consider $\vec{c} = \mathrm{ev}(\sum_{s=1}^i a_s M_s +
I_q),$ $a_s \in \bbF_q$, $s=1,\ldots,i$ and
$a_i \neq 0$. Write $M_i = X^{\alpha_1} Y^{\alpha_2}$ and
$T=\alpha_1 {\mbox{ rem }} w(Y)$.
We have that $$w_H(\vec{c}) \geq (a-\alpha_1)(q-\alpha_2) + \epsilon \mbox{ where}$$
$$\epsilon = \begin{cases}
0 				& \mbox{if }q-b \leq \alpha_2 < q \\
T(q-\alpha_2-b) 		& \mbox{if }0  \leq \alpha_1 \leq a-w(Y)\\
				& \mbox{and }0 \leq \alpha_2 < q-b \\
\alpha_1(q-\alpha_2-b) 		& \mbox{if }a-w(Y) < \alpha_1 < a \mbox{ and }\\
				& q-w(X)-\alpha_1\frac{b-w(X)}{a-w(Y)} < \alpha_2 < q-b \\ 
T(q-\alpha_2-w(X)) 		& \mbox{if }a-w(Y) < \alpha_1 < a \mbox{ and }\\
				& 0 \leq \alpha_2 \leq q-w(X)-\alpha_1\frac{b-w(X)}{a-w(Y)}. \end{cases}$$
\end{theorem}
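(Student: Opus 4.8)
The plan is to derive the stated bound as a specialisation of Theorem~\ref{thenewbound}, with the value of $v$ prescribed by Remark~\ref{remtksh2}, followed by an explicit footprint count. Write $A=\sum_{s=1}^i a_sM_s$ and $g=\gcd(a,b)$, so that $w(X)=b/g$, $w(Y)=a/g$, $a=g\,w(Y)$ and $b=g\,w(X)$. The weight of $X^{i_1}Y^{i_2}$ being $i_1w(X)+i_2w(Y)$ with $\gcd(w(X),w(Y))=1$, two monomials of $\Delta_{\prec_w}(I_q)=\{X^{i_1}Y^{i_2}\mid 0\le i_1<a,\ 0\le i_2<q\}$ have equal weight exactly when their exponent vectors differ by an integer multiple of $(w(Y),-w(X))$. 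Hence the monomials of weight $w(M_i)$ preceding $M_i$ are precisely $M_{i-k}=X^{\alpha_1-k\,w(Y)}Y^{\alpha_2+k\,w(X)}$, $k=1,\dots,v$, where $v=\min\{\lfloor\alpha_1/w(Y)\rfloor,\lfloor(q-1-\alpha_2)/w(X)\rfloor\}$. Since $\alpha_1<a=g\,w(Y)$ we get $v\le g-1$, so $t\,w(Y)\le a-w(Y)<a$ and $t\,w(X)\le b-w(X)<b$ for all $1\le t\le v$; this numerical fact drives the whole argument (and shows that for $g=1$ one has $v=0$, so Theorem~\ref{thenewbound} is just Theorem~\ref{theseven}).

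Next I would record how products reduce modulo $\mathcal{G}=\{F(X,Y),Y^q-Y\}$: rewriting $X^a$ as $-\alpha Y^b-R(X,Y)$ preserves the weighted degree --- this is condition (C1) --- while rewriting $Y^q$ as $Y$ strictly decreases it. Thus for $M_j=X^{\beta_1}Y^{\beta_2}\in\Delta_{\prec_w}(I_q)$ and $0\le\ell\le v$: if $\alpha_1-\ell\,w(Y)+\beta_1<a$ and $\alpha_2+\ell\,w(X)+\beta_2<q$ then ${\mbox{lm}}(M_{i-\ell}M_j {\mbox{ rem }} \mathcal{G})=X^{\alpha_1-\ell\,w(Y)+\beta_1}Y^{\alpha_2+\ell\,w(X)+\beta_2}$; if $a\le\alpha_1-\ell\,w(Y)+\beta_1<2a$ and $\alpha_2+\ell\,w(X)+\beta_2+b<q$ then ${\mbox{lm}}(M_{i-\ell}M_j {\mbox{ rem }} \mathcal{G})=X^{\alpha_1-\ell\,w(Y)+\beta_1-a}Y^{\alpha_2+\ell\,w(X)+\beta_2+b}$; otherwise that leading monomial has weight $<w(M_{i-\ell})+w(M_j)$. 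Because $t\,w(Y)<a$, whenever $M_{i-t}M_j$ is ``clean'' (first exponent $<a$) while $M_iM_j$ must be $F$-reduced, the monomial $M_{i-t}M_j$ has strictly larger $X$-exponent than ${\mbox{lm}}(M_iM_j {\mbox{ rem }} \mathcal{G})$ and $\prec_w$-dominates it --- the mechanism of~(\ref{eqsnabel}). From this one reads off, for each $t=1,\dots,v$, which of $(M_i,M_j)$, $(M_{i-t},M_j)$ are OWB resp.\ SOWB with respect to $\{1,\dots,i-t,i\}$ and which $K$ they contribute to $\mathcal{L}(t)$, and similarly that $(M_i,M_j)$ is SOWB with respect to $\{1,\dots,i-v-1,i\}$ in $\mathcal{L}(v+1)$, where no same-weight monomial interferes.

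It then remains to count. Every $\mathcal{L}(t)$, $t=1,\dots,v+1$, contains the rectangle $\{X^{\gamma_1}Y^{\gamma_2}\mid\alpha_1\le\gamma_1<a,\ \alpha_2\le\gamma_2<q\}$ of cardinality $(a-\alpha_1)(q-\alpha_2)$ coming from the clean products $M_iM_j$, which accounts for the first summand. For $\mathcal{L}(v+1)$ the $F$-reduced products $M_iM_j$ add the monomials $X^{\gamma_1}Y^{\gamma_2}$ with $0\le\gamma_1<\alpha_1$, $\alpha_2+b\le\gamma_2<q$, giving $\#\mathcal{L}(v+1)\ge(a-\alpha_1)(q-\alpha_2)+\alpha_1(q-\alpha_2-b)$ when $\alpha_2<q-b$ (and just the rectangle otherwise). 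For $1\le t\le v$, $\mathcal{L}(t)$ adds the monomials coming from the clean products $M_{i-t}M_j$ (first exponent in $[a-t\,w(Y),\alpha_1)$, second in $[\alpha_2+t\,w(X),q)$) together with those from the $F$-reduced $M_iM_j$ for which $M_{i-t}M_j$ is also $F$-reduced (first exponent in $[t\,w(Y),\alpha_1)$, second in $[\alpha_2+b,q)$); counting the union of these families, and using $t\le\lfloor\alpha_1/w(Y)\rfloor$ so that $\alpha_1-t\,w(Y)\ge T$ with $T=\alpha_1 {\mbox{ rem }} w(Y)$, gives $\#\mathcal{L}(t)\ge(a-\alpha_1)(q-\alpha_2)+\epsilon_t$ for a value $\epsilon_t$ determined by the position of $(\alpha_1,\alpha_2)$ relative to the lines $\alpha_1=a-w(Y)$, $\alpha_2=q-b$ and $\alpha_2=q-w(X)-\alpha_1\frac{b-w(X)}{a-w(Y)}$. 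Taking $\min\{\epsilon_1,\dots,\epsilon_{v+1}\}$ over the four resulting regions and simplifying reproduces the displayed piecewise value of $\epsilon$.

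The main obstacle is this last step. One must track the overlap of the two ``extra'' families --- the first-exponent intervals $[a-t\,w(Y),\alpha_1)$ and $[t\,w(Y),\alpha_1)$ nest in one order or the other according as $2t\,w(Y)\le a$ or $2t\,w(Y)\ge a$, while the second-exponent intervals $[\alpha_2+t\,w(X),q)$ and $[\alpha_2+b,q)$ always nest since $t\,w(X)<b$ --- and then check the resulting arithmetic inequalities, which in each region boil down to elementary statements (e.g.\ $(g-t)(t-1)\ge 0$ for $1\le t\le v$) and to the continuity of the formula across the boundary between the third and fourth cases. Everything preceding that step is a direct, if somewhat lengthy, combination of the footprint bound (Corollary~\ref{thefoot}) with Theorem~\ref{thenewbound}.
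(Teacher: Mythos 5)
Your overall route is the same as the paper's: decompose each $\mathcal{L}(t)$ into the rectangle $\{X^{\gamma_1}Y^{\gamma_2}\mid \alpha_1\le\gamma_1<a,\ \alpha_2\le\gamma_2<q\}$ plus two ``extra'' families (the paper's sets $B_1$, $B_2$, $B_3(\cdot,u)$ of Definition~\ref{defiB} and Lemma~\ref{propBset}), count, and minimise the concave parabola $f(u)=(w(Y)u-a+\alpha_1)(q-\alpha_2-w(X)u)$ over $u\in\{1,\dots,\gcd(a,b)\}$ by comparing its endpoints. Your identification of the clean/$F$-reduced dichotomy, the role of (C1), the mechanism of~(\ref{eqsnabel}), and the endpoint comparison $f(1)\lessgtr f(g)$ all check out and coincide with the paper's argument.

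There is, however, one concrete misstatement. For $1\le t\le v$ you claim that $\mathcal{L}(t)$ contains all $K=\lm(M_iM_j\ {\mbox{rem}}\ \mathcal{G})$ with first exponent in $[t\,w(Y),\alpha_1)$, i.e.\ you only impose that $M_{i-t}M_j$ be $F$-reduced. But the SOWB condition in $\mathcal{L}(t)$ is taken with respect to $\{1,\dots,i-t,i\}$, and this index set contains \emph{all} of $i-v,\dots,i-t$; hence $(M_i,M_j)$ fails to be SOWB as soon as \emph{any} $M_{i-z}M_j$ with $t\le z\le v$ stays clean, the binding case being $z=v$. The correct interval is therefore $[v\,w(Y),\alpha_1)=[\alpha_1-T,\alpha_1)$, independent of $t$ --- this is exactly the paper's $B_2$. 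Concretely, in the setting of Example~\ref{exB} with $M_i=X^8Y^3$ ($v=2$, $T=2$) and $M_j=X^4Y^{\beta_2}$, one has $\lm(M_iM_j\ {\mbox{rem}}\ \mathcal{G})=X^3Y^{15+\beta_2}$ and $M_{i-1}M_j$ is $F$-reduced, yet $M_{i-2}M_j=X^6Y^{11+\beta_2}$ is clean and $\succ_w X^3Y^{15+\beta_2}$, so $(M_i,M_j)$ is not SOWB with respect to $\{1,\dots,i-1,i\}$ and your family overcounts $\mathcal{L}(1)$. The slip is local and does not damage the theorem: you immediately weaken the count of this family to $T(q-\alpha_2-b)$ via $\alpha_1-t\,w(Y)\ge T$, which is precisely the cardinality of the corrected set $[\alpha_1-T,\alpha_1)\times[\alpha_2+b,q)$; and in the regime $a-w(Y)<\alpha_1<a$ the corrected set is contained in your other family $[a-t\,w(Y),\alpha_1)\times[\alpha_2+t\,w(X),q)$ (cf.\ Remark~\ref{remarkB}), so the union count reduces to $f(t)$ and the displayed $\epsilon$ follows as you describe. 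With that one interval corrected, the proof is complete and matches the paper's.
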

The proof of Theorem~\ref{teoalpha} calls for a definition
and some lemmas. Recall from Theorem~\ref{thenewbound} that we need to
estimate the size of the sets ${\mathcal{L}}(u)$, $u=1, \ldots ,
v+1$. For this purpose we introduce the following related sets:
\begin{definition}\label{defiB}
Let the notation be as in Definition~\ref{defpab} and Theorem~\ref{teoalpha}. For arbitrary $\alpha_1, \alpha_2$, $0 \leq
\alpha_1 <a$, $0\leq \alpha_2 <q$ we define
$$
\begin{array}{l}
B_1(X^{\alpha_1}Y^{\alpha_2}) = \{X^{\gamma_1} Y^{\gamma_2}\mid \alpha_1 \leq \gamma_1 < a, \alpha_2 \leq \gamma_2 < q\}, \\
\ \\
B_2(X^{\alpha_1}Y^{\alpha_2}) = \\
\ \\
{\mbox{ \ \ \ }} \left\{ \begin{array}{ll} 
				\bigg\{X^{\gamma_1}Y^{\gamma_2}\mid \alpha_1-T \leq \gamma_1 < \alpha_1,	\\
				{\mbox{ \ \hspace{4cm} }} \alpha_2+b \leq \gamma_2 < q \bigg\}        		& 	\begin{array}{l} {\mbox{\ if \ }} T \neq 0 \\
														                  	 {\mbox{\ and \ }} 0 \leq \alpha_2 < q-b
															\end{array}\\
				\ \\
				\emptyset 									&	{\mbox{\hspace{1.3mm} otherwise, \ }}
		\end{array} \right.\\
\ \\
\mbox{and for $u =1,\ldots,\gcd(a,b)$}\\
\ \\
B_3(X^{\alpha_1}Y^{\alpha_2},u) = \\
\ \\
{\mbox{ \ \ \ }} \left\{ \begin{array}{ll}
				\bigg\{X^{\gamma_1}Y^{\gamma_2}\mid a-w(Y)u \leq \gamma_1 < \alpha_1,\\
				{\mbox{ \ \hspace{0.7cm}}} \alpha_2+w(X)u \leq \gamma_2 < q\bigg\} 	& \begin{array}{l} 	{\mbox{ \ if \ }} a-w(Y) < \alpha_1 < a  \\
																{\mbox{ \ and \ }} 0 \leq \alpha_2 < q-b
													\end{array} \\
				\ \\
				\emptyset 								&			{\mbox{\ {\hspace{2.3mm}} otherwise.}}
		\end{array} \right.\\
\end{array}
$$
\end{definition}
\begin{remark} \label{remarkB}
Note that $w(X)\gcd(a,b) = b$ and $w(Y)\gcd(a,b) = a$, thus: $$B_3(X^{\alpha_1}Y^{\alpha_2},\gcd(a,b)) = \bigg\{X^{\gamma_1}Y^{\gamma_2} \mid 0 \leq \gamma_1 < \alpha_1,\alpha_2+b \leq \gamma_2 < q\bigg\}.$$
Furthermore for any choice of $u \in \{1,\ldots,\gcd(a,b)\}$ and $M \in \Delta_\prec(I_q)$ we have that $B_1(M) \cap B_2(M) = B_1(M) \cap B_3(M,u) = \emptyset$. If $B_3(M,u) \neq \emptyset$ then $B_2(M)\subseteq B_3(M,u)$.
\end{remark}
Before continuing with the lemmas we illustrate Definition \ref{defiB} with an example.
\begin{example} \label{exB}
Consider an optimal generalised $C_{ab}$ polynomial $F(X,Y)=X^9-Y^{12}+R(X,Y) \in {\mathbb{F}}_{27}[X,Y]$. We have $a=9$, $b=12$, $w(X)=4$, $w(Y)=3$, and $\Delta_{\prec_w}(I_q)=\{X^{i_1}Y^{i_2} \mid 0 \leq i_1 < 9, 0 \leq i_2 < 27\}$.\\
We first treat the case $X^{\alpha_1}Y^{\alpha_2}=X^5Y^{16}$. We have $\alpha_2 \geq q-b$, thus $B_2(X^{\alpha_1}Y^{\alpha_2})=B_3(X^{\alpha_1}Y^{\alpha_2},u)=\emptyset$ for any $u$. For an illustration see Figure \ref{figure1a}.\\
Now consider the case $X^{\alpha_1}Y^{\alpha_2}=X^5Y^{4}$. We have
$\alpha_2 < q-b$ and $T = 2 \neq 0$ and therefore
$B_2(X^{\alpha_1}Y^{\alpha_2})$ is non-empty. Because $T=2$, the width
of $B_2(X^{\alpha_1}Y^{\alpha_2})$ is $2$. Turning to
$B_3(X^{\alpha_1}Y^{\alpha_2},u)$ we see that $\alpha_1<a-w(Y)$ and
therefore the sets $B_3(X^{\alpha_1}Y^{\alpha_2},u)$'s are empty. See Figure \ref{figure1a} for an illustration.\\
\begin{figure}
\begin{center}
	\includegraphics[width=0.90\textwidth]{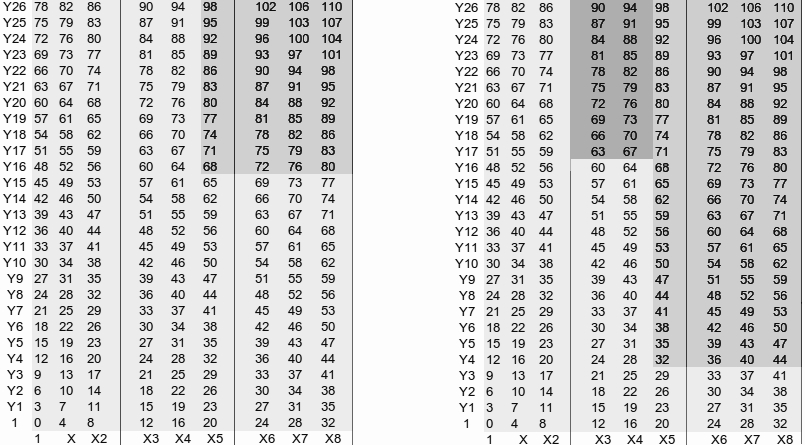}
\end{center}
\caption{Left part:   $X^{\alpha_1}Y^{\alpha_2}=X^5Y^{16}$. Only $B_1$ present. 
Right part: $X^{\alpha_1}Y^{\alpha_2}=X^5Y^{4}$. Light grey area is
$B_1$, medium grey area is  $B_2$. $B_3$ is not present.}
\label{figure1a}
\end{figure}
\noindent Consider next the case
$X^{\alpha_1}Y^{\alpha_2}=X^8Y^{3}$. We have  $\alpha_2 < q-b$ and
$\alpha_1 > a-w(Y)$ and therefore $B_2(X^{\alpha_1}Y^{\alpha_2})$ and
$B_3(X^{\alpha_1}Y^{\alpha_2},u)$ for $u=1,2,3$ are non-empty. The
situation regarding $B_2(X^{\alpha_1}Y^{\alpha_2})$ is similar to the
case $X^5Y^4$. The set $B_3(X^{\alpha_1}Y^{\alpha_2},u)$ can be
thought of as an improvement to $B_2(X^{\alpha_1}Y^{\alpha_2})$. We
see that $\gamma_1$ runs from $a-w(Y)u$ to $\alpha_1$ and $\gamma_2$ from $\alpha_2+w(X)u$ to $q$.
For an illustration see Figure \ref{figure2a}.\\
\begin{figure}[!h]
\begin{center}
	\includegraphics[width=0.90\textwidth]{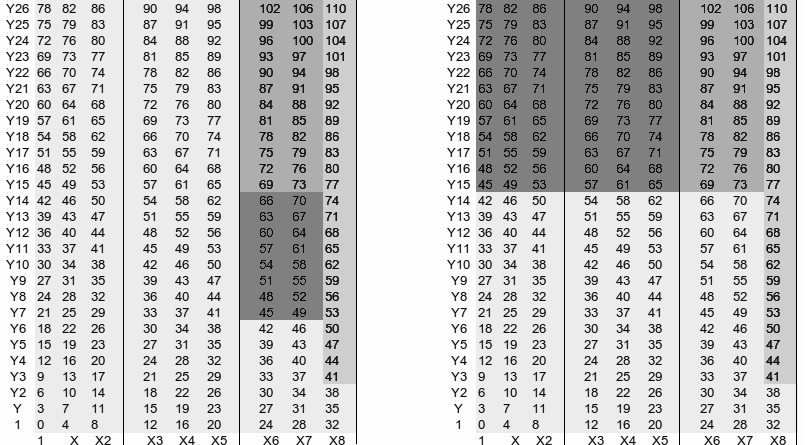}
\end{center}
\caption{In both parts $X^{\alpha_1}Y^{\alpha_2}=X^8Y^{3}$.
Left part: Light grey area is $B_1$, medium grey area is $B_2$, and
dark grey area plus medium grey area correspond to $B_3(X^{\alpha_1}Y^{\alpha_2},1)$. 
Right part: Light grey area is $B_1$, medium grey area is $B_2$, and
dark grey area plus medium grey area correspond to $B_3(X^{\alpha_1}Y^{\alpha_2},3)$.} 
\label{figure2a}
\end{figure}
\end{example}
\begin{lemma} \label{propBset}
Consider $\vec{c}={\mbox{ev}}(\sum_{s=1}^i a_s M_s +I_q)$,
$a_s \in {\mathbb{F}}_q$, $s=1, \ldots , i$, and $a_i \neq 0$. 
Let $M_i = X^{\alpha_1} Y^{\alpha_2}$ and $v= \alpha_1 {\mbox{ div }} w(Y)$ (that is, $v$ satisfies $\alpha_1 = w(Y)v+T$, where $T = \alpha_1 {\mbox{ rem }} w(Y)$). It holds that:
\begin{itemize}
\item $B_1(X^{\alpha_1}Y^{\alpha_2}) \subseteq {\mathcal{L}}(u) \mbox{ for }u =1,\ldots,v+1$.
\item $B_2(X^{\alpha_1}Y^{\alpha_2}) \subseteq {\mathcal{L}}(u) \mbox{ for }u =1,\ldots,v+1$.
\item $B_3(X^{\alpha_1}Y^{\alpha_2},\gcd(a,b)) \subseteq {\mathcal{L}}(v+1)$.
\item $B_3(X^{\alpha_1}Y^{\alpha_2},u) \subseteq {\mathcal{L}}(u) \mbox{ for }u =1,\ldots,v$.
\end{itemize}
\end{lemma}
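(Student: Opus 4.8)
I will verify the four inclusions by one uniform recipe. Given a monomial $K=X^{\gamma_1}Y^{\gamma_2}$ in the set under consideration, I exhibit an explicit $M_j\in\Delta_{\prec_w}(I_q)$ together with a witness — either the pair $(M_i,M_j)$ or the pair $(M_{i-u},M_j)$ — and check that (a) the product of the pair, reduced modulo $\mathcal{G}=\{F,Y^q-Y\}$, has leading monomial exactly $K$, and (b) the OWB or SOWB property demanded by the relevant branch in the definition of $\mathcal{L}(u)$. Two elementary facts about reduction will be used throughout. First, since by condition (C1) the only highest-weight monomials of $F=X^a+\alpha Y^b+R$ are $X^a$ and $Y^b$, of equal weight $aw(X)=bw(Y)$, and $\mathrm{lm}(F)=X^a$, replacing a multiple of $X^a$ by the corresponding multiple of $-\alpha Y^b-R$ leaves the weighted degree unchanged while introducing only strictly-lower-weight monomials, whereas reduction by $Y^q-Y$ strictly lowers the weighted degree; hence $\mathrm{lm}(G\,{\mbox{rem}}\,\mathcal{G})$ always has weighted degree at most $\deg_w(G)$. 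Second, the monomials of $\Delta_{\prec_w}(I_q)$ of a fixed weight lie on a line $\{X^{e+tw(Y)}Y^{f-tw(X)}\}$ in the exponent lattice and are ordered by $\prec_w$ by their $X$-exponent; consequently the $M_s$ with $s<i$ and $w(M_s)=w(M_i)$ are exactly the $X^{\alpha_1-jw(Y)}Y^{\alpha_2+jw(X)}$ lying in $\Delta_{\prec_w}(I_q)$, and when $\alpha_2<q-b$ these are precisely $M_{i-1},\ldots,M_{i-v}$ (because $vw(X)<b$), while $w(M_{i-v-1})<w(M_i)$.

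\textbf{Cases $B_1$ and $B_2$.} For $B_1$ put $M_j=X^{\gamma_1-\alpha_1}Y^{\gamma_2-\alpha_2}$, so $M_iM_j=K\in\Delta_{\prec_w}(I_q)$ and $\mathrm{lm}(M_iM_j\,{\mbox{rem}}\,\mathcal{G})=K$. For $s<i$: if $w(M_s)<w(M_i)$ then $\deg_w(M_sM_j)<\deg_w(M_iM_j)=w(K)$; if $w(M_s)=w(M_i)$ then $M_s=X^{\alpha_1-jw(Y)}Y^{\alpha_2+jw(X)}$ and $M_sM_j=X^{\gamma_1-jw(Y)}Y^{\gamma_2+jw(X)}$, which after any $Y^q-Y$ reduction drops in weight and otherwise is a monomial of weight $w(K)$ with strictly smaller $X$-exponent; so in all cases $\mathrm{lm}(M_sM_j\,{\mbox{rem}}\,\mathcal{G})\prec_w K$. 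Hence $(M_i,M_j)$ is OWB, and since $i$ is the maximal index in each of the sets $\{1,\ldots,i\}$, $\{1,\ldots,i-u,i\}$, $\{1,\ldots,i-v-1,i\}$, OWB of $(M_i,M_j)$ entails every SOWB condition occurring in the first branch of every $\mathcal{L}(u)$; thus $K\in\mathcal{L}(u)$ for $u=1,\ldots,v+1$. For $B_2$ the argument is identical with $M_j=X^{\gamma_1-\alpha_1+a}Y^{\gamma_2-\alpha_2-b}$ (legitimate since $T\neq0$ and $\alpha_2<q-b$): $M_iM_j=X^{\gamma_1+a}Y^{\gamma_2-b}$ reduces once by $F$ to $-\alpha X^{\gamma_1}Y^{\gamma_2}$ plus lower-weight terms, so $\mathrm{lm}(M_iM_j\,{\mbox{rem}}\,\mathcal{G})=K$, and each $M_sM_j$ with $w(M_s)=w(M_i)$ reduces (after one $F$-step) to a monomial of weight $w(K)$ with strictly smaller $X$-exponent, so $(M_i,M_j)$ is again OWB and $K\in\mathcal{L}(u)$ for $u=1,\ldots,v+1$.

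\textbf{Case $B_3(X^{\alpha_1}Y^{\alpha_2},\gcd(a,b))$.} Here the hypotheses force $v=\gcd(a,b)-1$. Put $M_j=X^{\gamma_1-\alpha_1+a}Y^{\gamma_2-\alpha_2-b}$; as in $B_2$, one $F$-reduction of $M_iM_j=X^{\gamma_1+a}Y^{\gamma_2-b}$ gives $\mathrm{lm}(M_iM_j\,{\mbox{rem}}\,\mathcal{G})=K$. This time $(M_i,M_j)$ need not be OWB: a test polynomial $H$ with leading monomial $M_i$ may carry $M_{i-j}=X^{\alpha_1-jw(Y)}Y^{\alpha_2+jw(X)}$ with $1\le j\le v$, and when $\gamma_1<jw(Y)$ the product $M_{i-j}M_j=X^{\gamma_1+a-jw(Y)}Y^{\gamma_2+jw(X)-b}$ is already reduced and has weight $w(K)$ but strictly \emph{larger} $X$-exponent than $K$, hence $\succ_w K$. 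However the index set $\{1,\ldots,i-v-1,i\}$ excludes exactly $M_{i-1},\ldots,M_{i-v}$, so for every $H$ admissible in the SOWB test (which then has $\mathrm{lm}(H)=M_i$) the remaining summands $M_sM_j$ with $s\le i-v-1$ have $\deg_w(M_sM_j)<w(K)$ and cannot affect the $K$-term; thus $(M_i,M_j)$ is SOWB with respect to $\{1,\ldots,i-v-1,i\}$ and $K\in\mathcal{L}(v+1)$.

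\textbf{Case $B_3(X^{\alpha_1}Y^{\alpha_2},u)$, $1\le u\le v$.} Put $M_j=X^{\gamma_1-\alpha_1+uw(Y)}Y^{\gamma_2-\alpha_2-uw(X)}$; using $M_{i-u}=X^{\alpha_1-uw(Y)}Y^{\alpha_2+uw(X)}\in\Delta_{\prec_w}(I_q)$ (valid since $\alpha_2+uw(X)<q$) one gets $M_{i-u}M_j=K$ outright, so $\mathrm{lm}(M_{i-u}M_j\,{\mbox{rem}}\,\mathcal{G})=K$. For the SOWB test of $(M_{i-u},M_j)$ with respect to $\{1,\ldots,i-u,i\}$ (note $\{1,\ldots,i\}=\{1,\ldots,i-1,i\}$ when $u=1$), the extra summands to control are $M_iM_j$ and $M_sM_j$ for $M_s=X^{\alpha_1-jw(Y)}Y^{\alpha_2+jw(X)}$ with $j>u$; one checks that $M_iM_j=X^{\gamma_1+uw(Y)}Y^{\gamma_2-uw(X)}$ (having $X$-exponent $\ge a$) reduces once by $F$ to a monomial of weight $w(K)$ with $X$-exponent $\gamma_1+uw(Y)-a<\gamma_1$ (using $uw(Y)<a$), and $M_sM_j=X^{\gamma_1-(j-u)w(Y)}Y^{\gamma_2+(j-u)w(X)}$ has $X$-exponent $<\gamma_1$, so both reduce to leading monomials $\prec_w K$ and do not interfere; hence $(M_{i-u},M_j)$ is SOWB with respect to the stated set and $K\in\mathcal{L}(u)$. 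The delicate point of the whole argument is precisely this last verification: one must be certain that for \emph{every} admissible $H$ (in particular those containing $M_i$ itself) no product $HM_j\,{\mbox{rem}}\,\mathcal{G}$ acquires a leading monomial $\succeq_w K$, which reduces to checking monomial by monomial that each forced trade $X^a\leftrightarrow Y^b$ lands on a point of the weight-$w(K)$ lattice line with $X$-exponent $<\gamma_1$; pinning down the exponent ranges so that exactly one $F$-reduction is needed and no $Y^q-Y$ reduction touches the top-weight part is where the hypotheses $a-w(Y)<\alpha_1<a$, $\alpha_2<q-b$ and $u\le v$ are consumed.
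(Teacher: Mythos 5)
Your proof is correct and follows essentially the same route as the paper: the same witnesses $M_j$ for each of $B_1$, $B_2$, $B_3$, the same case split on whether $w(M_{i'})<w(M_i)$ or $w(M_{i'})=w(M_i)$, and the same use of the fact that reduction modulo $F$ preserves the weight of the leading monomial. The only step you leave implicit that the paper spells out is, in the $B_2$ case, that the one-step $F$-reduction of $M_{i'}M_j$ for a same-weight $M_{i'}=X^{\alpha_1-zw(Y)}Y^{\alpha_2+zw(X)}$ is legitimate because $\gamma_1\geq\alpha_1-T=vw(Y)\geq zw(Y)$ forces the $X$-exponent $a+\gamma_1-zw(Y)$ to be at least $a$ — precisely the point that fails (and that you correctly flag) in the $B_3(X^{\alpha_1}Y^{\alpha_2},\gcd(a,b))$ case.
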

\begin{proof} \ \\
\noindent {\underline{$B_1(X^{\alpha_1}Y^{\alpha_2}) \subseteq
    {\mathcal{L}}(u)$ for $u =1, \ldots,v+1$:}}\\
Assume $M_l = X^{\gamma_1}Y^{\gamma_2} \in
B_1(X^{\alpha_1}Y^{\alpha_2})$. We have $\alpha_1 \leq \gamma_1 < a$
and $\alpha_2 \leq \gamma_2 < q$. Choosing $M_j =
X^{\gamma_1-\alpha_1}Y^{\gamma_2-\alpha_2}$ we get $\mbox{lm}(M_{i}M_j
{\mbox{ rem }} {\mathcal{G}}) = M_l$. Let $i' \in \{1,\ldots,i-1\}$, then by the
properties of a monomial ordering $M_{i'}M_j \prec_w M_iM_j$ holds. This
means that $(M_i,M_j)$ is SOWB with respect the set
$\{1,\ldots,i\}$. Thus $M_l \in {\mathcal{L}}(u)$ for $u =1,
\ldots,v+1$.\\

\noindent {\underline{$B_2(X^{\alpha_1}Y^{\alpha_2}) \subseteq {\mathcal{L}}(u)$ for $u =1, \ldots,v+1$:}}\\
If $T=0$ or $q-b \leq \alpha_2 < q$ then the result follows trivially.\\
Assume $T \neq 0$ and $0 \leq \alpha_2 < q-b$. Let $M_l = X^{\gamma_1}Y^{\gamma_2} \in B_2(X^{\alpha_1}Y^{\alpha_2})$. We have $\alpha_1-T \leq \gamma_1 < \alpha_1$ and $\alpha_2 + b \leq \gamma_2 < q$. Choosing $M_j = X^{\gamma_1-\alpha_1+a}Y^{\gamma_2-\alpha_2-b}$ (which belongs to $\Delta_{\prec_w}(I_q)$ by the definition of $B_2$) we get
$${\mbox{lm}}(M_{i}M_j {\mbox{\ rem\ }} {\mathcal{G}})={\mbox{lm}}(M_{i}M_j -X^{\gamma_1}Y^{\gamma_2-b}F(X,Y))=X^{\gamma_1}Y^{\gamma_2}.$$
We want to prove that $(M_i,M_j)$ is SOWB with respect the set
$\{1,\ldots,i\}$. We consider $M_{i'}$ with $i' \in
\{1,\ldots,i-1\}$. If $w(M_{i'}) < w(M_i)$ then the proof follows from $w(M_{i'}M_j) <  w(M_{i}M_j)$ using the fact that reducing modulo $F$ does not change the weight of the leading monomial. If $w(M_{i'}) = w(M_i)$ then there exists an integer $z$ with $\alpha_1 - zw(Y) \geq 0$ such that $M_{i'} = X^{\alpha_1 - zw(Y)}Y^{\alpha_2 + zw(Y)}$.  Therefore $\gamma_1 -zw(Y) \geq 0.$\\
Now $M_{i^\prime}M_j=X^{a+\gamma_1-zw(Y)}Y^{\gamma_2-b+zw(X)}$ and
therefore
$${\mbox{lm}}(M_{i^\prime}M_j {\mbox{ rem }} {\mathcal{G}}) = {\mbox{lm}}(M_{i^\prime}M_j - X^{\gamma_1-zw(Y)}Y^{\gamma_2-b+zw(X)}F(X,Y)) $$
$$= X^{\gamma_1-zw(Y)}Y^{\gamma_2+zw(X)} \prec_w X^{\gamma_1}Y^{\gamma_2}.$$
Again we employed the fact that reducing modulo $F$ does not change
the weight of the leading monomial. We conclude that
${\mbox{lm}}(M_{i^\prime}M_j {\mbox{\ rem\ }} {\mathcal{G}}) \prec_w
X^{\gamma_1}Y^{\gamma_2}$ and that $(M_i,M_j)$ is SOWB with respect
the set $\{1,\ldots,i\}$. Thus $M_l \in {\mathcal{L}}(u)$ for $u
=1, \ldots,v+1$.\\

\noindent {\underline{$B_3(X^{\alpha_1}Y^{\alpha_2},\gcd(a,b)) \subseteq {\mathcal{L}}(v+1)$:}}\\
If $0 \leq \alpha_1 \leq a-w(Y)$ or $q-b \leq \alpha_2 < q$ then the result follows trivially.\\
Assume $a-w(Y) < \alpha_1 < a$ and $0 \leq \alpha_2 < q-b$, then $v=\gcd (a,b)-1$. Let $M_l =
X^{\gamma_1}Y^{\gamma_2} \in
B_3(X^{\alpha_1}Y^{\alpha_2},\gcd(a,b))$. We have $0 \leq \gamma_1 <
\alpha_1$ and $\alpha_2+b \leq \gamma_2 < q$. Choosing $M_j =
X^{\gamma_1-\alpha_1+a}Y^{\gamma_2-\alpha_2-b}$ we get
$\mbox{lm}(M_iM_j \mbox{ rem } {\mathcal{G}}) = M_l$. We want to prove
that $(M_i,M_j)$ is SOWB with respect the set $\{1,\ldots,i-v-1\}$. We
consider $M_{i'}$ with $i' \in \{1,\ldots,i-1\}$. If $w(M_{i'}) <
w(M_i)$ the proof follows because $w(M_{i'}M_j) <
w(M_{i}M_j)$ using the fact that reducing modulo $F$ does not change
the weight of the leading monomial. As $v = \gcd(a,b)-1$ there 
does not exists any $i' \in \{1,\ldots,i-v-1,i\}$ such that $w(M_{i'}) = w(M_{i})$. From this it follows that $(M_i,M_j)$ is SOWB with respect the set $\{1,\ldots,i-v-1\}$ and thus $M_l \in {\mathcal{L}}(v+1)$.\\

\noindent {\underline{$B_3(X^{\alpha_1}Y^{\alpha_2},u) \subseteq {\mathcal{L}}(u)$ for $u =1, \ldots,v$:}}\\
If $q-b \leq \alpha_2 < q$ or $0 \leq \alpha_1 \leq a-w(Y)$ then the result follows trivially.\\
Assume $a-w(Y) < \alpha_1 < a$ and $0 \leq \alpha_2 < q-b$, then $v=\gcd (a,b)-1$. Let $M_l =
X^{\gamma_1}Y^{\gamma_2} \in B_3(X^{\alpha_1}Y^{\alpha_2},u)$. We have
$a-w(Y)u \leq \gamma_1 < \alpha_1$ and $\alpha_2+w(X)u \leq \gamma_2 <
q$. By the definition of $\prec_w$ and the form of
$\Delta_{\prec_w}(I_q)$  we have that
$M_{i-u}=X^{\alpha_1-w(Y)u}Y^{\alpha_2+w(X)u}$. Choosing $M_j =
X^{\gamma_1-\alpha_1+w(Y)u}Y^{\gamma_2-\alpha_2-w(Y)u}$ we get
$\mbox{lm}(M_{i-u}M_j {\mbox{ rem }} {\mathcal{G}}) = M_l$. Note that $M_{i-u}$
and $M_j$ are in $\Delta_{\prec_w}(I_q)$ because
$v=\gcd(a,b)-1$, $a-w(Y) < \alpha_1 < a$ and $0 \leq \alpha_2 <
q-b$. We want to prove that $(M_i,M_j)$ is SOWB with respect the set
$\{1,\ldots,i-u,i\}$. We consider $M_{i'}$ with $i' \in
\{1,\ldots,i-1\}$. If $w(M_{i'}) < w(M_i)$ then the proof follows from $w(M_{i'}M_j) <  w(M_{i}M_j)$ using the fact that reducing modulo $F$ does not change the weight of the leading monomial. The monomials $M_{i'}$ which satisfy $w(M_{i'}) =
w(M_{i-u})$ are $M_i$ and $M_{i-z}$ for $z=u,\ldots,v$. However, 
$M_iM_j {\mbox{ rem }} {\mathcal{G}} \prec_w M_{i-u}M_j {\mbox{ rem }} {\mathcal{G}}$
because $\gamma_1+w(Y)u > a$ and $M_{i-t}M_j \prec_w M_{i-u}M_j$ for
any $t=u+1,\ldots,v$ due to the properties of a monomial
ordering. From this it follows that $(M_i,M_j)$ is SOWB with respect
the set $\{1,\ldots,i-u,i\}$ and thus $M_l \in {\mathcal{L}}(u)$, for
$u=1, \ldots , v$.
\end{proof}
\begin{lemma} \label{corB}
Consider $\vec{c}={\mbox{ev}}(\sum_{s=1}^i a_s M_s +I_q)$,
$a_s \in {\mathbb{F}}_q$, $s=1, \ldots , i$, and $a_i \neq 0$. 
Write $M_i = X^{\alpha_1} Y^{\alpha_2}$. For $u =1,\ldots,v+1$, with $v=\alpha_1 {\mbox{ div }} w(Y)$, we have that:
$$B_1(X^{\alpha_1}Y^{\alpha_2}) \cup B_2(X^{\alpha_1}Y^{\alpha_2}) \subseteq {\mathcal{L}}(u),$$
$$B_1(X^{\alpha_1}Y^{\alpha_2}) \cup B_3(X^{\alpha_1}Y^{\alpha_2},u) \subseteq {\mathcal{L}}(u).$$
\end{lemma}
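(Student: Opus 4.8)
The plan is to obtain Lemma~\ref{corB} as an immediate consequence of Lemma~\ref{propBset}, using nothing beyond the trivial observation that a union of two subsets of ${\mathcal{L}}(u)$ is again a subset of ${\mathcal{L}}(u)$. So the whole argument is bookkeeping; the one point that needs a moment's care is aligning the index $v+1$ with $\gcd(a,b)$ in the statement concerning $B_3$.

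First I would dispose of the inclusion $B_1(X^{\alpha_1}Y^{\alpha_2}) \cup B_2(X^{\alpha_1}Y^{\alpha_2}) \subseteq {\mathcal{L}}(u)$. For every $u \in \{1,\ldots,v+1\}$, Lemma~\ref{propBset} already gives $B_1(X^{\alpha_1}Y^{\alpha_2}) \subseteq {\mathcal{L}}(u)$ and $B_2(X^{\alpha_1}Y^{\alpha_2}) \subseteq {\mathcal{L}}(u)$; taking the union finishes this half with no further work.

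Next I would treat $B_1(X^{\alpha_1}Y^{\alpha_2}) \cup B_3(X^{\alpha_1}Y^{\alpha_2},u) \subseteq {\mathcal{L}}(u)$. For $u \leq v$, Lemma~\ref{propBset} supplies both $B_1(X^{\alpha_1}Y^{\alpha_2}) \subseteq {\mathcal{L}}(u)$ and $B_3(X^{\alpha_1}Y^{\alpha_2},u) \subseteq {\mathcal{L}}(u)$ directly, so again the union works. The only remaining case is $u=v+1$, where I would split according to whether $B_3$ is empty: if $0 \leq \alpha_1 \leq a-w(Y)$ then $B_3(X^{\alpha_1}Y^{\alpha_2},v+1)=\emptyset$ by Definition~\ref{defiB}, and the claim collapses to $B_1(X^{\alpha_1}Y^{\alpha_2}) \subseteq {\mathcal{L}}(v+1)$, which is part of Lemma~\ref{propBset}; if instead $a-w(Y) < \alpha_1 < a$, then since $v = \alpha_1 {\mbox{ div }} w(Y)$ and $w(Y)\gcd(a,b)=a$ (Remark~\ref{remarkB}) one has $v=\gcd(a,b)-1$, hence $v+1=\gcd(a,b)$, so the inclusion $B_3(X^{\alpha_1}Y^{\alpha_2},\gcd(a,b)) \subseteq {\mathcal{L}}(v+1)$ from Lemma~\ref{propBset} applies and may be combined with $B_1(X^{\alpha_1}Y^{\alpha_2}) \subseteq {\mathcal{L}}(v+1)$.

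As for the main obstacle, there is none of substance: everything reduces to Lemma~\ref{propBset}. The single subtlety, which I would flag explicitly, is the identification $v+1=\gcd(a,b)$ in the last case; it rests on the hypothesis $a-w(Y)<\alpha_1<a$, which is exactly the condition under which $B_3$ is nonempty, so the two ends of the argument match up automatically.
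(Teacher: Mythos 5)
Your proposal is correct and follows essentially the same route as the paper, whose entire proof is the one-line remark that the lemma follows directly from Remark~\ref{remarkB} and Lemma~\ref{propBset}. You merely spell out the bookkeeping, including the one genuinely useful detail (that $a-w(Y)<\alpha_1<a$ forces $v+1=\gcd(a,b)$ via $w(Y)\gcd(a,b)=a$), which the paper leaves implicit.
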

\begin{proof}
The lemma follows directly from Remark \ref{remarkB} and Lemma~\ref{propBset}.
\end{proof}
It is not hard  to compute the cardinality of the sets $B_1$, $B_2$
and $B_3$. For $u =1,\ldots,\gcd(a,b)$, we have that:
$$\# B_1(X^{\alpha_1}Y^{\alpha_2}) = (a-\alpha_1)(q-\alpha_2),$$
$$\# B_2(X^{\alpha_1}Y^{\alpha_2}) = \begin{cases} 
\alpha_1(q-\alpha_2-b) &\mbox{ if }0 \leq \alpha_2 < q-b \\
0 & \mbox{ otherwise,} \end{cases}$$
$$\# B_3(X^{\alpha_1}Y^{\alpha_2},u) = \begin{cases} 
(w(Y)u-a+\alpha_1)(q-\alpha_2-w(X)u) & \mbox{ if }0 \leq \alpha_2 < q-b \mbox{ and } \\
				     & 	 a-w(Y) < \alpha_1 < a \\
0 & \mbox{ otherwise.} \end{cases}$$

Thus, for $u =1,\ldots,v+1$ by Lemma \ref{corB} we get:
$$\#{\mathcal{L}}(u) \geq (a-\alpha_1)(q-\alpha_2) +  
\begin{cases} 
\alpha_1(q-\alpha_2-b) 	& \mbox{ if }0 \leq \alpha_2 < q-b \\
0 			& \mbox{ otherwise} 
\end{cases}$$
And if $a-w(Y) < \alpha_1 < a$:
$$\# {\mathcal{L}}(u) \geq (a-\alpha_1)(q-\alpha_2) +  
\begin{cases} 
(w(Y)u-a+\alpha_1)(q-\alpha_2-w(X)u) 	& \mbox{ if }0 \leq \alpha_2 < q-b \\
0 					& \mbox{ otherwise.} 
\end{cases}$$
Now we can prove Theorem \ref{teoalpha}.
\begin{proof}[Proof of Theorem \ref{teoalpha}]
Let $v = \alpha_1 {\mbox{ div }} w(Y)$.
If $0 \leq \alpha_1 \leq a-w(Y)$ then 
we obtain  
\begin{eqnarray*}
w_H(\vec{c}) 
& \geq & \min \{\# {\mathcal{L}}(1), \ldots , \#  {\mathcal{L}}(v+1)\} \\
	     & \geq & (a-\alpha_1)(q-\alpha_2) +\\
&&  				
\begin{cases} 
				\alpha_1(q-\alpha_2-b) 	& \mbox{ if }0 \leq \alpha_2 < q-b \\
				0 					& \mbox{ otherwise.} 
				\end{cases} \\
\end{eqnarray*}
If $a-w(Y) < \alpha_1 < a$, then $v=\gcd (a,b)-1$ and we obtain 
\begin{eqnarray*}
w_H(\vec{c}) & \geq & \min\{  \#
{\mathcal{L}}(1),\ldots , \#
{\mathcal{L}}(v+1)  \}\\
	     & \geq & (a-\alpha_1)(q-\alpha_2) +\\
&&  
				\begin{cases} 
				\min\{
                                (w(Y)u-a+\alpha_1)(q-\alpha_2-w(X)u)
                                \mid u=1,\ldots ,v+1 \}	& \mbox{ if }0 \leq \alpha_2 < q-b \\
				0 					& \mbox{ otherwise.} 
				\end{cases} \\
\end{eqnarray*}
The function $f(u)=(w(Y)u-a+\alpha_1)(q-\alpha_2-w(X)u)$ is a concave parabola, thus we have minimum in $u=1$ or $u=v+1=\gcd(a,b)$. 
By inspection $f(1)=(w(Y)-a+\alpha_1)(q-\alpha_2-w(X))=T(q-\alpha_2-w(X))$  and 
$f(\gcd(a,b))=(w(Y)\gcd(a,b)-a+\alpha_1)(q-\alpha_2-w(X)\gcd(a,b))=\alpha_1(q-\alpha_2-b)$.
We therefore get the biimplication:
\begin{eqnarray*}
&&f(1)  \leq  f(\gcd(a,b)), \\
&\Updownarrow \\
&&\alpha_2  \leq  q-w(X)-\alpha_1\frac{b-w(X)}{a-w(Y)},
\end{eqnarray*}
and the theorem follows.
\end{proof}
\begin{remark}
If for codes from optimal generalised $C_{ab}$ polynomials rather than applying
Theorem~\ref{thenewbound} we apply the usual Feng-Rao bound (Theorem~\ref{theseven}) then the $\epsilon$ in
Theorem~\ref{teoalpha} should be replaced with:
$$\begin{cases}
0 				& \mbox{if }q-b \leq \alpha_2 < q \\
T(q-\alpha_2-b) 		& \mbox{and }0 \leq \alpha_2 < q-b. \\
\end{cases}$$
We see that our new bound improves the  Feng-Rao bound by
$$\begin{cases}
0 				& \mbox{if }q-b \leq \alpha_2 < q \\		
				& \mbox{or }0  \leq \alpha_1 \leq a-w(Y)\\
(\alpha_1-T)(q-\alpha_2-b) 	& \mbox{if }a-w(Y) < \alpha_1 < a \mbox{ and }\\
				& q-w(X)-\alpha_1\frac{b-w(X)}{a-w(Y)} < \alpha_2 < q-b \\ 
T(b-w(X)) 			& \mbox{if }a-w(Y) < \alpha_1 < a \mbox{ and }\\
				& 0 \leq \alpha_2 \leq q-w(X)-\alpha_1\frac{b-w(X)}{a-w(Y)}. \end{cases}$$
\end{remark}
\begin{remark}
It is possible to show that Theorem~\ref{teoalpha} is the strongest possible
result one can derive from Theorem~\ref{thenewbound} regarding the minimum
distance of codes from optimal generalised $C_{ab}$ polynomials.
\end{remark}
In the following we apply Theorem~\ref{teoalpha} in a number of
cases where $F(X,Y)=G(X)-H(Y) \in {\mathbb{F}}_{p^m}[X,Y]$ with $G(X)$ being the trace polynomial
and $H(Y)$ being an $({\mathbb{F}}_{p^m},{\mathbb{F}}_p)$-polynomial
of another degree. Recall from the discussion at the beginning of the
section that these are optimal
generalised $C_{ab}$ polynomials. The strength of our new bound Theorem~\ref{thenewbound}
and Theorem~\ref{teoalpha} lies in the cases where $a$ and $b$ are not
relatively prime, as for $a$ and $b$ relatively prime it reduces to the
usual Feng-Rao bound for primary codes (see the last part of
Remark~\ref{remtksh2}). The well-known norm-trace
polynomial corresponds to choosing  $H(Y)$ to
be the norm polynomial. This gives $a=p^{m-1}$ and $b=(p^m-1)/(p-1)$
which are clearly relatively prime. The related codes, which are
called norm-trace codes, are 
one-point algebraic geometric codes. As a measure for how good is our
new code constructions it seems fair to compare the outcome of
Theorem~\ref{teoalpha} for the cases of $\gcd(a,b)>1$ with the
parameters of the one-point algebraic geometric codes from norm-trace
curves over the same alphabet. The two corresponding 
sets of ideals have the same footprint $\Delta_{\prec_w}(I_q)$
and consequently the corresponding codes are of the same length. We remind the reader that it was
shown in~\cite{geil2003codes} that the Feng-Rao bound gives the true
parameters of the norm-trace codes. 
\begin{example}\label{exqis8}
In this example we consider optimal generalised $C_{ab}$ polynomials
derived from $({\mathbb{F}}_8,{\mathbb{F}}_2)$-polynomials.  The trace
polynomial $G(X)$ is of degree $a=4$  and from Example~\ref{extksh5} we see that
besides the norm polynomial which is of degree $b=7$  we can choose $H(Y)$ as
$F_3(Y)=Y^6+Y^5+Y^3$ which is of degree $b=6$. The corresponding codes
are of length $n=32$ over the alphabet ${\mathbb{F}}_8$. In Figure~\ref{grafo4678} below
we compare the parameters of the related two sequences of improved codes
$\tilde{E}_{imp}(\delta)$ (Definition~\ref{defimpcode}). For few choices of $\delta$ the norm-trace
code is the best, but for many choices of $\delta$, from $(a,b)=(4,6)$ we get
better codes. We note that the latter sequence of codes contains
two non-trivial codes that has the best known parameters according to the linear code
bound at~\cite{tysker}, namely $[n,k,d]$ equal to $[32,2,28]$ and $[32,15,12]$.
\begin{figure}
\begin{center}
	\includegraphics[width=0.60\textwidth]{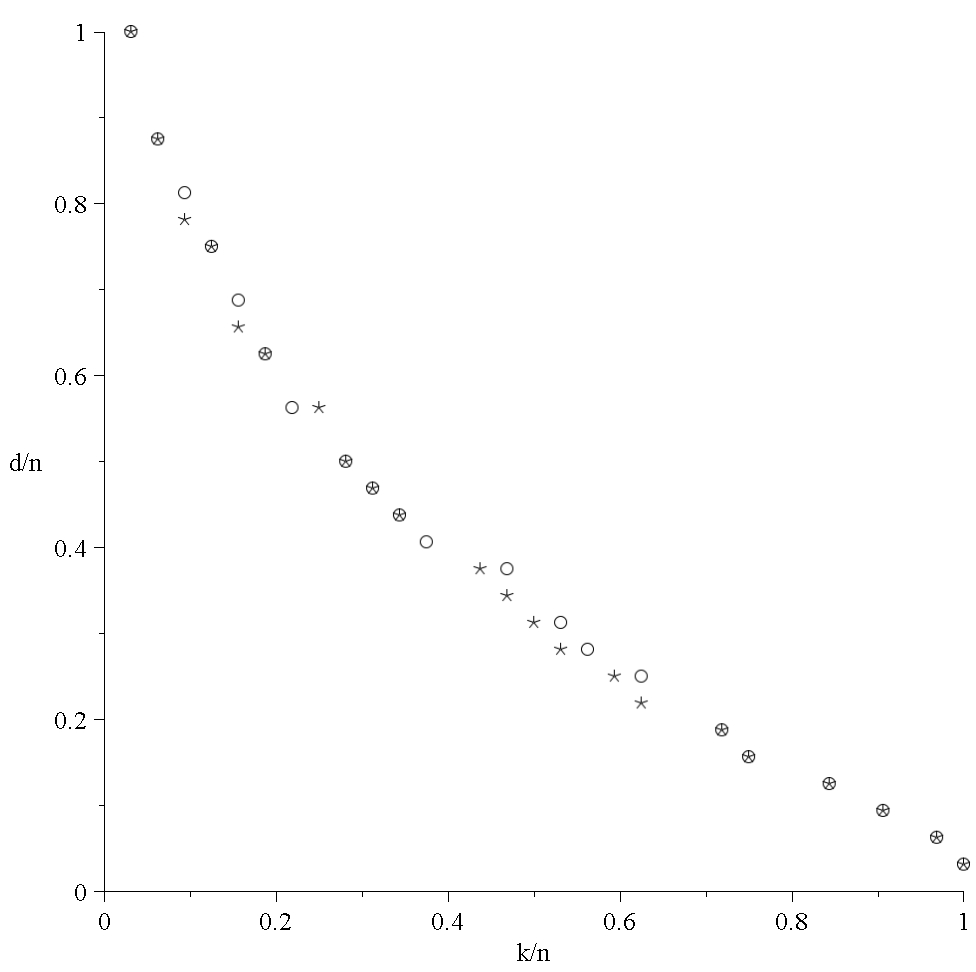}
\end{center}
\caption{Improved codes from Example~\ref{exqis8}. A $\circ$ corresponds
  to $(a,b)=(4,6)$, and an $\ast$ corresponds to $(a,b)=(4,7)$ (the norm-trace codes).}
\label{grafo4678}
\end{figure}
\end{example}
\begin{example}\label{exqis16}
In this example we consider optimal generalised $C_{ab}$ polynomials
derived from $({\mathbb{F}}_{16},{\mathbb{F}}_2)$-polynomials.  The trace
polynomial $G(X)$ is of degree $a=8$  and from Example~\ref{extksh5} we see that
besides the norm polynomial which is of degree $b=15$  we can choose
$H(Y)$ to be of degree $10$, $12$ and $14$. The corresponding codes
are of length $n=128$ over the alphabet ${\mathbb{F}}_{16}$. In Figure~\ref{grafo8101516} below
we compare the parameters of the related two sequences of improved codes
$\tilde{E}_{imp}(\delta)$ when $b=10$ and when $b=15$ (the norm-trace codes). For most choices of $\delta$ from $(a,b)=(8,10)$ we get
the best codes. The norm-trace codes are never strictly best.
\begin{figure}
\begin{center}
	\includegraphics[width=0.80\textwidth]{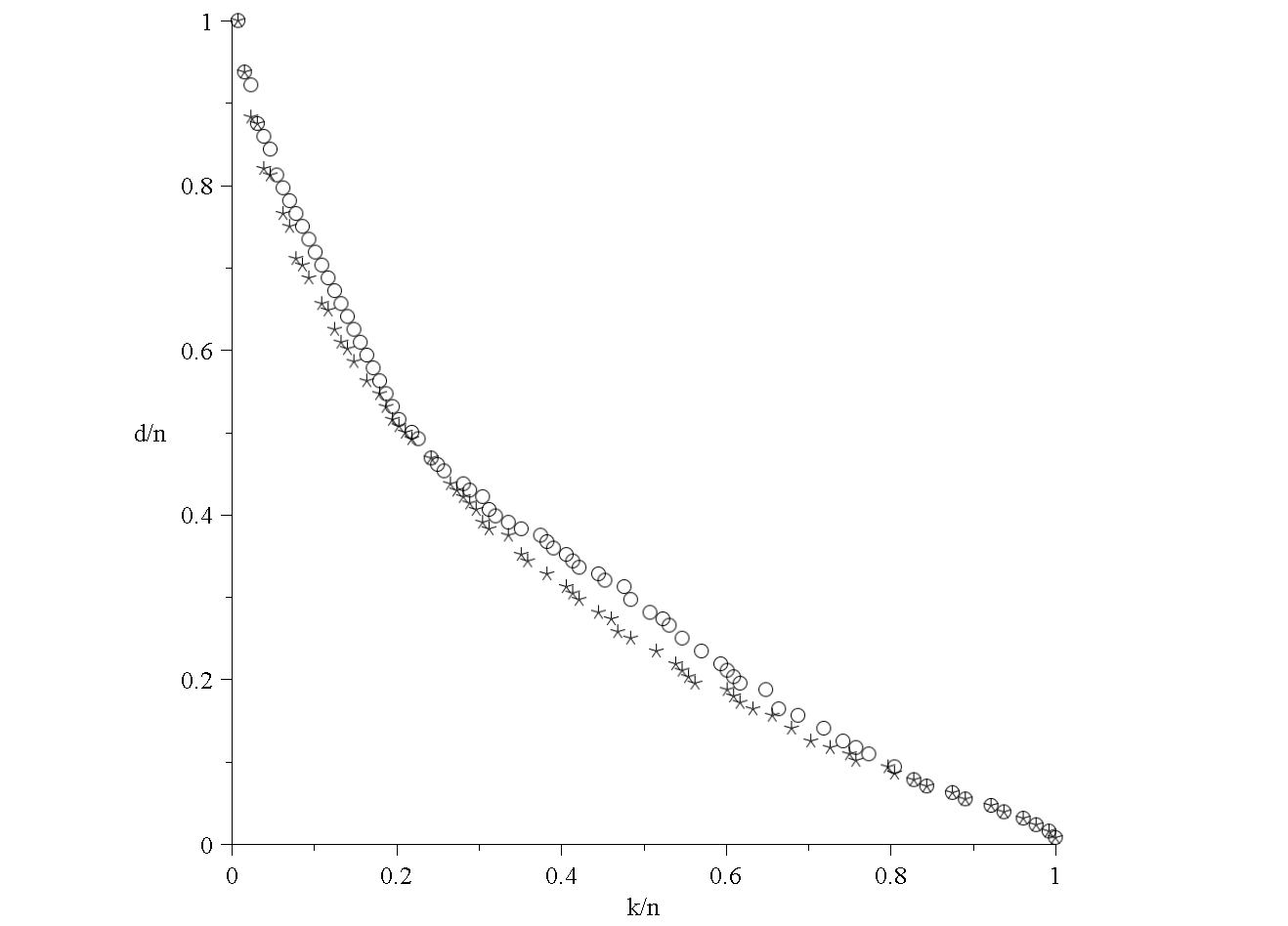}
\end{center}
\caption{Improved codes from Example~\ref{exqis16}. A $\circ$ corresponds
  to $(a,b)=(8,10)$, and an $\ast$ corresponds to $(a,b)=(8,15)$ (the
  norm-trace codes).}
\label{grafo8101516}
\end{figure}
\end{example}
\begin{example}\label{exqis32}
In this example we consider optimal generalised $C_{ab}$ polynomials
derived from $({\mathbb{F}}_{32},{\mathbb{F}}_2)$-polynomials.  The trace
polynomial $G(X)$ is of degree $a=16$  and from Example~\ref{extksh5} we see that
besides the norm-polynomial which is of degree $b=31$  we can choose
$H(Y)$ to be of degree $20$, $24$, $26$, $28$ and $30$. The corresponding codes
are of length $n=512$ over the alphabet ${\mathbb{F}}_{32}$. In
Figure~\ref{ZOOM1grafo1620263132}  
below
we compare the parameters of the related three sequences of improved codes
$\tilde{E}_{imp}(\delta)$ when $b=20$, $b=26$ and when $b=31$ (the
norm-trace codes). For no choices of $\delta$ the norm-trace codes are
strictly best (this holds for all values of $k/n$). For some choices $b=20$ gives the best codes for other
choices the best parameters are found by choosing $b=26$.
\begin{figure}
\begin{center}
\includegraphics[width=1.00\textwidth]{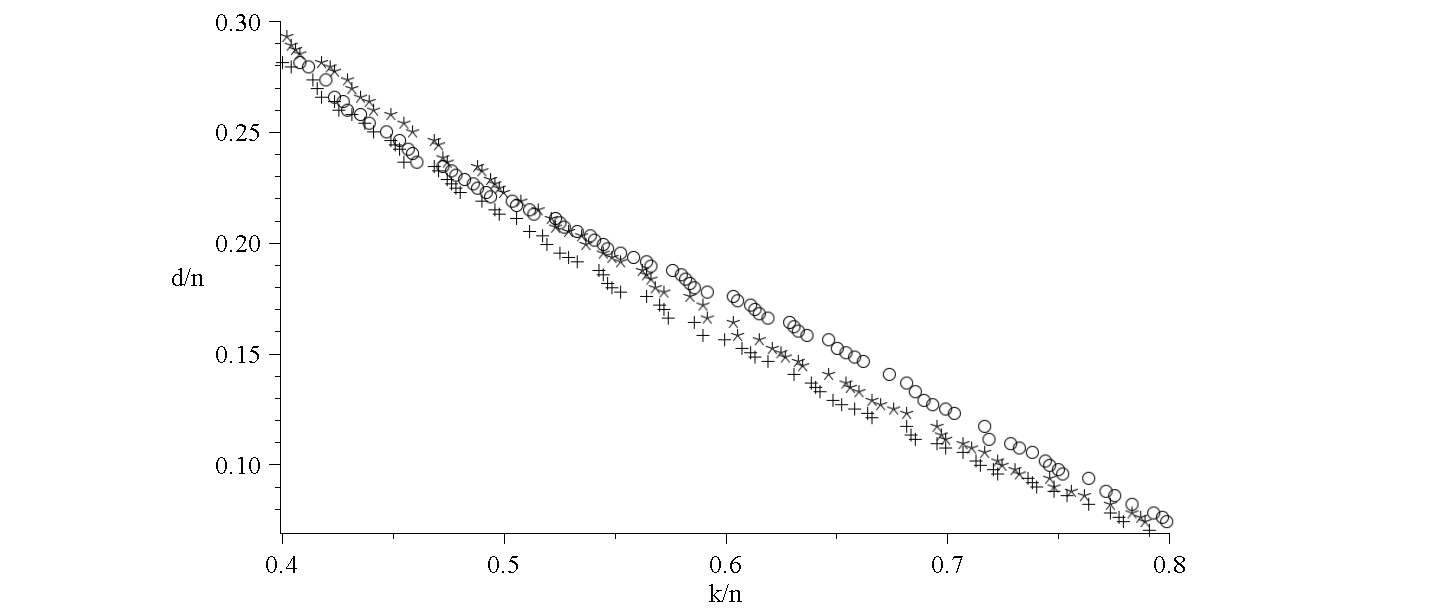}
\end{center}
\caption{Improved codes from Example~\ref{exqis32}. A $\circ$ corresponds
  to $(a,b)=(16,20)$, an $\ast$ to $(a,b)=(16,26)$, and finally a $+$  corresponds to $(a,b)=(16,31)$ (the
  norm-trace codes).}
\label{ZOOM1grafo1620263132}
\end{figure}
\end{example}
\begin{example}\label{exqis64}
In this example we consider optimal generalised $C_{ab}$ polynomials
derived from $({\mathbb{F}}_{64},{\mathbb{F}}_2)$-polynomials.  The trace
polynomial $G(X)$ is of degree $a=32$  and by studying cyclotomic
cosets we see that
as an alternative to the norm polynomial which is of degree $b=63$  we
can for instance choose an 
$H(Y)$ of degree $42$. The corresponding codes
are of length $n=2048$ over the alphabet ${\mathbb{F}}_{64}$. In Figure~\ref{grafo32426364} below
we compare the parameters of the related two sequences of improved codes
$\tilde{E}_{imp}(\delta)$ when $b=42$ and when $b=63$ (the norm-trace
codes). As is seen the first codes outperforms the last codes for
all parameters.
\begin{figure}
\begin{center}
	\includegraphics[width=0.80\textwidth]{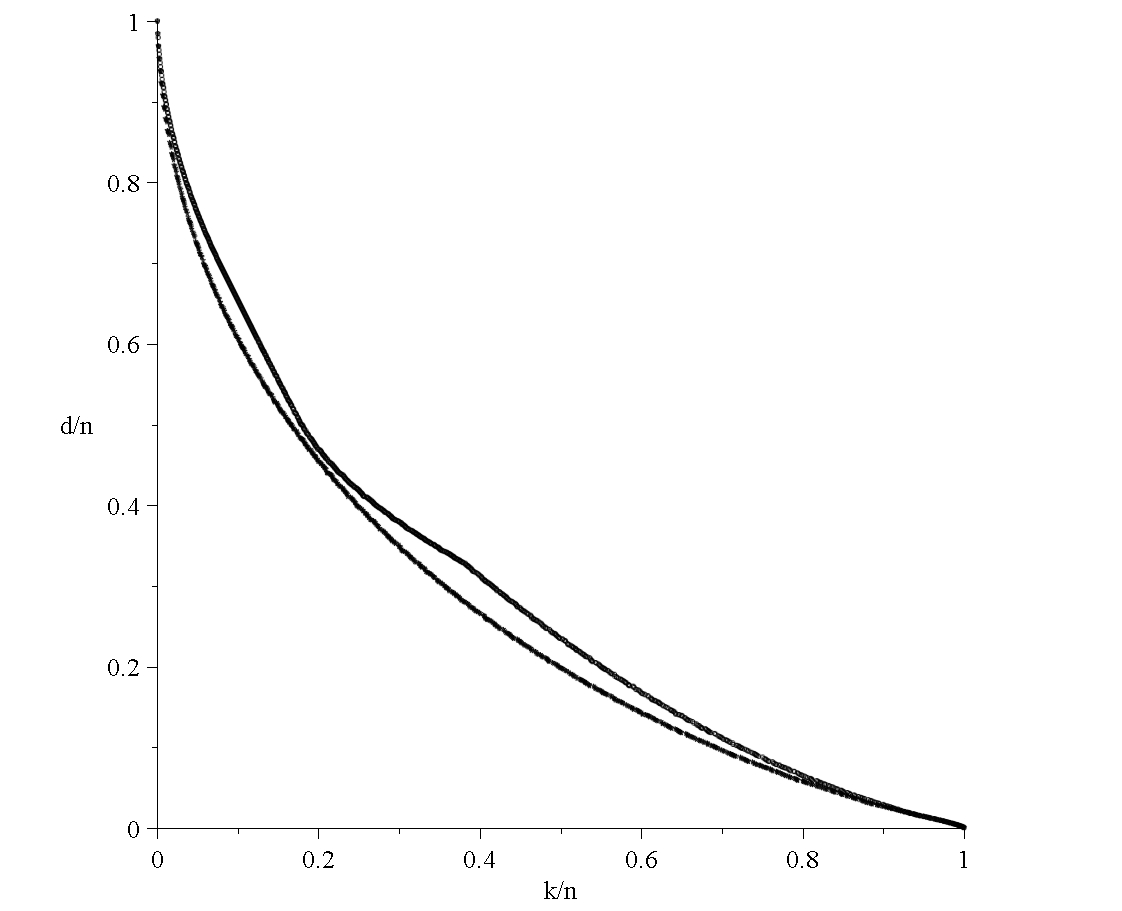}
\end{center}
\caption{Improved codes from Example~\ref{exqis64}. The upper curve corresponds
  to $(a,b)=(32,42)$, the lower curve to $(a,b)=(32,63)$ (the norm-trace codes)}
\label{grafo32426364}
\end{figure}
\end{example}
\section{A new construction of improved codes}\label{secfourandaquarter}
In Definition~\ref{defimpcode} we presented a Feng-Rao style
improved code construction $\tilde{E}_{imp}(\delta)$. As shall be
demonstrated in this section it is sometimes possible to do even
better. Recall that the idea behind Theorem~\ref{thenewbound} is to
consider case 1 up till case v+1 as described prior to the
theorem. Consider a general codeword
$$\vec{c} ={\mbox{ev}}\big(\sum_{s=1}^i a_s M_s +I_q\big) \in C(I,L)$$ 
$a_i \neq 0$, where $L$ is some fixed known subspace of ${\mathbb{F}}_q^n$. From $L$
we might {\it{a priori}} be able to conclude that certain $a_s$s
equal zero for all codewords as above. This corresponds to saying that
{\it{a priori}} we might know that some of the cases case 1 up to case
v do not happen. Clearly we could then leave out the corresponding
sets in Theorem~\ref{thenewbound}. This might result in a higher
estimate on $w_H(\vec{c})$. We illustrate the phenomenon with an
example in which we also show how to derive improved codes based on
this observation.
\begin{example}\label{exklein}
In this example we consider the Klein quartic $X^3Y+Y^3+X \in
{\mathbb{F}}_8[X,Y]$. Let $w(X)=2$ and $w(Y)=3$. The ideal $I=\langle X^3Y+Y^3+X\rangle \subseteq {\mathbb{F}}_8[X,Y]$ and the
corresponding weighted degree lexicographic ordering $\prec_w$ satisfy order domain
condition (C1) but not (C2) (as usual, in the definition of $\prec_w$ we choose $X=X_1$ and $Y=X_2$). Hence, it makes sense to apply
Theorem~\ref{thenewbound}. The footprint of  
$I_8=\langle X^3Y+Y^3+X,X^8+X,Y^8+Y\rangle$ 
is (for a reference
see~\cite[Ex.\ 4.19]{bookAG} and \cite[Ex.\ 3.3]{FR1}):
\begin{eqnarray*}
\Delta_{\prec_w}(I_8)&=&\{ 1, X, Y, X^2, XY, Y^2,X^3,X^2Y, XY^2,
X^4,Y^3,X^2Y^2,\\
&&X^5,XY^3,Y^4,X^6,X^2Y^3,XY^4,X^7,Y^5,X^2Y^4,Y^6\}
\end{eqnarray*}
written in increasing order with respect to $\prec_w$. Consider
$$\vec{c}={\mbox{ev}}\big( a_1
1+a_2X+a_3Y+a_4X^2+a_5XY+a_6Y^2+a_7X^3+I_8\big),$$
$a_7 \neq 0$. We have $w(X^3)=w(Y^2)>w(XY)$. Hence, by Remark~\ref{remtksh2} we
choose $v=1$. \\
By inspection the set corresponding to case 1 is
$${\mathcal{L}}(1)=\{X^3,X^4,X^5,X^6,X^7,X^2Y^4\}.$$
(Note that $X^2Y^4$ belongs to ${\mathcal{L}}(1)$ of the following
reason: We have ${\mbox{lm}}(X^3X^5 {\mbox{ rem }} X^8+X)=X$ and
${\mbox{lm}}(Y^2X^5 {\mbox{ rem }} X^3Y+Y^3+X)=X^2Y^4$, and from 
$w(Y^2X^5)=w(X^2Y^4)>w(X)$ we conclude that $(Y^2,X^5)$ is SOWB with
respect to $\{1,2,3,4,5,6,7\}$.) 
The set corresponding to case 2 is
$${\mathcal{L}}(2)=\{X^3,X^4,Y^3,X^5,XY^3,Y^4,X^6,X^2Y^3,XY^4,X^7,Y^5,X^2Y^4,Y^6\}.$$
If we know {\it{a priori}} that $a_6=0$ then  we can conclude
from the above that $w_H(\vec{c}) \geq \# {\mathcal{L}}(2)=13$. Without such an
information we can only conclude $$w_H(\vec{c}) \geq \min
\{\#{\mathcal{L}}(1),\#{\mathcal{L}}(2)\}=6.$$
It can be shown using Theorem~\ref{thenewbound} that
$\tilde{E}_{imp}(11)=C(I,L)$ where 
$$L={\mbox{ev}} \big( {\mbox{Span}}_{\mathbb{F}_8}\{1+I_8,
X+I_8,Y+I_8,X^2+I_8,XY+I_8,Y^2+I_8\}\big).$$
That is, a code with parameters $[n,k,d]$ equal to $[22,6, \geq
11]$.\\
If instead we choose
$$\tilde{L}={\mbox{ev}} \big( {\mbox{Span}}_{\mathbb{F}_8}\{1+I_8,
X+I_8,Y+I_8,X^2+I_8,XY+I_8,X^3+I_8\}\big)$$
then we do not need to consider the case 1 described above. By
inspection the code parameters $[n,k,d]$ of $C(I,\tilde{L})$ are $[22,
6, \geq 12]$.
\end{example}
\section{Generalised Hamming weights}\label{secfourandahalf}
As mentioned at the end of Section~\ref{sectwo} it is possible to lift
Theorem~\ref{thenewbound} to also deal with generalised Hamming
weights. Recall 
that these
parameters are important in the analysis of the wiretap channel of
type II as well as in the analysis of secret sharing schemes based on
coding theory, see~\cite{wei}, \cite{luo} and \cite{kurihara}.
\begin{definition}
Let $C \subseteq {\mathbb{F}}_q^n$ be a code of dimension $k$. For
$t=1, \ldots , k$ the $t$th generalised Hamming weight is
$$d_t(C)=\min \{ \# {\mbox{Supp}} \, D \mid D {\mbox{ \ is a subspace of
    \ }} C {\mbox{ \ of dimension \ }}
t\}.$$
Here, ${\mbox{Supp}} \, D$ means the entries for which some word in
$D$ is different from zero.
\end{definition}
Clearly, $d_1$ is nothing but the usual minimum
distance.
In Proposition~\ref{protemmeliglang} below we lift
Theorem~\ref{thenewbound} to deal with the second generalised Hamming
weight. From this the reader can understand how to treat any
generalised Hamming weight. 
\begin{proposition}\label{protemmeliglang}
Let $D \subseteq {\mathbb{F}}_q^n$ be a subspace of dimension
2. Write
$D={\mbox{Span}}_{\mathbb{F}_q} \{
{\mbox{ev}}(\sum_{s=0}^{i_1}a_s
M_s),{\mbox{ev}}(\sum_{s=0}^{i_2}b_s M_s)\}$. Here,
$\Delta_\prec(I_q)=\{M_1, \ldots , M_n\}$, $a_s \in
{\mathbb{F}}_q$, $b_s \in
{\mathbb{F}}_q$ with $a_{i_1} \neq 0$ and $b_{i_2} \neq
0$. Without loss of generality we may assume $i_1 \neq i_2$. Let $v_1$
and $v_2$ be integers satisfying $0\leq v_1 < i_1$ and $0 \leq v_2
<i_2$. 
We have
$$\# {\mbox{Supp}}(D) \geq \min \{ \# {\mathcal{L}}(z_1,z_2) \mid 1
\leq z_1 \leq v_1+1, 1 \leq z_2 \leq v_2+1 \}.$$ 
The above sets are defined as follows:
For $z=1, \ldots ,  v_1$ 
\begin{eqnarray}
&&{\mathcal{L}}(z,v_2+1)=\nonumber \\
&& \big\{ K \in \Delta_\prec(I_q) \mid \exists M_j \in \Delta_\prec(I_q)
{\mbox{ such that either }} (M_{i_1},M_j) {\mbox{ is SOWB}}
\nonumber \\
&& {\mbox{ with respect to }} \{1, \ldots ,
i_1-z,i_1\} {\mbox{ and }} {\mbox{lm}}(M_{i_1}M_j {\mbox{ rem }} {\mathcal{G}})=K 
\nonumber \\
&&{\mbox{ or }}
\nonumber \\
&&(M_{i_1-z},M_j) {\mbox{ is SOWB with respect to }} \{1, \ldots ,
i_1-z,i_1\}\nonumber \\
&&{\mbox{ and }} {\mbox{lm}}(M_{i_1-z}M_j {\mbox{ rem }} {\mathcal{G}})=K
\nonumber \\
&&{\mbox{ or }}
\nonumber \\
&&(M_{i_2},M_j) {\mbox{ is SOWB with respect to }} \{1,
\ldots , i_2-v_2-1\} \nonumber \\
&&{\mbox{ and }} {\mbox{lm}}(M_{i_2}M_j {\mbox{ rem }} {\mathcal{G}})=K\big\}.
\nonumber
\end{eqnarray}
For $z=1, \ldots , v_2$, ${\mathcal{L}}(v_1+1,z)$ is defined in a
similar way.\\ 
For $z_1=1, \ldots , v_1$ and $z_2=1, \ldots ,
v_2$ we have
\begin{eqnarray}
&&{\mathcal{L}}(z_1,z_2)=\nonumber \\
&& \big\{ K \in \Delta_\prec(I_q) \mid \exists M_j \in \Delta_\prec(I_q)
{\mbox{ such that for some $u \in \{1,2\}$}}\nonumber\\
&& (M_{i_u},M_j) {\mbox{ is SOWB  with respect to }} \{1, \ldots ,
i_u-z_u,i_u\} \nonumber \\
&&{\mbox{ and }} {\mbox{lm}}(M_{i_u}M_j {\mbox{ rem }} {\mathcal{G}})=K 
\nonumber \\
&&{\mbox{ or }}
\nonumber \\
&&(M_{i_u-z_u},M_j) {\mbox{ is SOWB with respect to }} \{1, \ldots ,
i_u-z_u,i_u\}\nonumber \\
&&{\mbox{ and }} {\mbox{lm}}(M_{i_u-z_u}M_j {\mbox{ rem }} {\mathcal{G}})=K\big\},
\nonumber 
\end{eqnarray}
and finally
\begin{eqnarray}
&&{\mathcal{L}}(v_1+1,v_2+1)=\nonumber \\
&& \big\{ K \in \Delta_\prec(I_q) \mid \exists M_j \in \Delta_\prec(I_q)
{\mbox{ such that }} (M_{i_u},M_j) {\mbox{ is SOWB}}
\nonumber \\
&& {\mbox{ with respect to }} \{1, \ldots ,
i_u-v_u-1\} {\mbox{ and }} {\mbox{lm}}(M_{i_u}M_j {\mbox{ rem }}
{\mathcal{G}})=K \nonumber \\
&& {\mbox{ for some }} u \in \{1,2\}\big\}.
\nonumber 
\end{eqnarray}
The second generalised Hamming weight of $C(I,L)$ is found by
repeating the above process for all possible choices of $i_1<i_2$
corresponding to the cases that $D \subseteq C(I,L)$. 
\end{proposition}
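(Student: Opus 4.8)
The plan is to reduce the statement about the second generalised Hamming weight to a counting argument on footprints, exactly mirroring how Theorem~\ref{thenewbound} was obtained from Theorem~\ref{theseven}. Fix a subspace $D$ of dimension $2$ and write $D={\mbox{Span}}_{\mathbb{F}_q}\{{\mbox{ev}}(A),{\mbox{ev}}(B)\}$ with $A=\sum_{s=1}^{i_1}a_sM_s$, $B=\sum_{s=1}^{i_2}b_sM_s$, $a_{i_1}\neq 0$, $b_{i_2}\neq 0$, and $i_1\neq i_2$. The key observation is that ${\mbox{Supp}}(D)=\bigcup_{\lambda,\mu}{\mbox{Supp}}({\mbox{ev}}(\lambda A+\mu B))$, and more usefully that the complement of ${\mbox{Supp}}(D)$ inside $\{1,\dots,n\}$ is the set of coordinates where \emph{every} word of $D$ vanishes, i.e.\ the common zero set of $\langle A,B\rangle$ modulo $I_q$. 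By Corollary~\ref{thefoot} applied to $I_q+\langle A,B\rangle$ this gives
$$\#{\mbox{Supp}}(D)=n-\#\Delta_\prec(I_q+\langle A,B\rangle)=\#\big(\Delta_\prec(I_q)\setminus\Delta_\prec(I_q+\langle A,B\rangle)\big),$$
the two-generator analogue of~(\ref{eqerher}).

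Next I would produce, for each choice of case indices $(z_1,z_2)$ with $1\le z_1\le v_1+1$ and $1\le z_2\le v_2+1$, a case hypothesis on the coefficients of $A$ and $B$ that partitions all possibilities, precisely as in the list of $v+1$ cases preceding Theorem~\ref{thenewbound} but now done simultaneously for $A$ (governed by $v_1$) and for $B$ (governed by $v_2$). Under the hypothesis selecting $z_1$ for $A$ and $z_2$ for $B$, the effective leading behaviour of $A$ after multiplication and reduction is captured by either $M_{i_1}$ or $M_{i_1-z_1}$ (whichever is SOWB with respect to the relevant truncated index set), and similarly for $B$; this is exactly the mechanism already justified in the proof of Theorem~\ref{thenewbound} and in Example~\ref{exmot}. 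The point is that if $N,K\in\Delta_\prec(I_q)$ are such that one of the listed SOWB conditions holds and ${\mbox{lm}}(M_{(\cdot)}M_j\ {\mbox{rem}}\ {\mathcal{G}})=K$, then $K\notin\Delta_\prec(I_q+\langle A,B\rangle)$, because $K$ is the leading monomial of an element of $I_q+\langle A,B\rangle$ (namely $HM_j$ reduced, where $H\in\{A,B\}$ has the prescribed leading monomial). Hence $\#\big(\Delta_\prec(I_q)\setminus\Delta_\prec(I_q+\langle A,B\rangle)\big)\ge\#{\mathcal{L}}(z_1,z_2)$ in that case, and taking the minimum over all $(z_1,z_2)$ gives the bound.

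The one genuinely new ingredient compared with Theorem~\ref{thenewbound} is bookkeeping: there are now two independent families of cases, so I must check that the $(v_1+1)(v_2+1)$ case hypotheses are jointly exhaustive and that the ``boundary'' sets ${\mathcal{L}}(z,v_2+1)$, ${\mathcal{L}}(v_1+1,z)$ and ${\mathcal{L}}(v_1+1,v_2+1)$ use the correct truncated index sets — when $B$ falls in its final case the relevant SOWB condition for $M_{i_2}$ is with respect to $\{1,\dots,i_2-v_2-1\}$ rather than $\{1,\dots,i_2-v_2-1,i_2\}$, matching the structure of ${\mathcal{L}}(v+1)$ in Theorem~\ref{thenewbound}; and symmetrically for $A$. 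I would also note that the SOWB notion of Definition~\ref{defstrong} is exactly what licenses replacing $A$ (resp.\ $B$) by the monomial $M_{i_1-z_1}$ (resp.\ $M_{i_2-z_2}$) inside the relevant support set, which is why the definition was phrased in that generality. Finally, the last sentence follows by definition of $d_2$: ranging over all pairs $i_1<i_2$ with $\Box_\prec$-data realisable inside $C(I,L)$ and over all admissible $(v_1,v_2)$, and taking the minimum, yields a lower bound on $d_2(C(I,L))$. I expect the main obstacle to be purely organisational — verifying that the case split is complete and that each listed set is indexed by the correct truncated set — rather than any new mathematical difficulty, since every local step reuses the SOWB/footprint argument already established.
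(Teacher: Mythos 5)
Your proposal is correct and follows exactly the route the paper intends: the paper's own proof is the single sentence that the result is ``a straight forward enhancement of the proof for Theorem~\ref{thenewbound}'', and your argument supplies precisely that enhancement --- the footprint identity $\#\mathrm{Supp}(D)=n-\#\Delta_\prec(I_q+\langle A,B\rangle)$ as the two-generator analogue of~(\ref{eqerher}), the joint $(v_1+1)(v_2+1)$ case split on the coefficients of $A$ and $B$, and the SOWB mechanism forcing each counted $K$ out of $\Delta_\prec(I_q+\langle A,B\rangle)$. Your write-up is in fact more explicit than the paper's.
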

\begin{proof}
The proof is a straight forward enhancement of the proof for Theorem~\ref{thenewbound}.

\end{proof}
For the choice of $v_1$ and $v_2$ in Proposition~\ref{protemmeliglang} we refer to Remark~\ref{remtksh2}. Admittedly, the  proposition is rather technical. Nevertheless 
even its generalisation
to higher generalised Hamming weights can often be quite
manageable. We shall comment further on this in~Section~\ref{seccomp}. 

\section{Formulation at linear code level}\label{secfive}
As mentioned in the introduction the Feng-Rao bound for primary codes in its most
general form is a bound on any linear code described by means of a
generator matrix. All other versions of the bound, such as the order
bound for primary codes and the Feng-Rao bound for primary affine
variety codes, can be viewed as
corollaries to it. Below we reformulate the new bound in
Theorem~\ref{thenewbound} at the linear code level.\\
Let $n$ be a positive integer and $q$ a prime power. Consider a fixed ordered triple
$({\mathcal{U}},{\mathcal{V}},{\mathcal{W}})$ where
${\mathcal{U}}=\{\vec{u}_1, \ldots ,
\vec{u}_n\}$, ${\mathcal{V}}=\{\vec{v}_1, \ldots , \vec{v}_n\}$, and
${\mathcal{W}}=\{\vec{w}_1, \ldots , \vec{w}_n\}$ are three (possibly
different)  bases for
${\mathbb{F}}_q^n$ as a vector space over ${\mathbb{F}}_q$. We shall
always denote by ${\mathcal{I}}$ the set $\{1, \ldots , n\}$.
\begin{definition}\label{def1}
Consider a basis ${\mathcal{A}}=\{\vec{a}_1, \ldots , \vec{a}_n\}$ for
${\mathbb{F}}_q^n$ as a vector space over ${\mathbb{F}}_q$. We define a
function $\bar{\rho}_{\mathcal{A}}: {\mathbb{F}}_q^n \rightarrow \{0, 1, \ldots ,
n\}$ as follows. For $\vec{c} \in {\mathbb{F}}_q \backslash \{
\vec{0} \}$ we let $\bar{\rho}_{\mathcal{A}}(\vec{c})=i$ if $\vec{c} \in
{\mbox{Span}}_{\mathbb{F}_q}\{\vec{a}_1, \ldots , \vec{a}_i\}  \backslash
{\mbox{Span}}_{\mathbb{F}_q} \{\vec{a}_1, \ldots , \vec{a}_{i-1}\}$. Here, we used
the notion ${\mbox{Span}}_{\mathbb{F}_q}\,  \emptyset = \{\vec{0}\}$. Finally, we let 
$\bar{\rho}_{\mathcal{A}}(\vec{0})=0$.
\end{definition}
\noindent The component wise product plays a crucial role in the
linear code enhancement of Theorem~\ref{thenewbound}.
\begin{definition}\label{deftksh6}
The component wise product of two vectors $\vec{u}$ and $\vec{v}$ in
${\mathbb{F}}_q^n$ is defined by $(u_1, \ldots , u_n)\ast (v_1,
\ldots, v_n)=(u_1v_1, \ldots , u_nv_n)$.
\end{definition}
\begin{definition}\label{def2}
Let $({\mathcal{U}},{\mathcal{V}},{\mathcal{W}})$ and ${\mathcal{I}}$ be as
above. 
Consider ${\mathcal{I}}^\prime
\subseteq {\mathcal{I}}$. An ordered pair $(i,j) \subseteq
{\mathcal{I}}^\prime \times {\mathcal{I}}$ is said to be one-way well-behaving (OWB) with
respect to ${\mathcal{I}}^\prime$ if $\bar{\rho}_{\mathcal{W}}(\vec{u}_{i^\prime} \ast
\vec{v}_j) < \bar{\rho}_{\mathcal{W}}(\vec{u}_{i} \ast
\vec{v}_j)$ holds for all $i^\prime \in {\mathcal{I}}^\prime$ with $i^\prime <
i$.
\end{definition}
The following theorem is a first
generalisation of the Feng-Rao bound for primary codes. The
generalisation compared to the usual Feng-Rao bound \cite{AG,geithom}
is that we allow ${\mathcal{I}}^\prime$
to be different from $\{1, \ldots
,\#  {\mathcal{I}}^\prime\}$. This is in the spirit of
Section~\ref{secfourandaquarter}.
\begin{theorem}\label{the2}
Consider $\vec{c}=\sum_{s=1}^t a_s \vec{u}_{i_s}$ with $a_s
\in {\mathbb{F}}_q$, $s=1, \ldots , t$, $a_t
\neq 0$ and $i_1< \cdots <i_t$. We have
\begin{equation}
\begin{array}{r} w_H(\vec{c}) \geq  
 \# \big\{ l\in {\mathcal{I}} \mid  \exists j \in {\mathcal{I}} {\mbox{ \  such that \
    }} \bar{\rho}_{\mathcal{W}}(\vec{u}_{i_t}\ast \vec{v}_j)=l,
  {\mbox{ \ \hspace{1cm} \ }} \\
(i_t,j) {\mbox{ \ is OWB with respect to \ }} \{ i_1,
\ldots , i_t\} \big\}.
\end{array} \label{eqth2}
\end{equation}
\end{theorem}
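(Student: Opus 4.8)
The plan is to mimic the proof of Theorem~\ref{theseven} (the affine variety code version), but now carried out purely in terms of the three bases $\mathcal{U},\mathcal{V},\mathcal{W}$ and the leading-term function $\bar\rho_{\mathcal W}$. First I would set up the analogue of the footprint: for a nonzero word $\vec{c}$, define $N(\vec c)=\#\{\bar\rho_{\mathcal W}(\vec c \ast \vec v_j)\mid j\in\mathcal I\}$ and observe, as in the dual/primary Feng--Rao literature, that $w_H(\vec c)\ge N(\vec c)$; indeed if $\vec c$ is supported on a set $S$ of coordinates of size $w$, then the span of $\{\vec c\ast\vec v_j\mid j\in\mathcal I\}$ is contained in the $w$-dimensional coordinate subspace $\{\vec x\mid \mathrm{Supp}(\vec x)\subseteq S\}$, hence can contain at most $w$ vectors with pairwise distinct $\bar\rho_{\mathcal W}$-values, so the set of values attained has size at most $w$.

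Next I would show that every value $l$ counted by the right-hand side of~(\ref{eqth2}) is actually attained as $\bar\rho_{\mathcal W}(\vec c\ast \vec v_j)$ for a suitable $j$, so that it contributes to $N(\vec c)$. Fix $j$ with $(i_t,j)$ OWB with respect to $\{i_1,\dots,i_t\}$ and $\bar\rho_{\mathcal W}(\vec u_{i_t}\ast\vec v_j)=l$. Write $\vec c\ast\vec v_j=\sum_{s=1}^{t}a_s\,(\vec u_{i_s}\ast\vec v_j)$ with $a_t\ne0$. By Definition~\ref{def2}, for every $s<t$ we have $\bar\rho_{\mathcal W}(\vec u_{i_s}\ast\vec v_j)<\bar\rho_{\mathcal W}(\vec u_{i_t}\ast\vec v_j)=l$. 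Expanding each $\vec u_{i_s}\ast\vec v_j$ in the basis $\mathcal W$, the coefficient of $\vec w_l$ in $\vec c\ast\vec v_j$ equals $a_t$ times the coefficient of $\vec w_l$ in $\vec u_{i_t}\ast\vec v_j$ (which is nonzero since $\bar\rho_{\mathcal W}(\vec u_{i_t}\ast\vec v_j)=l$), because all the lower-index terms $\vec u_{i_s}\ast\vec v_j$, $s<t$, lie in $\mathrm{Span}\{\vec w_1,\dots,\vec w_{l-1}\}$ and contribute nothing to the $\vec w_l$-coordinate. Hence $\bar\rho_{\mathcal W}(\vec c\ast\vec v_j)=l$.

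Finally I would combine the two steps: distinct admissible $l$'s give distinct values of $\bar\rho_{\mathcal W}(\vec c\ast\vec v_j)$ (trivially, since each is equal to $l$), so the right-hand side of~(\ref{eqth2}) is a lower bound for $N(\vec c)$, and therefore for $w_H(\vec c)$. I expect the only real subtlety to be the bound $w_H(\vec c)\ge N(\vec c)$: one must argue carefully that in a $w$-dimensional coordinate subspace the function $\bar\rho_{\mathcal W}$ takes at most $w$ values, which follows because vectors in that subspace with strictly increasing $\bar\rho_{\mathcal W}$-values are linearly independent (a standard leading-term/echelon argument) and the subspace has dimension $w$. Everything else is bookkeeping with the definition of OWB and of $\bar\rho_{\mathcal W}$; note that in this theorem there is no SOWB refinement yet (that enters only in the later, full linear-code analogue of Theorem~\ref{thenewbound}), so the argument here is the clean base case.
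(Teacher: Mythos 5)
Your proof is correct and takes essentially the same route as the paper's: the paper also shows that the vectors $\vec{c}\ast\vec{v}_{j}$ attached to the counted indices $l$ have pairwise distinct $\bar{\rho}_{\mathcal{W}}$-values (hence are linearly independent) and sit inside $\vec{c}\ast{\mathbb{F}}_q^n$, which is exactly your $w_H(\vec{c})$-dimensional coordinate subspace. Your write-up merely spells out the step the paper compresses into ``by assumption,'' namely that OWB forces $\bar{\rho}_{\mathcal{W}}(\vec{c}\ast\vec{v}_j)=\bar{\rho}_{\mathcal{W}}(\vec{u}_{i_t}\ast\vec{v}_j)=l$.
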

\begin{proof}
Let $l_1 < \cdots < l_\sigma$ be the indexes $l$ counted
in~(\ref{eqth2}). Denote by $j_1,\ldots , j_\sigma$ the corresponding
$j$-values from~(\ref{eqth2}). By assumption $\vec{c} \ast
\vec{v}_{j_1}, \ldots , \vec{c} \ast \vec{v}_{j_\sigma}$ are linearly
independent and therefore
$${\mbox{Span}}_{\mathbb{F}_q}
 \{\vec{c} \ast
\vec{v}_{j_1}, \ldots , \vec{c} \ast \vec{v}_{j_\sigma}\}= \vec{c}
\ast {\mbox{Span}}_{\mathbb{F}_q}
\{ \vec{v}_{j_1}, \ldots , \vec{v}_{j_\sigma}\}$$
is a vector space of dimension $\sigma$. The theorem follows from the fact that $\vec{c} \ast {\mathbb{F}}_q^n$
is a vector space of dimension $w_H(\vec{c})$ containing the above space. 
\end{proof}
A slight modification of Definition~\ref{def2} and the above proof
allows for further improvements.
\begin{definition}\label{deftksh7}
 Let ${\mathcal{I}}^\prime \subseteq {\mathcal{I}}$.
A pair $(i,j) \in {\mathcal{I}}^\prime \times {\mathcal{I}}$ is called strongly one-way well-behaving (SOWB)
with respect to ${\mathcal{I}}^\prime$ if
$\bar{\rho}_{\mathcal{W}}(\vec{u}_{i^\prime}\ast\vec{v}_j)<\bar{\rho}_{\mathcal{W}}(\vec{u}_i\ast
  \vec{v}_j)$ holds for all $i^\prime \in {\mathcal{I}}^\prime \backslash \{ i
  \}$. 
\end{definition}

The following theorem is the linear code interpretation of
Theorem~\ref{thenewbound}. Besides working for a larger class of codes, it is slightly stronger in that we formulate it in
such a way that it supports the technique explained in
Section~\ref{secfourandaquarter}. Concretely, what makes it stronger than Theorem~\ref{thenewbound}
is the presence of the set $\hat{\mathcal{I}}$.
\begin{theorem}\label{the4}
Consider a non-zero codeword $\vec{c}=\sum_{t=1}^i a_t \vec{u}_t$,
$a_t \in {\mathbb{F}}_q$ for $t=1,\ldots , i$, $a_i \neq 0$. Let $v$
be an integer $0 \leq v <i$. Assume that for some set
$\hat{\mathcal{I}}\subseteq \{1, \ldots , i-1\}$ we know {\it{a
    priori}} that $a_x=0$ when $x \in \hat{\mathcal{I}}$. Let $z_1<
\cdots <z_s$ be the numbers in $\{z \in \{i-v, \ldots , i-1\} \mid z \notin
\hat{\mathcal{I}}\}$. Write ${\mathcal{I}}^\ast=\{ z \in \{1, \ldots ,
i-v-1\}\mid z \notin \hat{\mathcal{I}}\}$. We have 
$$w_H(\vec{c})\geq \min \{{\mathcal{L}}^\prime(1), \ldots ,
{\mathcal{L}}^\prime(s+1) \}$$
where for $t=1, \ldots ,s$ we define ${\mathcal{L}}^\prime(t)$ as follows:
\begin{eqnarray}
{\mathcal{L}}^\prime(1)&=&\{ l \in {\mathcal{I}} \mid \exists z \in \{ z_s, i\} {\mbox{
    \ and \ }} j
\in {\mathcal{I}} {\mbox{ \ such that \ }} \bar{\rho}_{\mathcal{W}}(\vec{u}_z \ast
\vec{v_j})=l\nonumber \\
&&(z,j) {\mbox{ \ is SOWB with respect to
    \ }} {\mathcal{I}}^\ast \cup \{z_1, \ldots , z_s,i\} \big\}, \nonumber \\ 
{\mathcal{L}}^\prime(2)&=&\big\{ l \in {\mathcal{I}} \mid \exists z \in \{z_{s-1},i\}
{\mbox{ \ and \ }}  j \in {\mathcal{I}}  {\mbox{ \ such that \ }}
\bar{\rho}_{\mathcal{W}}(\vec{u}_{z} \ast \vec{v}_j) =l\nonumber \\
&&(z,j) {\mbox{ \ is SOWB
    with respect to \ }} {\mathcal{I}}^\ast \cup \{z_1, \ldots , z_{s-1},i\}\big\},\nonumber \\
&&{\mbox{ \ {\hspace{3cm} }}}\vdots \nonumber \\
{\mathcal{L}}^\prime(s)&=&\big\{ l \in {\mathcal{I}} \mid \exists z \in \{z_{1},i\}
{\mbox{ \ and \ }}  j \in {\mathcal{I}}  {\mbox{ \ such that \ }}
\bar{\rho}_{\mathcal{W}}(\vec{u}_{z} \ast \vec{v}_j) =l\nonumber \\
&& (z,j) {\mbox{ \ is SOWB
    with respect to \ }} {\mathcal{I}}^\ast \cup
\{z_1,i\}\big\}.\nonumber
\end{eqnarray}
Finally, 
\begin{eqnarray}
{\mathcal{L}}^\prime(s+1)&=&\big\{ l \in {\mathcal{I}} \mid \exists j \in {\mathcal{I}}  {\mbox{ \ such that \ }}
\bar{\rho}_{\mathcal{W}}(\vec{u}_{i} \ast \vec{v}_j) =l \nonumber \\
&&(i,j) {\mbox{ \ is OWB
    with respect to \ }} {\mathcal{I}}^\ast \cup \{i\}\big\}.\nonumber 
\end{eqnarray}
To establish a lower bound on the minimum distance of a code $C$ we
repeat the above process for each $i \in
\bar{\rho}_{\mathcal{U}}(C)$. For each such $i$ we choose a
corresponding $v$, defining an $s$, and we determine the sets
${\mathcal{L}}^\prime(1), \ldots , {\mathcal{L}}^\prime(s+1)$ and
calculate their cardinalities. The
smallest cardinality found when $i$ runs through $\bar{\rho}_{\mathcal{U}}(C)$ 
serves as a lower bound on the minimum distance.
\end{theorem}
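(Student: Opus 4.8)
The plan is to mimic the proof of Theorem~\ref{thenewbound} but at the level of linear codes, using the argument of Theorem~\ref{the2} as the basic building block. Fix the codeword $\vec{c}=\sum_{t=1}^i a_t\vec{u}_t$ with $a_i\neq 0$ and the \textit{a priori} knowledge that $a_x=0$ for $x\in\hat{\mathcal{I}}$. The key observation is that, since $\vec{c}$ lies in $\mathrm{Span}_{\mathbb{F}_q}\{\vec{u}_t\mid t\in\mathcal{I}^\ast\cup\{z_1,\dots,z_s,i\}\}$, we may write $\vec{c}=\sum_s b_s\vec{u}_{m_s}$ where the indices $m_s$ range only over this reduced set. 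The proof will split into $s+1$ exhaustive cases according to which is the \emph{second-largest} index among the $m_s$ with $b_{m_s}\neq 0$, exactly as in the $v+1$ cases preceding Theorem~\ref{thenewbound}: in Case~$r$ ($1\le r\le s$) we assume $b_{z_s}=\cdots=b_{z_{s-r+2}}=0$ and $b_{z_{s-r+1}}\neq 0$, while in Case~$s+1$ all of $b_{z_1},\dots,b_{z_s}$ vanish.

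In each case I would argue as in Theorem~\ref{the2}. In Case~$r$ with $1\le r\le s$, the leading index of $\vec{c}$ (with respect to $\mathcal{U}$) is $i$ and the second index is $z_{s-r+1}$; the relevant ``supporting set'' of indices is $\mathcal{I}^\ast\cup\{z_1,\dots,z_{s-r+1},i\}$, since all coefficients of $\vec{c}$ outside this set are known to be zero. For each $j$ such that either $(i,j)$ or $(z_{s-r+1},j)$ is SOWB with respect to this set, the SOWB property guarantees that $\bar{\rho}_{\mathcal{W}}(\vec{c}\ast\vec{v}_j)$ equals $\bar{\rho}_{\mathcal{W}}(\vec{u}_i\ast\vec{v}_j)$ or $\bar{\rho}_{\mathcal{W}}(\vec{u}_{z_{s-r+1}}\ast\vec{v}_j)$, respectively — the contributions of all other $\vec{u}$-components are strictly smaller in $\bar{\rho}_{\mathcal{W}}$. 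Collecting the distinct values of $l$ obtained this way gives $\#\mathcal{L}'(r)$ vectors $\vec{c}\ast\vec{v}_j$ with pairwise distinct $\bar{\rho}_{\mathcal{W}}$-values, hence linearly independent; since $\vec{c}\ast\mathbb{F}_q^n$ has dimension $w_H(\vec{c})$ and contains $\mathrm{Span}_{\mathbb{F}_q}\{\vec{c}\ast\vec{v}_j\}$, we get $w_H(\vec{c})\ge \#\mathcal{L}'(r)$. Case~$s+1$ is handled identically but only the leading index $i$ survives, and OWB with respect to $\mathcal{I}^\ast\cup\{i\}$ suffices; this recovers the form of Theorem~\ref{the2}. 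Since the $s+1$ cases are exhaustive, $w_H(\vec{c})\ge\min\{\#\mathcal{L}'(1),\dots,\#\mathcal{L}'(s+1)\}$, and taking the minimum over $i\in\bar{\rho}_{\mathcal{U}}(C)$ yields the bound on the minimum distance.

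The main technical obstacle is bookkeeping: one must check carefully that in Case~$r$ the index $z_{s-r+1}$ really is the second-largest \emph{active} index, and that the SOWB condition is invoked with exactly the right reference set $\mathcal{I}^\ast\cup\{z_1,\dots,z_{s-r+1},i\}$ — neither too large (which would make SOWB harder to satisfy and weaken the bound) nor too small (which would break the independence argument because a term with larger $\bar{\rho}_{\mathcal{W}}$ could interfere). A secondary point worth spelling out is why SOWB rather than OWB is needed in the intermediate cases: when the second index $z_{s-r+1}$ is used as the ``leading'' index for a product, we need its $\bar{\rho}_{\mathcal{W}}(\vec{u}_{z_{s-r+1}}\ast\vec{v}_j)$ to dominate \emph{all} other active indices, including those larger than $z_{s-r+1}$ (namely $i$ itself, and in the generalised-Hamming-weight analogue possibly others), which is precisely the strengthening of OWB recorded in Definition~\ref{deftksh7}. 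Everything else is a routine transcription of the affine-variety argument, so I would keep the written proof short and refer back to the proofs of Theorem~\ref{the2} and Theorem~\ref{thenewbound}.
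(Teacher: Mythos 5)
Your proposal is correct and follows essentially the same route as the paper, which simply declares the proof to be a direct translation of the proof of Theorem~\ref{thenewbound}: you split into $s+1$ exhaustive cases according to the largest index among $z_1,\ldots,z_s$ with nonzero coefficient, and in each case run the linear-independence argument of Theorem~\ref{the2} with the SOWB reference set matching the possible support of $\vec{c}$. Your added remarks on why SOWB (rather than OWB) is needed for the intermediate cases and on the exact choice of reference set are accurate and supply detail the paper leaves implicit.
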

\begin{proof}
The proof is a direct translation of the proof of Theorem~\ref{thenewbound}.
\end{proof}
\begin{remark}\label{remtksh8}
For $v=0$ Theorem~\ref{the4} reduces to Theorem~\ref{the2}. For higher values
of $v$ Theorem~\ref{the4} is at least as strong as Theorem~\ref{the2} and
sometimes stronger. In the same way as Theorem~\ref{thenewbound} was lifted in
Section~\ref{secfourandahalf} to deal with generalised Hamming weights one can lift
Theorem~\ref{the2} and Theorem~\ref{the4}.
\end{remark}
\section{A related bound for dual codes}\label{secsix}
In the recent paper~\cite{geilmartin2013further} we presented a new bound for dual codes.
This bound is an improvement to the Feng-Rao bound for
such codes as well as an improvement to the 
advisory bound from~\cite{salazar}. The new bound of the present paper
can be viewed as a natural counter part to the bound from~\cite{geilmartin2013further},
the one bound dealing with primary codes and the other with dual codes. 
\begin{definition}\label{def3}Consider an ordered triple of bases
$({\mathcal{U}},{\mathcal{V}},{\mathcal{W}})$ for ${\mathbb{F}}_q^n$ and ${\mathcal{I}}$ 
as in Section~\ref{secfive}.
We define $m : {\mathbb{F}}_q^n \backslash \{ \vec{0}\} \rightarrow
{\mathcal{I}}$ by $m(\vec{c})=l$ if $l$ is the smallest number in
${\mathcal{I}}$ for which 
$\vec{c} \cdot \vec{w}_l \neq 0$. (Here, and in the following the
symbol $\cdot$ means the usual inner product). 
\end{definition}
\begin{definition}\label{definvolved}
Consider numbers $1 \leq l, l+1, \ldots , l+g \leq n$. 
A set 
${\mathcal{I}}^\prime\subseteq {\mathcal{I}}$ is said to have the $\mu$-property with
respect to $l$ with exception $\{l+1, \ldots , l+g\}$ if for all $i
\in {\mathcal{I}}^\prime$ a $j\in {\mathcal{I}}$ exists such that
\begin{itemize}
\item[(1a)] $\bar{\rho}_{\mathcal{W}}(\vec{u}_i \ast \vec{v}_j)=l$, and
\item[(1b)] for
  all $i^\prime \in {\mathcal{I}}^\prime$ with $i^\prime < i$ either $\bar{\rho}_{\mathcal{W}}(\vec{u}_{i^\prime}\ast
  \vec{v}_j)< l$ or $\bar{\rho}_{\mathcal{W}}(\vec{u}_{i^\prime}\ast
  \vec{v}_j) \in \{l+1, \ldots , l+g\}$ holds.
\end{itemize}
Assume next that $l+g+1 \leq n$. The set ${\mathcal{I}}^\prime$ is
said to have the relaxed $\mu$-property with respect to $(l,l+g+1)$
with exception $\{l+1, \ldots , l+g\}$ if for all $i \in
{\mathcal{I}}^\prime$ a $j \in {\mathcal{I}}$ exists such that either
conditions $(1a)$ and $(1b)$ above hold or 
\begin{itemize}
\item[(2a)] $\bar{\rho}_{\mathcal{W}}(\vec{u}_i \ast \vec{v}_j)=l+g+1$,
  and
\item[(2b)] $(i,j)$ is OWB with respect to ${\mathcal{I}}^\prime$, and
\item[(2c)] no $i^\prime \in {\mathcal{I}}^\prime$ with $i^\prime < i$
  satisfies $\bar{\rho}_{\mathcal{W}}(\vec{u}_{i^\prime} \ast \vec{v}_j)=l$.
\end{itemize}
\end{definition}
The new bound from~\cite[Th.\ 19]{geilmartin2013further} reads:
\begin{theorem}\label{thenew}\label{thestrongone}
Consider a non-zero codeword $\vec{c}$ and let $l=m(\vec{c})$. Choose
a non-negative integer $v$ such that $l+v\leq n$. Assume that for some
indexes $x \in \{l+1, \ldots , l+v\}$ we know {\textit{a priori}} that
$\vec{c} \cdot \vec{w}_x=0$. Let $l^\prime_1< \cdots < l^\prime_s$ be
the remaining indexes from $\{l+1, \ldots , l+v\}$. Consider the sets
${\mathcal{I}}^\prime_0, {\mathcal{I}}_1^\prime, \ldots ,
{\mathcal{I}}_s^\prime$ such that:
\begin{itemize}
\item ${\mathcal{I}}^\prime_0$ has the $\mu$-property with respect to
  $l$ with exception $\{ l+1, \ldots , l+v\}$.
\item For $i=1, \ldots , s$, ${\mathcal{I}}^\prime_i$ has the relaxed
  $\mu$-property with respect to $(l, l^\prime_i)$ with exception
  $\{l+1, \ldots , l^\prime_i -1\}$.
\end{itemize}
We have 
\begin{equation}
w_H(\vec{c}) \geq \min \{ \# {\mathcal{I}}_0^\prime, \#
{\mathcal{I}}_1^\prime, \ldots ,  \# {\mathcal{I}}_s^\prime\}. \label{eqcirkel} 
\end{equation}
To establish a lower bound on the minimum distance of a code $C$ we
repeat the above process for each $l\in m(C)$. For each such $l$ we 
choose a corresponding $v$, we determine sets ${\mathcal{I}}^{\prime}_i$ as
  above and we calculate the right side of~(\ref{eqcirkel}). The
  smallest value found when $l$ runs through $m(C)$ constitutes a lower bound on the minimum distance.
\end{theorem}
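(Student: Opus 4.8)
The plan is to carry out, on the dual side, the exact analogue of the footprint--remainder argument behind Theorem~\ref{thenewbound}, with ``leading monomial of a product'' replaced by ``first non-vanishing syndrome''. Fix a non-zero codeword $\vec{c}$, put $l=m(\vec{c})$, and attach to the triple $({\mathcal{U}},{\mathcal{V}},{\mathcal{W}})$ the syndrome array $S=(S_{i,j})_{i,j\in{\mathcal{I}}}$ with $S_{i,j}=\vec{c}\cdot(\vec{u}_i\ast\vec{v}_j)$. If $U$ and $V$ are the invertible matrices whose columns are the $\vec{u}_i$, resp.\ the $\vec{v}_j$, and $D$ is the diagonal matrix carrying the entries of $\vec{c}$, then $S=U^{T}DV$, hence $\mathrm{rank}(S)=\mathrm{rank}(D)=w_H(\vec{c})$; so it suffices to exhibit, in each of the cases below, a submatrix of $S$ of rank at least $\min\{\#{\mathcal{I}}'_0,\dots,\#{\mathcal{I}}'_s\}$. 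The basic dictionary is obtained by expanding $\vec{u}_i\ast\vec{v}_j=\sum_k\lambda_k\vec{w}_k$, so that $S_{i,j}=\sum_k\lambda_k(\vec{c}\cdot\vec{w}_k)$: since $\vec{c}\cdot\vec{w}_k=0$ for $k<l$ while $\vec{c}\cdot\vec{w}_l\neq0$, we get $S_{i,j}=0$ if $\bar{\rho}_{\mathcal{W}}(\vec{u}_i\ast\vec{v}_j)<l$, $S_{i,j}\neq0$ if $\bar{\rho}_{\mathcal{W}}(\vec{u}_i\ast\vec{v}_j)=l$, and $S_{i,j}\in\mathrm{Span}_{{\mathbb{F}}_q}\{\vec{c}\cdot\vec{w}_l,\dots,\vec{c}\cdot\vec{w}_k\}$ if $\bar{\rho}_{\mathcal{W}}(\vec{u}_i\ast\vec{v}_j)=k>l$.

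Next I would split according to which of $\vec{c}\cdot\vec{w}_{l'_1},\dots,\vec{c}\cdot\vec{w}_{l'_s}$ is the first non-zero one: let $l'_i$ be the least among $l'_1,\dots,l'_s$ with $\vec{c}\cdot\vec{w}_{l'_i}\neq0$ (``Case $i$''), or say we are in ``Case $0$'' if $\vec{c}\cdot\vec{w}_{l'_j}=0$ for every $j$. Combining the case assumption with the a priori vanishing, in Case $0$ we know $\vec{c}\cdot\vec{w}_k=0$ for all $k\in\{l+1,\dots,l+v\}$, while in Case $i$ we know $\vec{c}\cdot\vec{w}_k=0$ for all $k\in\{l+1,\dots,l'_i-1\}$ together with $\vec{c}\cdot\vec{w}_{l'_i}\neq0$. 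By the dictionary, in Case $0$ every entry $S_{i,j}$ arising from a product with $\bar{\rho}_{\mathcal{W}}(\vec{u}_i\ast\vec{v}_j)\le l+v$ equals the $\vec{w}_l$-coordinate of $\vec{u}_i\ast\vec{v}_j$ times $\vec{c}\cdot\vec{w}_l$; in Case $i$ the same holds for products with $\bar{\rho}_{\mathcal{W}}<l'_i$, while a product with $\bar{\rho}_{\mathcal{W}}=l'_i$ contributes an entry of $\mathrm{Span}_{{\mathbb{F}}_q}\{\vec{c}\cdot\vec{w}_l,\vec{c}\cdot\vec{w}_{l'_i}\}$ whose $\vec{c}\cdot\vec{w}_{l'_i}$-part is non-zero.

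For the core step I would fix ${\mathcal{I}}'_0$ in Case $0$ (resp.\ ${\mathcal{I}}'_i$ in Case $i$), list it as $\{z_1<\dots<z_\sigma\}$, and for each $z_a$ pick a witness $j_a$ furnished by the $\mu$-property (resp.\ the relaxed $\mu$-property, of type $(1)$ or $(2)$); then I would argue that the $\sigma\times\sigma$ submatrix $S'=(S_{z_a,j_b})$ of $S$ is non-singular. Condition $(1a)$ --- resp.\ $(1a)$ or $(2a)$ --- makes the diagonal entries non-zero, where in the type-$(2)$ case condition $(2c)$ is what prevents the $\vec{c}\cdot\vec{w}_l$-contribution from cancelling the non-zero $\vec{c}\cdot\vec{w}_{l'_i}$-contribution. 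For $a<b$, condition $(1b)$ --- resp.\ $(1b)$, or the conjunction of $(2b)$ and $(2c)$ --- forces $\bar{\rho}_{\mathcal{W}}(\vec{u}_{z_a}\ast\vec{v}_{j_b})$ to be either strictly below $l$ (so that $S_{z_a,j_b}=0$) or inside the exception range (so that $S_{z_a,j_b}$ is a multiple of $\vec{c}\cdot\vec{w}_l$); the latter entries are to be cleared by a preliminary reduction subtracting suitable multiples of the earlier witness columns $j_c$, $c<b$, whose diagonal entries are themselves non-zero multiples of $\vec{c}\cdot\vec{w}_l$. Granting that the reduced $S'$ is triangular with non-zero diagonal, we get $\mathrm{rank}(S)\ge\sigma$, i.e.\ $w_H(\vec{c})\ge\#{\mathcal{I}}'_0$ in Case $0$ and $w_H(\vec{c})\ge\#{\mathcal{I}}'_i$ in Case $i$; in either case this is at least $\min\{\#{\mathcal{I}}'_0,\dots,\#{\mathcal{I}}'_s\}$, which is the asserted inequality. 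The statement for $d(C)$ then follows by applying this to every non-zero codeword and recalling that the values $m(\vec{c})$ range over $m(C)$.

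I expect the genuinely delicate point to be exactly the reduction in the previous paragraph: one must verify that deleting the exceptional $\vec{c}\cdot\vec{w}_l$-proportional entries above the diagonal does not annihilate a diagonal entry, and --- in the relaxed case --- that the additional $\vec{c}\cdot\vec{w}_{l'_i}$-component carried by the type-$(2)$ witnesses survives this process. This is precisely where conditions $(2b)$ (one-way well-behaving) and $(2c)$ must be used, and it is the dual counterpart of the bookkeeping with strongly one-way well-behaving pairs in the proof of Theorem~\ref{thenewbound}; a cleaner realization is probably to choose the witnesses $j_a$ greedily so that the scaled witness columns already form an echelon system on the rows $z_1,\dots,z_\sigma$, and pinning down exactly which relaxations of the $\mu$-property still permit this is the crux.
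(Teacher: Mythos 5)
This theorem is not proved in the present paper at all: it is quoted from the companion paper \cite{geilmartin2013further} (``The new bound from [Th.\ 19] reads:''), so there is no in-paper proof to compare against. Your general framework is certainly the right one --- the syndrome matrix $S_{i,j}=\vec{c}\cdot(\vec{u}_i\ast\vec{v}_j)$ of rank $w_H(\vec{c})$, the dictionary between $\bar{\rho}_{\mathcal{W}}(\vec{u}_i\ast\vec{v}_j)$ and the syndromes $\vec{c}\cdot\vec{w}_k$, and the case split on the first non-vanishing syndrome among $\vec{c}\cdot\vec{w}_{l^\prime_1},\ldots,\vec{c}\cdot\vec{w}_{l^\prime_s}$. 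But the argument is not complete, and you say so yourself (``Granting that the reduced $S^\prime$ is triangular with non-zero diagonal\dots''). The step you defer is not a routine verification; as proposed, it fails.

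Concretely: in Case $0$ an entry $S_{z_a,j_b}$ with $a<b$ whose product has $\bar{\rho}_{\mathcal{W}}$ in the exception range equals $\mu_l\,(\vec{c}\cdot\vec{w}_l)$, where $\mu_l$ is the $\vec{w}_l$-coordinate of $\vec{u}_{z_a}\ast\vec{v}_{j_b}$; nothing in the $\mu$-property forces $\mu_l=0$, so these above-diagonal entries are genuinely non-zero in general. Your remedy is to subtract multiples of the earlier witness columns $j_c$, but those columns have completely uncontrolled entries in the rows $z_{c+1},\ldots,z_\sigma$ (condition (1b) only constrains $i^\prime<i$), so each such column operation perturbs the later diagonal entries arbitrarily and can annihilate them; already the $2\times 2$ configuration in which all four entries equal $\vec{c}\cdot\vec{w}_l$ is compatible with the $\mu$-property, is singular, and stays singular under your reduction. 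A second, independent problem: you claim that $(2c)$ keeps the type-$(2)$ diagonal entries non-zero, but $(2c)$ constrains the products $\vec{u}_{i^\prime}\ast\vec{v}_j$ for $i^\prime<i$, not the $\vec{w}_l$-coordinate of $\vec{u}_i\ast\vec{v}_j$ itself; that diagonal entry equals $\mu_l(\vec{c}\cdot\vec{w}_l)+\mu_{l^\prime_i}(\vec{c}\cdot\vec{w}_{l^\prime_i})$ and can vanish. None of this shows the theorem is false --- the rank of the full matrix $S$ can exceed that of your chosen submatrix --- but it does show that exhibiting a non-singular $\sigma\times\sigma$ submatrix on the rows ${\mathcal{I}}^\prime_i$ and the witness columns is not, by itself, a proof. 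Closing this is exactly the content of the argument in \cite{geilmartin2013further}, which you would need to reproduce or replace.
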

If we compare Theorem~\ref{thestrongone} with Theorem~\ref{the4} we see that
to some extend they have the same flavor. Besides that one deals with
dual codes and the other with primary codes another difference is that
we in Theorem~\ref{thestrongone} has the freedom to choose appropriate sets
${\mathcal{I}}^\prime_0, \ldots ,  {\mathcal{I}}^\prime_s$ whereas the sets
${\mathcal{L}}^\prime(1), \ldots , {\mathcal{L}}^\prime (s+1)$ in Theorem~\ref{the4} are
unique. In~\cite{geilmartin2013further} it was also shown how to lift
Theorem~\ref{thestrongone} to deal with generalised Hamming
weights. Similar remarks as above hold for the two bounds when applied
to such parameters.
\section{A comparison of the new bounds for primary and dual codes}\label{seccomp}
Recall that it was shown in~\cite{agismm} how the Feng-Rao
bound for primary codes and the Feng-Rao bound for dual codes can be
viewed as consequences of each other. This result holds when the bound
is equipped with one of the well-behaving properties WB or OWB. Regarding
the case where WWB is used a possible connection is unknown. In a
similar fashion as the proof in~\cite{agismm} breaks down if one uses
WWB it also breaks down when one tries to prove a correspondence
between Theorem~\ref{the4} and Theorem~\ref{thestrongone}. We
leave it as an open research problem to decide if a general connection exists or
not. \\
In Section~\ref{secfour} we analysed the performance of primary affine
variety codes coming from optimal generalised $C_{ab}$ polynomials. Using
the method from Section~\ref{secsix} one can make a similar analysis
for the corresponding dual codes producing similar code
parameters. As an alternative, below we explain how to derive this result directly from
what we have already shown regarding primary codes from optimal
generalised $C_{ab}$ polynomials.\\
Recall that for optimal generalised
$C_{ab}$ polynomials $\Delta_{\prec_w}(I_q)$ is a box:
$$\Delta_{\prec_w}(I_q)=\{M_1, \ldots ,
M_n\}=\{X^{\alpha_1}Y^{\alpha_2} \mid 0 \leq \alpha_1 < a, 0 \leq
\alpha_2 < q\}.$$
This fact gives us the following crucial implication (as usual we assume $M_1 \prec_w \cdots \prec_w M_n$):
\begin{equation}
M_i=X^{\alpha_1}Y^{\alpha_2} \Rightarrow
M_{n-i+1}=X^{a-1-\alpha_1}Y^{q-1-\alpha_2}, {\mbox{ for }} i=1, \ldots
, n.\label{eqimp}
\end{equation}
Consider codewords
$\vec{c}={\mbox{ev}}\big( \sum_{s=1}^i a_s M_s+I_q\big)$,
$a_s \in {\mathbb{F}}_q$, $a_i \neq 0$, and $\vec{c}^\prime
\in {\mathbb{F}}_q^n$ such that $m(\vec{c}^\prime)=n-i+1$. Let $v$ be
an integer, $0 \leq v < i$. Recall that in Section~\ref{secfour} we
determined ${\mathcal{L}}(u)$, $u=1, \ldots , v+1$. If we use
Theorem~\ref{thestrongone} with $\{l+1, \ldots ,
l+v\}=\{l_1^\prime,\ldots ,
l_s^\prime\}$ (no {\it{a priori}} knowledge) then we can choose 
$${\mathcal{I}}_0^\prime=\{n-l+1 \mid M_l \in {\mathcal{L}}(v+1)\}$$
and for $u=1, \ldots , v$
$${\mathcal{I}}_u^\prime= \{ n-l+1 \mid M_l \in {\mathcal{L}}(u) \}.$$
For $S \subseteq \{1,\ldots , n\}$ define $\bar{S}=\{1, \ldots , n\}
\backslash \{n-s+1 \mid s \in S\}$. Consider 
$$L={\mbox{Span}}_{\mathbb{F}_q}\{{\mbox{ev}}(M_s+I_q) \mid s \in
S\},$$
$$\bar{L}={\mbox{Span}}_{\mathbb{F}_q}\{{\mbox{ev}}(M_s+I_q) \mid s \in
\bar{S} \}.$$ 
As $\#{\mathcal{I}}_0^\prime=\# {\mathcal{L}}(v+1)$ and for $u=1,
\ldots , v$,  $\#{\mathcal{I}}_u^\prime = \#{\mathcal{L}}(u)$ we
conclude that 
Theorem~\ref{thestrongone} produces the same estimate for the minimum
distance of $C^\perp(I,\bar{L})$ as Theorem~\ref{thenewbound} produces for
the minimum distance of 
$C(I,L)$. However, we do not in general have
$C(I,L)=C^\perp(I,\bar{L})$ and therefore the above analysis does not
imply that Theorem~\ref{thenewbound} is a consequence of
Theorem~\ref{thestrongone} even in the case of optimal generalised
$C_{ab}$ polynomials.\\
The above correspondence regarding the minimum distance immediately
carries over to the generalised Hamming weights. In~\cite[Sec.\
4]{geilmartin2013further} we implemented the enhancement of
Theorem~\ref{thestrongone} to generalised Hamming weights for a couple
of concrete dual affine variety codes coming from optimal generalised
$C_{ab}$ polynomials. As a consequence of~(\ref{eqimp}) the estimates found
in~\cite[Sec.\ 4]{geilmartin2013further} for $C^\perp(I,\bar{L})$ also hold for
  $C(I,L)$. This demonstrates the usefulness of the method described in
  Section~\ref{secfourandahalf}.\\

We conclude the section by demonstrating that $d\big( C(I,L) \big) =d
\big( C^\perp (I, \bar{L})\big)$ does not hold for all generalised
$C_{ab}$ polynomials.
\begin{example}\label{exdualprimary}
In this example we consider the generalised $C_{ab}$ polynomial
$F(X,Y)=G(X)-H(Y)\in {\mathbb{F}}_{32}[X,Y]$ where $G(X)=X^{20}+X^{18}+X^{10}+X^9+X^5$ and
$H(Y)=Y^{26}+Y^{22}+Y^{21}+Y^{13}+Y^{11}$.  Observe that both $G$ and
$H$ are 
$({\mathbb{F}}_{32},{\mathbb{F}}_2)$-polynomials and that $G$ satisfies
the condition in Proposition~\ref{proptracelike} ensuring that for each $\eta \in {\mathbb{F}}_2$ there exists exactly $2^{4}=16$
$\gamma \in {\mathbb{F}}_{32}$ such that $G(\gamma)=\eta$. In
particular $F(X,Y)$ has exactly $512$ zeros in ${\mathbb{F}}_{32}$. 
As
$a=\deg G > 16$ $\{ F(X,Y),X^{32}-X,Y^{32}-Y\}$ cannot be a
Gr\"{o}bner basis with respect to $\prec_w$ (it would violate the
footprint bound, Corollary~\ref{thefoot}). Applying Buchberger's
algorithm we find a Gr\"{o}bner basis and from that the corresponding
footprint
\begin{eqnarray}
\Delta_{\prec_w}(I_{32})&=& \{X^{\alpha_1} X^{\alpha_2} \mid  0 \leq
                           \alpha_1 < 12, 0 \leq \alpha_2 < 32\} \nonumber \\
&&\cup {\mbox{ }} \{X^{\alpha_1} X^{\alpha_2} \mid  12 \leq \alpha_1 < 20, 0 \leq
\alpha_2 < 16\} .\nonumber
\end{eqnarray}
Recall the improved construction $\tilde{E}_{imp}(\delta)$ of primary
affine variety codes as introduced in
Definition~\ref{defimpcode}. In a similar way, as Theorem~\ref{thenewbound}
gives rise to the above Feng-Rao style improved primary codes,
Theorem~\ref{thestrongone} gives rise to improved dual codes. These
codes were named $\tilde{C}_{fim}(\delta)$  
in~\cite[Rem.\ 20]{geilmartin2013further}. In a computer experiment we
calculated the parameters of these codes. In Figure~\ref{figfaktisk}
we plot the derived relative parameters. As is seen for some
designed distances $\delta$, $\tilde{E}_{imp}(\delta)$ has the highest
dimension. For other designed distances $\delta$,
$\tilde{C}_{fim}(\delta)$ is of highest dimension.
\begin{figure}
\begin{center}
	\includegraphics[width=0.65\textwidth]{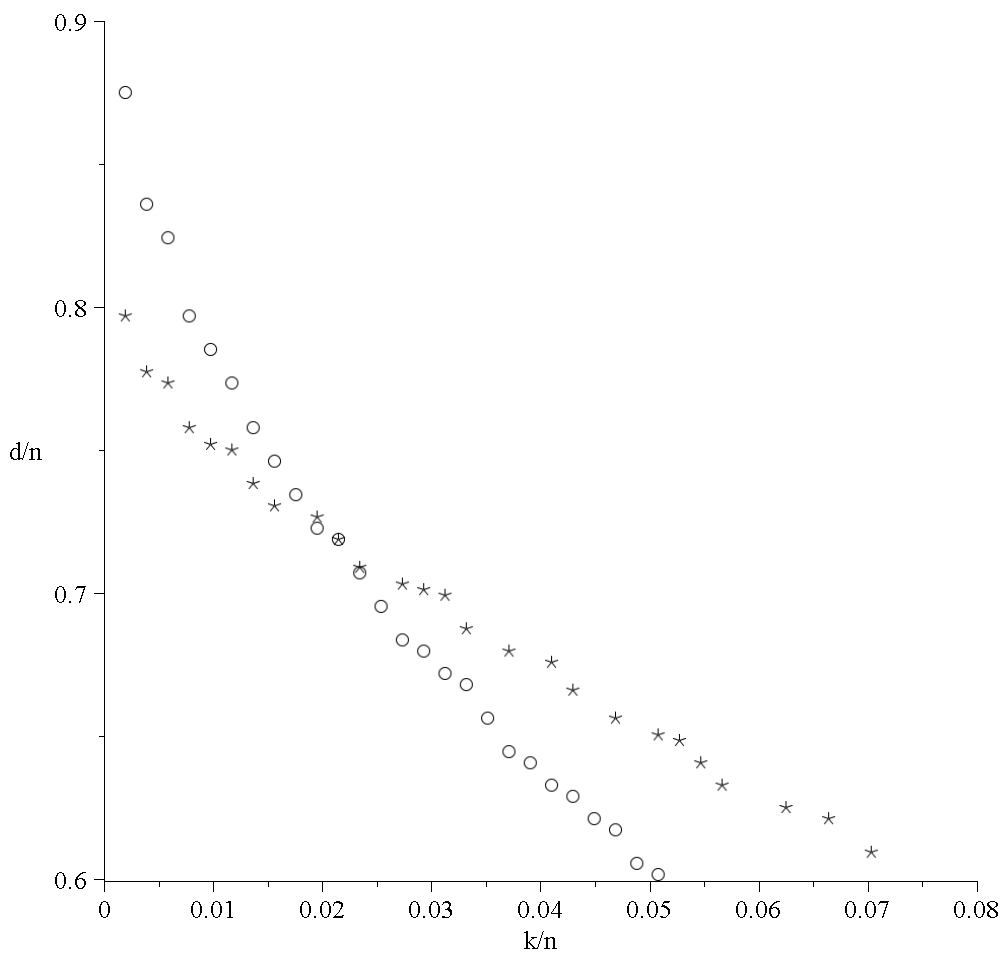}
\end{center}
\caption{Improved codes from Example~\ref{exdualprimary}. A $\circ$ corresponds
  to $\tilde{E}_{imp}(\delta)$, and an $\ast$ corresponds to $\tilde{C}_{fim}(\delta)$.}
\label{figfaktisk}
\end{figure}
\end{example}
\section{Conclusion}\label{seceight}
In this paper we proposed a new bound for the minimum distance and the
generalised Hamming weights of general linear code for which a
generator matrix is known. We demonstrated the usefulness of our bound
in the case of affine variety codes where only the first of the two
order domain conditions is satisfied. For this purpose we introduced
the concept of 
generalised $C_{ab}$ polynomials.
We touched upon the connection to
a bound for dual codes introduced in the recent
paper~\cite{geilmartin2013further}, but leave an investigation of a possible
general relation between the two bounds for further research. It is an
interesting question if there exists examples where our new method
improves on the Feng-Rao bound for one-point algebraic geometric
codes. This would require that we do not choose $v$ as in Remark~\ref{remtksh2} and that we make extensive use of the polynomials $X_i^q-X_i$. Also this question is left for further research. The usual
Feng-Rao bound for primary codes comes with a decoding algorithm that
corrects up to half the estimated minimum
distance~\cite{agismm}. This result holds when the bound is equipped
with the well-behaving property WB. For the case of WWB or OWB no
decoding algorithm is known. Finding a decoding algorithm that
corrects up to half the value guaranteed by Theorem~\ref{thenewbound}
would impose the missing decoding algorithms mentioned above.\\
Part of this research
was done while the second listed author was visiting East China
Normal University. We are grateful to Professor Hao Chen for his hospitality. The authors also gratefully acknowledge the support from
the Danish National Research Foundation and the National Science
Foundation of China (Grant No.\ 11061130539) for the Danish-Chinese
Center for Applications of Algebraic Geometry in Coding Theory and
Cryptography. The authors would like to thank Diego Ruano, Peter Beelen and Ryutaroh Matsumoto for pleasant discussions.

\bibliography{bibfile}
\bibliographystyle{plain}

\end{document}